\newcommand{\ED}{\mathsf{ED}}
\newcommand{\dd}{\mathinner{.\,.}}
\newcommand{\ceil}[1]{\lceil #1 \rceil}
\newcommand{\floor}[1]{\lfloor #1 \rfloor}
\newcommand{\sub}{\subseteq}
\newcommand{\sm}{\setminus}
\newcommand{\Oh}{\mathcal{O}}
\newcommand{\Ohtilde}{\tilde{\Oh}}
\newcommand{\Thetatilde}{\tilde{\Theta}}
\newcommand{\eps}{\varepsilon}
\renewcommand{\Pr}{\mathbb{P}}
\newcommand{\Exp}{\mathbb{E}}
\newtheorem{theorem}{Theorem}[section]
\newtheorem{corollary}[theorem]{Corollary}
\newtheorem{proposition}[theorem]{Proposition}
\newtheorem{lemma}[theorem]{Lemma}
\newtheorem{fact}[theorem]{Fact}
\newtheorem{claim}[theorem]{Claim}
\newtheorem{problem}[theorem]{Problem}
\theoremstyle{definition}
\newtheorem{definition}[theorem]{Definition}
\newcommand{\fr}[2]{[#1\dd#2)}
\newcommand{\R}{\mathbb{R}}
\newcommand{\Zz}{\mathbb{Z}_{\ge 0}}
\newcommand{\Zp}{\mathbb{Z}_{+}}
\newcommand{\A}{\mathcal{A}}
\newcommand{\fAN}{f_{\scriptscriptstyle\text{AN}}}
\newcommand{\gAN}{g_{\scriptscriptstyle\text{AN}}}
\newcommand{\GED}{\textsc{Gap Edit Distance}\xspace}
\NewDocumentCommand\GEDa{gg}{{\IfNoValueTF{#1}{$(\beta,\alpha)$}{$(#2,#1)$}}-\textsc{Gap Edit Distance}\xspace}
\newcommand{\SED}{\textsc{Shifted Gap Edit Distance}\xspace}
\NewDocumentCommand\SEDa{ggg}{{\IfNoValueTF{#1}{$\beta$}{$#2$}}-\textsc{Shifted} {\IfNoValueTF{#1}{$(\gamma,3\alpha)$}{$(#3,3#1)$}}-\textsc{Gap Edit Distance}\xspace}
\title{Gap Edit Distance via Non-Adaptive Queries: Simple and Optimal}
\author[1]{Elazar Goldenberg}
\author[2]{Tomasz Kociumaka\thanks{Partly supported by NSF CCF grants 1652303 and 1909046, and a HDR TRIPODS Phase II grant 2217058.}}
\author[3]{Robert Krauthgamer\thanks{Partly supported by ONR Award N00014-18-1-2364, the Israel Science Foundation grant \#1086/18, the Weizmann Data Science Research Center, and a Minerva Foundation grant.}}
\author[4]{Barna Saha\protect\footnotemark[1]}
\affil[1]{The Academic College of Tel Aviv-Yaffo, Israel}
\affil[ ]{\texttt{elazargo@mta.ac.il}}
\affil[2]{Max Planck Institute for Informatics, Germany}
\affil[ ]{\texttt{tomasz.kociumaka@mpi-inf.mpg.de}}
\affil[3]{Weizmann Institute of Science, Israel}
\affil[ ]{\texttt{robert.krauthgamer@weizmann.ac.il}}
\affil[4]{University of California, San Diego, United States}
\affil[ ]{\texttt{barnas@ucsd.edu}}
\date{}
\begin{document}
\maketitle

\begin{abstract}
  We study the problem of approximating edit distance in sublinear time.
  This is formalized as the \GEDa{k^c}{k} problem, 
  where the input is a pair of strings $X,Y$ and parameters $k,c>1$,
  and the goal is to return YES if $\ED(X,Y)\leq k$, NO if $\ED(X,Y)> k^c$,
  and an arbitrary answer when $k < \ED(X,Y) \le k^c$.
  Recent years have witnessed significant interest in designing sublinear-time algorithms for \GED.
  
  In this work, we resolve the non-adaptive query complexity of \GED for the entire range of parameters, 
  improving over a sequence of previous results. 
  Specifically, we design a non-adaptive algorithm with query complexity $\Ohtilde(n/k^{c-0.5})$,
  and we further prove that this bound is optimal up to polylogarithmic factors.
  
  Our algorithm also achieves optimal time complexity $\Ohtilde(n/k^{c-0.5})$ whenever $c\geq 1.5$. 
  For $1<c<1.5$, the running time of our algorithm is $\Ohtilde(n/k^{2c-2})$.
  In the restricted case of $k^c=\Omega(n)$, this matches a known result 
  [Batu, Ergün, Kilian, Magen, Raskhodnikova, Rubinfeld, and Sami; STOC 2003],
  and in all other (nontrivial) cases, our running time is strictly better 
  than all previous algorithms, including the adaptive ones.
  However, independent work of Bringmann, Cassis, Fischer, and Nakos [STOC 2022] provides an adaptive algorithm that bypasses the non-adaptive lower bound, but only for small enough~$k$ and $c$.
\end{abstract}

\section{Introduction}
The \textit{edit distance} is a ubiquitous distance measure on strings. It finds applications in various fields including computational biology, pattern recognition, text processing, information retrieval, and many more. 
The edit distance between strings $X$ and 
$Y$, denoted by $\ED(X,Y)$, is defined as the minimum number of character insertions, deletions, and substitutions needed to convert $X$ into~$Y$. A simple textbook dynamic programming computes edit distance in quadratic time. Moreover, under reasonable hardness assumptions, such as the Strong Exponential-Time Hypothesis, no truly subquadratic-time algorithm for this problem exists~\cite{ABW15,BK15,AHWW16,BI18}.

When dealing with enormous amounts of data, such as DNA strings, big data storage, etc., quadratic running time might be prohibitive,
leading a quest for faster algorithms that find an approximate solution.
A long line of research towards that goal~\cite{BEKMRRS03,BJKK04,BES06,AKO10,AO12,CDGKS18,BEGHS18} recently culminated with
an almost-linear-time approximation algorithm by Andoni and Nosatzki~\cite{AN20} that, for any desired $\eps>0$,
runs in $O(n^{1+\eps})$ time and achieves an approximation factor that depends only on $\eps$,
that is, a constant-factor approximation for any fixed $\eps > 0$.

The growing interest in modern computational paradigms,
like streaming and sketching (sublinear space), sampling and property testing (sublinear time), and massively parallel computation, 
sparked interest in \emph{sublinear-time} algorithms.
It started with a seminal work of Batu, Ergün, Kilian, Magen, Raskhodnikova, Rubinfeld, and Sami~\cite{BEKMRRS03}, 
and developed into an exciting sequence of results on approximating $\ED$ in sublinear time~\cite{AO12,GKS19,BCR20,KS20a,BCFN22}. 
(These are sometimes called estimation algorithms to emphasize that they 
approximate $\ED$ without necessarily constructing a witness alignment.) 

A sublinear-time algorithm for $\ED$ cannot be expected to attain constant-factor approximation, since even in the case where the edit distance is $O(1)$, a linear fraction of the input strings must be queried.
Hence, the aim here is to solve the promise problem \GEDa{k^c}{k},
which asks to return YES if $\ED(X,Y)\leq k$, NO if $\ED(X,Y)> k^c$, and an arbitrary answer otherwise. 
The accuracy of the aforementioned results depends on gap ``size'' $c$
and the gap ``location''~$k$;
their performance is measured in terms of their query and time complexity,
and also qualitatively whether they query the input strings adaptively 
(i.e., each query may depend on the results of earlier queries).

The first contribution~\cite{BEKMRRS03} addressed the case $k^c=\Omega(n)$. Under this restriction, they obtained a sublinear-time algorithm that runs in~$\Ohtilde(k^2/n+\sqrt{k})$ time.\footnote{The $\Ohtilde(\cdot)$ notation hides factors polylogarithmic in $n$.}
Moreover, they showed a query-complexity lower bound of $\Omega(\sqrt{k})$, 
rendering their result optimal for $c\ge 1.5$. 
Andoni and Onak~\cite{AO12} were the first to overcome the limitation that $k^c=\Omega(n)$.
However, their query complexity $\hat{\Oh}(n^{2}/k^{2c-1})$
reduces to $\hat{\Oh}(k)$ when $k^c=\Omega(n)$, far above that of~\cite{BEKMRRS03}.\footnote{The $\hat{\Oh}(\cdot)$ notation hides factors subpolynomial in $n$.}
Interestingly, both these algorithms are non-adaptive. 

In recent years, further progress has been achieved, 
mostly by exploiting adaptive queries,
particularly by Goldenberg, Krauthgamer, and Saha~\cite{GKS19}, 
and subsequently by Kociumaka and Saha~\cite{KS20a}, 
who improved over~\cite{AO12} when $k$ is small. 
The running time $\Ohtilde(n/k^{c-1}+k^3)$ of~\cite{GKS19} 
had an undesirable cubic dependency on $k$, 
which was improved to quadratic in~\cite{KS20a} at the price of an extra $\Ohtilde(k^{2.5-c}n^{0.5})$ term appearing for $c< 2$. 
Non-adaptive algorithms often tend to be simple, but they are generally less powerful. 
So far, the best results in the regime of small $k$ came through carefully using adaptive queries~\cite{GKS19,KS20a}. 
Hence, it seemed plausible that adaptivity would be crucial 
to improving beyond~\cite{AO12} for large $k$ as well.

\paragraph*{Technical Contributions}
In light of prior work, the following main questions remained open. 
\begin{itemize}
    \item Can we remove/reduce the polynomial dependency on $k$ from~\cite{GKS19, KS20a} without degrading the dependency on $n$? 
    \item Is adaptivity needed to achieve complexity $\Ohtilde(n/k^{c-1})$ for small $k$?
    \item Can we obtain tight query-complexity lower bounds?
\end{itemize}

\begin{table}[t]
	\centering
  \def\arraystretch{1.25}
	\begin{tabular}{|l|l|l|l|}
          \hline
          Time Complexity &  Non-Adaptive & Restrictions & Reference   \\
          \hline\hline
          $\Ohtilde\left(k^{0.5}\right)=\Ohtilde\left(n/k^{c-0.5}\right)$ & Yes    & $k^c=\Omega(n)$, $c\ge 1.5$  & \cite{BEKMRRS03}           \\
          $\Ohtilde\left(k^2/n\right)=\Ohtilde\left(n/k^{2c-2}\right)$ & Yes    & $k^c=\Omega(n)$, $c<1.5$  & \cite{BEKMRRS03}      \\
          \hline
          $\hat{\Oh}({n^{2}}/{k^{2c-1}})$ & Yes    & &\cite{AO12} \\
          \hline          
          $\Ohtilde(n/k^{c-1}+k^3)$ & No    & & \cite{GKS19} \\
          $\Ohtilde(n/k^{c-1.5}+k^{2.5-c})$ & Yes    & $c\ge 1.5$ & \cite{BCR20} \\
          $\Ohtilde(n/k^{c-1}+k^2+\sqrt{n}\cdot k^{2.5-c})$ & No    & & \cite{KS20a} \\
          \hline
          $\hat{\Oh}(n/{k^c}+n^{0.8}+k^4)$ & No   &  & \cite{BCFN22} \\
          \hline
          $\hat{\Oh}(n/k^{c-1})$ & Yes    &  & \cref{cor:simple} \\
          $\Ohtilde(n/k^{c-0.5})$ & Yes    & $c\ge  1.5$ & \cref{thm:main} \\
          $\Ohtilde(n/k^{2c-2})^{*}$ & Yes    & $c< 1.5$ & \cref{thm:main} \\
          \hline
	\end{tabular}
	
  $^{*}$Query complexity is $\Ohtilde(n/k^{c-0.5})$, lower than the time complexity.
  \caption{Sublinear-time algorithms for \GEDa{k^c}{k}.}%
	\label{TaxomonyTable}
\end{table}

In~\cref{sec:simpleAlg}, we present a simple \emph{non-adaptive} algorithm 
that removes the polynomial dependency on $k$ entirely, 
thus answering the first two questions. 
The algorithm solves the \GEDa{k^c}{k} problem with time complexity $\hat{\Oh}(n/k^{c-1})$, as follows.

\begin{theorem}[Simplified version of \cref{cor:simple}]\label{thm:1}
  For every constant $c>1$, there is a non-adaptive randomized algorithm
  that solves \GEDa{k^c}{k} in time $\hat{\Oh}(n/k^{c-1})$.
\end{theorem}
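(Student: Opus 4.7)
The plan is to prove Theorem~\ref{thm:1} by designing a simple non-adaptive block-based algorithm inspired by the classical approach of \cite{BEKMRRS03}, extended to the full parameter range. The algorithm partitions $X$ into $m = \Theta(n/k^{c-1})$ consecutive blocks of length $\ell = \Theta(k^{c-1})$, and for each block $B_i$ considers a ``home window'' $W_i = Y\fr{(i-1)\ell - k}{i\ell + k}$ of length $\Oh(\ell + k)$. For each $i$, a randomized test (using $\hat{\Oh}(1)$ sampled queries per block via Karp--Rabin fingerprinting over shared $Y$-sketches) decides whether $B_i$ occurs as a substring of $W_i$ at some shift $t_i \in [-k, k]$. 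The algorithm outputs YES iff at least $m - \Oh(k)$ blocks pass the test.

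Correctness in the YES case ($\ED(X, Y) \leq k$) follows from the observation that an optimal alignment has at most $k$ edits, each localized to at most one block of $X$. The remaining $\geq m - k$ blocks exactly match a shifted substring of $Y$ in their home window (at the alignment-induced shift) and so pass the equality test with high probability. For the NO case I would argue contrapositively: if the test accepts, the per-block exact matches are stitched into a global alignment of $X$ and $Y$ with cost $\Oh(k^c)$, contradicting $\ED(X, Y) > k^c$. The total stitching cost splits into $\Oh(k) \cdot \ell = \Oh(k^c)$ for unmatched blocks plus $\sum_i |t_{i+1} - t_i|$ for the shift transitions between consecutive matched blocks.

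The crux of the proof is bounding the shift-transition sum. A worst-case per-block change of $\Oh(k)$ would produce $\Oh(k m) = \Oh(n k^{2-c})$, swamping the $k^c$ target. My plan to handle this has two ingredients: (i)~randomize the offset of the block partition, so that matched blocks form long contiguous stretches with compatible shifts in expectation, and (ii)~exploit a Fine--Wilf-style periodicity argument on the overlap between adjacent home windows $W_i \cap W_{i+1}$, forcing $|t_{i+1} - t_i|$ to be small unless $Y$ has a short period in that region---and a periodic region can itself be detected directly by a refinement of the algorithm and ruled out using the NO-case assumption. On the efficiency side, the per-block test uses $\hat{\Oh}(1)$ queries by amortizing fingerprint computations over the shifts in $[-k, k]$ via pre-computed sketches of $Y$ that are shared between neighboring (overlapping) home windows; this yields total query and time complexity $\hat{\Oh}(n/k^{c-1})$ as claimed. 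The main obstacle, as indicated, is the stitching analysis in the NO case: the naive bound must be replaced by the combined periodicity-plus-averaging argument sketched above, and making this quantitatively tight is where I expect the bulk of the technical work to go.
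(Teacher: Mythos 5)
Your proposal takes a genuinely different route from the paper, and it contains two gaps, one of which you acknowledge and one which you do not.

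The paper's proof is not a block-matching argument at all. It partitions both strings into blocks of \emph{every} power-of-two length $2^p$, samples $\Oh(\rho m_p)$ block pairs at each level with a single rate $\rho = \Theta(\phi/\alpha)$, and on each sampled pair calls the Andoni--Nosatzki near-linear approximation algorithm as a black-box oracle for a small-gap subproblem \GEDa{\phi}{\beta}. The YES case is immediate from heredity of edit distance (\cref{fct:hered}); the NO case is a pure counting argument (\cref{lem:key}) showing that if $\ED(X,Y)>\tau$, then the number of block pairs with local edit distance $>\tau$, summed over all levels, exceeds $\frac{1}{2\tau}\ED(X,Y)$, so the sampler hits one with constant probability. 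There is no shift detection, no stitching, and no periodicity anywhere in the argument. The multi-level decomposition is precisely what lets the paper avoid the difficulties you run into.

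\textbf{Gap 1 (query complexity).} You test \emph{all} $m=\Theta(n/k^{c-1})$ blocks, and you claim each test costs $\hat{\Oh}(1)$ queries ``via Karp--Rabin fingerprinting over shared $Y$-sketches.'' This does not work: the block $B_i$ lives in $X$, and computing a Karp--Rabin fingerprint of $B_i$ requires reading all $\ell=\Theta(k^{c-1})$ of its characters. Pre-computed sketches of $Y$ reduce work on the $Y$ side but cannot reduce the number of $X$-queries. More fundamentally, no randomized procedure can test whether a length-$\ell$ string exactly equals a given target using $o(\ell)$ queries with bounded error, since a single adversarial mismatch is missed with probability $1-o(1)$. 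So your per-block cost is $\Theta(\ell)$, and testing all $m$ blocks costs $\Theta(n)$, not $\hat{\Oh}(n/k^{c-1})$. Any fix (e.g., subsampling blocks) changes the rest of the argument.

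\textbf{Gap 2 (stitching).} You identify the naive shift-transition bound $\sum_i |t_{i+1}-t_i| = \Oh(km) = \Oh(nk^{2-c})$ as insufficient, and you propose a random block offset plus a Fine--Wilf argument. Neither is worked out, and it is not clear either can succeed. In particular, if $B_i$ matches $Y$ at shift $t_i$ and $B_{i+1}$ at shift $t_{i+1}\neq t_i$, the overlap/gap between the two matched occurrences in $Y$ simply corresponds to $|t_{i+1}-t_i|$ insertions or deletions at the block boundary; it does \emph{not} force $Y$ to be periodic, because no block is required to match at two distinct shifts. An adversary can therefore construct a NO instance where each block matches at exactly one shift, adjacent shifts oscillate within $[-k,k]$, and no short period arises. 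Proposing to ``detect the periodic region directly and rule it out'' is a separate, unspecified sub-algorithm. As you say yourself, this is where the bulk of the work would go --- and it is unresolved.

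In short, your proposal is an extension of the single-block-length approach of~\cite{BEKMRRS03} to general $k$, but the paper deliberately departs from that line precisely because the stitching cost does not scale: its multi-level counting lemma replaces the stitching argument entirely, and using a near-linear approximation oracle on each sampled pair (rather than an exact equality test with few probes) is what makes the complexity accounting go through.
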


This result already improves upon all prior results~\cite{AO12,GKS19,KS20a,BCR20},
except for the earliest algorithm of~\cite{BEKMRRS03} that applies only for $k^c=\Omega(n)$ (see below for a comparison with independent work~\cite{BCFN22}).
The algorithm abandons the recent approach of~\cite{GKS19,KS20a} of scanning the two input strings and tracking their periodicity structure using adaptive queries. 
Instead, our baseline is Andoni and Onak's algorithm~\cite{AO12},
which samples a few blocks of predetermined length from each string, 
and computes only a ``local'' alignment
(between the $i$th block of $X$ and the $i$th block of $Y$).
In~\cite{AO12}, the block length is optimized according to the gap parameters $k$ and $c$.
Somewhat surprisingly, just by sampling blocks of different lengths and using all of them simultaneously (instead of choosing a single block length), 
we achieve a significantly better result.
The details, including a technical overview, appear in~\cref{sec:simpleAlg}.

Our main result still uses non-adaptive sampling
and achieves a significant improvement for the entire range of $c$,
and in particular generalizes or improves upon all previous bounds.
By building on the above simple algorithm, 
we first improve the query complexity (in \cref{sec:qc})
and then also the time complexity (in \cref{sec:faster}),
both in the polynomial dependency on $k$ and in the $n^{o(1)}$-factor. 

\begin{theorem}[Simplified version\protect\footnote{It suffices to use \cref{thm:main} with any constant $h > \frac{1}{1-c}$, and resort to an exact algorithm if $k = \Ohtilde(1)$. We remark that \cref{thm:main} additionally has some limited applicability to $c=1+o(1)$
and can solve \GEDa{k\cdot 2^{\Oh(\sqrt{\log n})}}{k} using $\hat{\Oh}(n/\sqrt{k})$ queries.} of \cref{thm:main}]\label{thm:simplified}
  For every constant $c>1$, there is a non-adaptive randomized algorithm
  that solves \GEDa{k^c}{k}
  using $\Ohtilde(n/k^{c-0.5})$ queries and $\Ohtilde(n/k^{\min(c-0.5,2c-2)})$ time.
\end{theorem}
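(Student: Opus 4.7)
The plan is to bootstrap the simple non-adaptive algorithm from~\cref{cor:simple} in two stages: first tightening its query complexity (\cref{sec:qc}) and then its running time (\cref{sec:faster}). Recall that the simple algorithm samples, at each length scale $\ell \in \{k, 2k, 4k, \ldots\}$, a small number of random blocks of length $\ell$ and brute-force verifies whether each sampled block has small local edit distance to the corresponding portion of $Y$ (allowing a bounded shift); once balanced across scales, the query cost is $\hat{\Oh}(n/k^{c-1})$. To save the further factor of $\sqrt{k}$ in queries, I would replace the exact per-block check at each scale with a~\cite{BEKMRRS03}-style sublinear tester that examines only $\Ohtilde(\ell/\sqrt{k})$ characters per sampled block; summed over all scales, this gives the target $\Ohtilde(n/k^{c-0.5})$ queries.

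The YES direction reduces to a union bound: if $\ED(X,Y) \le k$ then every sampled block has small local edit distance, and the subsampled tester accepts each sample with high probability. The main obstacle is the NO direction. Assuming $\ED(X,Y) > k^c$, one must argue that at some scale $\ell^*$ a constant fraction of length-$\ell^*$ blocks exhibit local edit distance $\Omega(\ell^* \cdot k^{-0.5})$, the threshold the sparse tester is sensitive to. I would establish this by a charging argument distributing the $> k^c$ excess edits across scales: at the ``responsible'' scale, the density of violating blocks is high enough that $\Ohtilde(n/(k^{c-0.5}\ell^*))$ sampled blocks hit a witness, and within that witness a $\sqrt{k}$-fraction subsample already reveals the violation with constant probability (amplified to high probability by repetition).

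For the running time, exact DP on each sampled length-$\ell$ block costs $\Oh(\ell^2)$. When $c \ge 1.5$, the relevant scales are small enough that the total DP work matches the query bound, so no further idea is needed. For $c < 1.5$ the blocks grow too long; there I would swap exact DP for a near-linear-time constant-factor approximation (e.g.,~\cite{AN20}) at each scale, so that the per-block cost drops to $\Ohtilde(\ell)$ and the total work becomes $\Ohtilde(n/k^{2c-2})$. The most delicate part will be propagating the multiplicative slack introduced by this approximation through the multi-scale analysis without collapsing the YES/NO gap, which in turn constrains the precise exponents in the final bound. Passing the simplified statement then only requires invoking the full \cref{thm:main} with a sufficiently large constant parameter and handling the trivial regime $k = \Ohtilde(1)$ by exact quadratic-time computation.
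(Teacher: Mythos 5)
Your plan captures the broad outline of the paper's strategy --- bootstrap the multi-scale sampling of \cref{cor:simple}, separate the shift from the local edits, and save a $\sqrt{k}$ factor via some form of subsampling --- but the two steps that carry the real technical weight are handled as black boxes, and in both cases the black box you invoke does not, as stated, exist.

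The central missing idea is the mechanism behind the $\sqrt{k}$ query savings. You propose to replace the per-block oracle call by ``a~\cite{BEKMRRS03}-style sublinear tester that examines only $\Ohtilde(\ell/\sqrt{k})$ characters.'' But~\cite{BEKMRRS03} only gives a $\Ohtilde(\sqrt{\beta})$-query tester when the NO threshold is $\Theta(n)$; there is no off-the-shelf tester with these savings for the general thresholds that arise inside a length-$\ell$ block (roughly, distinguishing shifted edit distance $\le \psi$ from edit distance $> 3\phi$ with $\phi, \psi \ll \ell$). Constructing exactly such a tester is the paper's contribution, and it is \emph{not} achieved by subsampling characters within each block. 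The actual mechanism is the $\GED \leftrightarrow \SED$ recursion of \cref{lem:gs,lem:sg}: the $\SED$-to-$\GED$ reduction in \cref{lem:sg} produces $\Oh\bigl(\tfrac{1+\beta}{1+\gamma}\bigr)$ oracle calls on a set of pairs that forms a Cartesian product $\mathcal{X}\times\mathcal{Y}$ with $|\mathcal{X}|+|\mathcal{Y}| = \Oh\bigl(\sqrt{\tfrac{1+\beta}{1+\gamma}}\bigr)$; because the sampling is non-adaptive, the same query positions can be reused across all pairs that share the same $X'$ (or $Y'$), and that combinatorial reuse --- not denser subsampling within a block --- is where the square root comes from. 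Your proposal never articulates this, and the recursion depth $h$ (and the parameter constraint $\beta < (336\ceil{\log n})^{-h/2}\alpha^{h/(h+1)}$ controlling when it terminates) is what makes the bound uniform in $c$; your sketch gives no control over how the ``multiplicative slack'' propagates and so cannot be made to close.

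The running-time claim is also off. You argue that for $c<1.5$ one should swap exact DP for~\cite{AN20} so each block costs $\Ohtilde(\ell)$ and the total becomes $\Ohtilde(n/k^{2c-2})$. But that is precisely what \cref{cor:simple} already does, and it yields $\hat{\Oh}(n/k^{c-1})$ --- strictly worse than $n/k^{2c-2}$ for every $c>1$. The speedup to $\Ohtilde(n/k^{2c-2})$ does not come from a near-linear per-block oracle; it comes from batching the many $\SEDa{\alpha}{\beta}{0}$ instances at the bottom of the recursion that share a common string, and answering them all with a single subsample stored in a ternary trie (\cref{lem:s1}), then carefully rebalancing the parameters in the three lowest recursion levels (\cref{lem:g2,lem:s2,lem:g3,lem:s3}). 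Finally, your NO-side ``charging to a single responsible scale'' is not quite how \cref{lem:key} works either --- the paper accumulates bad blocks across \emph{all} scales and applies a Chernoff bound to the sum $\sum_p |B_p|/m_p \cdot \ceil{\rho m_p}$, since no single scale need carry a constant fraction of the witnesses --- though this last point is more a matter of formulation than a fatal flaw.
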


Our final contribution is a new lower bound for non-adaptive algorithms
that applies for all values of $k$ and $c$ (see \cref{sec:lb}).
It extends a previous lower bound of~\cite{BEKMRRS03},
which handles only the very special case $k^c=\Omega(n)$.

\begin{theorem}[Simplified version of Theorem~\ref{thm:lb}]
For every constant $c >1$ and parameters $n\ge k\ge 1$,
    every non-adaptive algorithm solving 
    the \GEDa{k^c}{k} problem 
    has expected query complexity $\Omega(n/k^{c-0.5})$.
\end{theorem}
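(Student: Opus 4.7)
The plan is to reduce from the $\Omega(\sqrt{k})$ non-adaptive query-complexity lower bound of~\cite{BEKMRRS03} (which covers the regime $k^c=\Theta(n)$) by a random-shift padding argument combined with Yao's minimax principle. In the regime $k^c=\Omega(n)$ no padding is needed and the claim $\Omega(n/k^{c-0.5})=\Omega(\sqrt{k})$ follows directly, so I focus on $n':=2k^c\le n/2$. The starting point is the base fact: there exist distributions $\mathcal{D}^{\mathrm{Y}}_0,\mathcal{D}^{\mathrm{N}}_0$ on pairs of length-$n'$ strings over some alphabet $\Sigma$ whose supports lie in $\{(X',Y'):\ED(X',Y')\le k\}$ and $\{(X',Y'):\ED(X',Y')>k^c\}$ respectively, and which no non-adaptive algorithm with fewer than $c_0\sqrt{k}$ queries can distinguish with advantage exceeding some constant $\eta<\tfrac13$.

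I would then amplify to length $n$ by padding with a fresh symbol $\$\notin\Sigma$. Sample $i$ uniformly at random from $\{1,\ldots,n-n'+1\}$ and, given a base pair $(X',Y')$, define
\[
X \;=\; \$^{\,i-1}\,X'\,\$^{\,n-n'-i+1}, \qquad Y \;=\; \$^{\,i-1}\,Y'\,\$^{\,n-n'-i+1}.
\]
Because the padding is identical in $X$ and $Y$, we have $\ED(X,Y)=\ED(X',Y')$, so the $(k^c,k)$-gap promise is preserved; call the resulting padded distributions $\mathcal{D}^{\mathrm{Y}}$ and $\mathcal{D}^{\mathrm{N}}$.

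Now consider any non-adaptive algorithm $A$ with fixed query set $Q$ of size $q$. Queries in $Q\setminus[i,i+n'-1]$ return $\$$ on both strings and are therefore uninformative, so $A$'s output is determined only by the ``useful'' queries $Q\cap[i,i+n'-1]$, which, translated by $i-1$, form a non-adaptive query set of size $U_i:=|Q\cap[i,i+n'-1]|$ into the base instance. Applying the base lower bound pointwise: whenever $U_i<c_0\sqrt{k}$ the conditional distinguishing advantage of $A$ is at most $\eta$. Since $\Exp_i[U_i]=qn'/n$, Markov's inequality gives $\Pr_i[U_i\ge c_0\sqrt{k}]\le qn'/(nc_0\sqrt{k})$, so the total advantage of $A$ is bounded by $\eta+qn'/(nc_0\sqrt{k})$. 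For this to exceed $\tfrac13$ one needs $q=\Omega(n\sqrt{k}/n')=\Omega(n/k^{c-0.5})$, which is the desired bound.

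The main obstacle I anticipate is the first step: obtaining the base $\Omega(\sqrt{k})$ bound for precisely the $(k^c,k)$-gap promise on length-$\Theta(k^c)$ strings, with constants strong enough that the Markov step goes through. The construction of~\cite{BEKMRRS03} is tailored to a slightly different promise, and some care may be needed to adapt it so that the YES side has $\ED\le k$ rather than $\ED=0$ and the NO side strictly exceeds $k^c$. If the result cannot be imported as a black box, a self-contained construction should work: take the YES distribution to place $k$ i.i.d.\ random substitutions in a common background and the NO distribution to additionally cyclically shift a random length-$\Theta(k^c)$ block, then argue by a birthday-paradox-style calculation that $\Omega(\sqrt{k})$ non-adaptive queries are required to detect evidence of the shift. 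Once the base bound is in hand, the padding and Markov steps are routine.
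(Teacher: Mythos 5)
Your plan is correct in spirit and the two core ingredients match the paper's proof exactly: pad the hard instance with identical characters so that the gap is preserved, and use Markov's inequality to exploit the fact that a non-adaptive budget of $q$ queries spreads only $\Oh(qn'/n)$ of them over the length-$n'$ window. The difference lies in how you invoke~\cite{BEKMRRS03}. You take the Yao/distributional route: you posit hard YES/NO distributions on length-$n'$ strings that no small non-adaptive query set can distinguish beyond advantage $\eta<\tfrac13$, shift them by a uniformly random offset, and bound the padded algorithm's advantage pointwise in the shift. The paper instead goes the ``build a smaller algorithm'' route: given a non-adaptive algorithm $A$ with \emph{expected} query complexity $q$ for the full instance, it amplifies to $A^3$ (majority of three runs, error $\le \tfrac{7}{27}$), averages over disjoint length-$6\alpha$ blocks to find one block $i$ receiving at most $\frac{36q\alpha}{n}$ expected queries, embeds $(X',Y')$ into that block padded by a constant character, and caps the run at $\frac{27}{2}q_i$ queries to that block (terminating with an arbitrary answer beyond the cap). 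Markov bounds the termination probability by $\tfrac{2}{27}$, so the capped $A'$ has error $\le\tfrac13$ and \emph{worst-case} query complexity $\Oh(q\alpha/n)$, directly contradicting Proposition~\ref{prp:lb} (which is stated against worst-case queries) if $q=o(n\sqrt{\beta}/\alpha)$. The advantage of the paper's route is precisely what you flagged as your ``main obstacle'': it uses Proposition~\ref{prp:lb} as a black box and never needs to extract or reconstruct the hard distributions from~\cite{BEKMRRS03}, nor to verify that their promise matches the $(k^c,k)$ gap; the conversion from expected to worst-case queries (your other loose end --- you work with a fixed $Q$ of size $q$, but the theorem is about expected queries over a possibly randomized $Q$) is handled cleanly by the termination cap. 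Your approach would also work once the base distributions are pinned down, but as written it leaves that step, and the expected-vs-worst-case issue, to be filled in.
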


\begin{figure}[t]
  \begin{center}
\hypersetup{hidelinks}{
\begin{tikzpicture}[x=15cm,y=9.25cm,every node/.style={font=\small}]
    \draw[-latex] (0,0.2) -- (0.57, 0.2) node[below]{$\log_n k$} -- (0.6, 0.2);
    \draw[-latex] (0,0.2) -- (0, 1.03) node[right]{$\log_n \text{Time}$} -- (0,1.05);

    \draw(0,1) node[left=-2]{$1$};
    \draw(0,1/4) node[left=-2]{$\frac14$};
    \draw(0,1/2) node[left=-2]{$\frac12$};

    \draw(0,3/4) node[left=-2]{$\frac34$};
    \draw(0,2/3) node[left=-2]{$\frac23$};
    \draw(0,4/5) node[left=-2]{$\frac45$};

    \draw[thin, densely dotted] (0,1) -- (1/2,1) -- (1/2,0.2);

    \draw[very thick,dash pattern=on 3pt off 2pt on \the\pgflinewidth off 2pt,green!50!black](0,1) --node[pos=0.75,sloped,below]{\cref{cor:simple}} (1/2,1/2);

    \draw[very thick,green,green](0,1) --node[pos=0.75,sloped,below]{\cref{thm:main}} (1/2,1/4);

    \draw(1/2, 0.2) node[below=-2]{$\frac12$};
    \draw(0, 0.2) node[below]{$0$};
    \draw(1/4, 0.2) node[below=-2]{$\frac14$};
    \draw(1/3, 0.2) node[below=-2]{$\frac13$};
    \draw(2/5, 0.2) node[below=-2]{$\frac25$};
    \draw(1/10, 0.2) node[below=-2]{$\frac1{10}$};
    \draw(2/15, 0.2) node[below=-2]{$\frac2{15}$};
    \draw(1/5, 0.2) node[below=-2]{$\frac15$};

    \draw[violet](1/2,1/4) node[circle,fill,inner sep=1pt]{} node[right]{\cite{BEKMRRS03}};

    \draw[thin, densely dotted] (1/2,1/4) -- (0,1/4);

    \draw[brown!50!black,thick,dash pattern=on 2pt off 4pt](1/3,1) --node[pos=0.7,sloped,above]{\cite{AO12}} (1/2,1/2);
    \draw[thin, densely dotted] (1/2,0.5) -- (0,0.5);
    \draw[thin, densely dotted] (1/3,1) -- (1/3,0.2);

    \draw[thick,orange,dash pattern=on 5pt off 5pt](0,1) -- node[pos=0.35,sloped,above]{\cite{BCR20}} (1/2,3/4);

    \draw[thick,blue,dash pattern=on 3pt off 1pt](0,1) -- (1/4, 3/4) -- node[pos=0.66,sloped,above]{\cite{GKS19}} (1/3,1);
    \draw[thin, densely dotted] (1/2,3/4) -- (0,3/4);
    \draw[thin, densely dotted] (1/4,1) -- (1/4,0.2);

    \draw[thick, red,dash pattern=on \the\pgflinewidth off 1pt](0,1) -- (1/3, 2/3) --node[pos=0.2,sloped,above]{\cite{KS20a}} (1/2,1);
    \draw[thin, densely dotted] (1/2,2/3) -- (0,2/3);
    \draw[thin, densely dotted] (1/2,4/5) -- (0,4/5);
    \draw[thin, densely dotted] (2/5,1) -- (2/5,0.2);

    \draw[thick,purple,dash pattern=on 2pt off 3pt on 2pt off 1pt](0,1) -- node[pos=0.65,sloped,below]{\cite{BCFN22}} (1/10,4/5)
    -- (1/5,4/5) -- (1/4,1);
    \draw[thin, densely dotted] (1/10,1) -- (1/10,0.2);
    \draw[thin, densely dotted] (2/15,1) -- (2/15,0.2);
    \draw[thin, densely dotted] (1/5,1) -- (1/5,0.2);

    \begin{scope}[xshift=.85cm,yshift=-8.3cm,yscale=2]
      \draw[very thick,dash pattern=on 3pt off 2pt on \the\pgflinewidth off 2pt,green!50!black] (0.55,0.65) -- (0.65,0.65) node[right]{\cref{cor:simple}: $\hat{\Oh}(n/k)$};
      \draw[very thick,green,green] (0.55,0.6) -- (0.65,0.6) node[right]{\cref{thm:main}: $\Ohtilde(n/k^{1.5})$};

        \draw[thick,purple,dash pattern=on 2pt off 3pt on 2pt off 1pt] (0.55,0.7) -- (0.65,0.7) node[right]{\cite{BCFN22}: $\hat{\Oh}(n/k^2+n^{0.8}+k^{4})$};

        \draw[violet](0.65,0.95) node[circle,fill,inner sep=1pt]{} node[right]{\cite{BEKMRRS03}: $\Ohtilde(\sqrt{k})$ for $k=\Theta(\sqrt{n})$};
        \draw[brown!50!black,thick,dash pattern=on 2pt off 4pt] (0.55,0.9) -- (0.65,0.9) node[right]{\cite{AO12}: $\hat{\Oh}(n^{2}/k^3)$};
        \draw[thick,orange,dash pattern=on 5pt off 5pt] (0.55,0.8) -- (0.65,0.8) node[right]{\cite{BCR20}: $\Oh(n/\sqrt{k})$};
        \draw[thick,blue,dash pattern=on 3pt off 1pt] (0.55,0.85) -- (0.65,0.85) node[right]{\cite{GKS19}: $\Ohtilde(n/k+k^3)$};
        \draw[thick, red,dash pattern=on \the\pgflinewidth off 1pt] (0.55,0.75) -- (0.65,0.75) node[right]{\cite{KS20a}: $\Ohtilde(n/k+k^2)$};
    \end{scope}

  \end{tikzpicture}
}
\end{center}
 \caption{The running times of algorithms for \GEDa{k^2}{k}.}%
 \label{fig:1}
\end{figure}

Altogether, we obtain optimal non-adaptive query complexity for all $c>1$, 
and furthermore optimal time complexity for a large regime (all $c \geq 1.5$). 
In particular, we achieve optimal query and time complexity 
for the quadratic gap edit distance problem ($c=2$),
which was the focus of all recent work~\cite{GKS19,KS20a,BCR20}. 
When $c< 1.5$, we match the time bound of~\cite{BEKMRRS03} 
and further remove their restriction that $k^c=\Omega(n)$.

Table~\ref{TaxomonyTable} lists all the known algorithmic bounds, 
including our, previous, and independent results.
It is instructive to compare them against $\Omega(n/k^{c-0.5})$,
our (tight) lower bound for non-adaptive algorithms (\cref{thm:lb}). 
Figure~\ref{fig:1} plots these bounds for a quadratic gap ($c=2$).
One can see that our main result, \cref{thm:main}, 
improves over all previous results for the entire range of $k$,
although independent work~\cite{BCFN22} provides a further improvement for small $k$ 
by using adaptive sampling and thus bypassing our lower bound.

\paragraph*{Open Questions}
Our results completely resolve the non-adaptive query complexity of \GED. 
The lower bound of~\cite{BEKMRRS03} applies to adaptive queries as well,
and we match this lower bound using a non-adaptive algorithm. 
Hence, adaptivity cannot help at the extreme regime of $k^c=\Omega(n)$. 
The question remains though whether adaptivity is useful to improve the complexity further in the intermediate regime,
where the $\hat{\Oh}(k^{4})$ term in the running time of~\cite{BCFN22} makes that solution slower than ours.

Another open problem is to improve the time complexity (ideally to match the query-complexity lower bound) for $1<c<1.5$ or to strengthen the lower bound for time complexity, showing a separation between time and query complexity (as in~\cite{DBLP:journals/jcss/AlonVKK03} for the max-cut problem, for example).

\paragraph*{Related Work} 

Sublinear-time algorithms were studied for several related string problems, 
including the Ulam metric~\cite{AN10,NSS17},
longest increasing subsequence (LIS)~\cite{SS17,MS21},
and shift finding~\cite{AHIK13}. 
There are also sublinear-space streaming algorithms for edit distance~\cite{GJKK07,GG10,SS13,EJ08,CGK16,BZ16,CFHJ0RSZ21}. 
Several algorithms for edit distance leverage preprocessing (of one or both strings independently) to perform further computations in sublinear-time~\cite{AKO10,GRS20,BCR20,BCFN22b}.

\paragraph*{Non-Adaptive Sampling}
Our algorithms only require a non-adaptive sampling. 
While these might bring inferior performance (running time or query complexity)
compared to algorithms using adaptive sampling,
as indeed obtained independently of our work in~\cite{BCFN22} for small values of~$k$, 
numerous applications can gain from --- or even require --- non-adaptive sampling.
Consider for example a distributed setting where the input $XY$ is partitioned
into $p\ge 2$ substrings, held by distinct players 
that communicate in the blackboard model (equivalent to a broadcast channel).
One particular case of interest is two players, one holding $X$ and the other holding~$Y$.
Every sampling algorithm $\A$ has an obvious distributed implementation 
whose communication complexity is precisely the query complexity of $\A$,
but clearly a non-adaptive $\A$ requires only one round of communication
(assuming shared randomness).
For another example, consider $t\ge 3$ input strings $X^{(1)},\ldots,X^{(t)}$
and a goal of estimating the edit distance between every pair of strings.
When implementing a non-adaptive sampling algorithm $\A$,
it suffices to sample each string $X^{(i)}$ only once, 
and use the sample across all the $t-1$ executions involving the string $X^{(i)}$,
thereby running $O(t^2)$ executions of $\A$ using only $O(t)$ sets of samples,
reducing communication by factor $t$ compared to using adaptive sampling.

\section{Preliminaries}%
\label{prelims}

\begin{fact}\label{fct:hered}
    Let $X,Y\in \Sigma^n$. For every $i,j\in [0\dd n]$ with $i\le j$, we have $\ED(X\fr{i}{j},Y\fr{i}{j}) \le \ED(X,Y)$.
    \end{fact}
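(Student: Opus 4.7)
Take an optimal alignment $\mathcal{A}$ of $X$ and $Y$ of cost $d:=\ED(X,Y)$, viewed as a monotone lattice path from $(0,0)$ to $(n,n)$ in which horizontal/vertical unit steps correspond to deletions/insertions (cost $1$) and diagonal steps to (mis)matches (cost $0$ or $1$). Since $i\le j$ and the path is monotone, there exist $y_i,y_j\in[0\dd n]$ with $y_i\le y_j$ such that $(i,y_i)$ and $(j,y_j)$ lie on the path. Cutting the path at these two vertices partitions $\mathcal{A}$ into three sub-alignments witnessing $\ED(X\fr{0}{i},Y\fr{0}{y_i})\le d_p$, $\ED(X\fr{i}{j},Y\fr{y_i}{y_j})\le d_m$, and $\ED(X\fr{j}{n},Y\fr{y_j}{n})\le d_s$, with $d_p+d_m+d_s=d$.

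By the triangle inequality for $\ED$,
\[
\ED(X\fr{i}{j},Y\fr{i}{j}) \;\le\; \ED(X\fr{i}{j},Y\fr{y_i}{y_j}) + \ED(Y\fr{y_i}{y_j},Y\fr{i}{j}) \;\le\; d_m + \ED(Y\fr{y_i}{y_j},Y\fr{i}{j}),
\]
so it suffices to show $\ED(Y\fr{y_i}{y_j},Y\fr{i}{j})\le d_p+d_s$. For this, I would first prove an auxiliary substring-shift bound: for any indices $a\le b$ and $a'\le b'$ in $[0\dd n]$,
$\ED(Y\fr{a}{b},Y\fr{a'}{b'})\le |a-a'|+|b-b'|$. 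When the two intervals overlap, the common part matches for free while the non-overlapping endpoints contribute $|a-a'|+|b-b'|$ indels; when the intervals are disjoint, even the trivial bound $(b-a)+(b'-a')$ does not exceed $|a-a'|+|b-b'|$ (a one-line inequality using $b\le a'$ or $b'\le a$). Applied with $(a,b,a',b')=(y_i,y_j,i,j)$, this gives $\ED(Y\fr{y_i}{y_j},Y\fr{i}{j})\le|y_i-i|+|y_j-j|$.

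Finally, the prefix and suffix sub-alignments have source/target length differences $|y_i-i|$ and $|y_j-j|$, respectively, so the standard inequality $\bigl||A|-|B|\bigr|\le \ED(A,B)$ gives $|y_i-i|\le d_p$ and $|y_j-j|\le d_s$. Chaining everything, $\ED(X\fr{i}{j},Y\fr{i}{j})\le d_m+d_p+d_s=d$, as required. The only nontrivial step is the auxiliary shift bound; the disjoint-intervals case is easy to overlook but is handled by the one-line observation above.
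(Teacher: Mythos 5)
The paper states this as a preliminary \textbf{Fact} without proof, treating it as a standard heredity property of edit distance, so there is no in-paper proof to compare against. Your argument is correct and complete: every $x$-coordinate in $[0\dd n]$ is attained at some lattice vertex of the monotone alignment path, so the cut points $(i,y_i)$ and $(j,y_j)$ with $y_i\le y_j$ exist; the three sub-alignments have costs summing to $d$; the shift bound $\ED(Y\fr{a}{b},Y\fr{a'}{b'})\le|a-a'|+|b-b'|$ holds in both the overlapping and disjoint cases as you verify; and the length-difference inequality $\bigl||A|-|B|\bigr|\le\ED(A,B)$ gives $|y_i-i|\le d_p$ and $|y_j-j|\le d_s$. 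Chaining via the triangle inequality then yields $\ED(X\fr{i}{j},Y\fr{i}{j})\le d_m+d_p+d_s=d$. This is the standard textbook route for proving heredity of edit distance under taking a common window, and your write-up correctly flags the one step (the disjoint-interval case of the shift bound) that is easy to overlook.
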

    
    \begin{fact}\label{fct:subadd}
    For all strings $X_1,X_2,Y_1,Y_2\in \Sigma^*$, we have $\ED(X_1X_2,Y_1Y_2)\le \ED(X_1,Y_1)+\ED(X_2,Y_2)$.
    \end{fact}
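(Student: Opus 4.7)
The plan is to prove the inequality by exhibiting an explicit edit sequence from $X_1 X_2$ to $Y_1 Y_2$ whose length equals $\ED(X_1, Y_1) + \ED(X_2, Y_2)$; since $\ED$ is by definition the minimum length of any such sequence, the bound will follow immediately. Let $a = \ED(X_1, Y_1)$ and $b = \ED(X_2, Y_2)$, and fix optimal edit sequences $\sigma_1$ of length $a$ that transforms $X_1$ into $Y_1$, and $\sigma_2$ of length $b$ that transforms $X_2$ into $Y_2$.

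First I would apply the operations of $\sigma_1$ to $X_1 X_2$, letting each insertion/deletion/substitution act at exactly the same position it acts in the stand-alone application to $X_1$. Because every such operation touches a position lying within the evolving image of the $X_1$-prefix, the $X_2$-suffix is never read or modified; after all $a$ operations the intermediate string is exactly $Y_1 X_2$. Next I would apply the operations of $\sigma_2$, each shifted by the current length of the prefix $|Y_1|$ so that it acts on the corresponding position of the $X_2$-suffix; after all $b$ operations the suffix becomes $Y_2$, giving the final string $Y_1 Y_2$. The total cost is $a + b$, proving the claim.

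There is essentially no obstacle here; the only minor bookkeeping is the observation that edit operations acting on disjoint regions of a string commute and that a uniform index shift by $|Y_1|$ correctly relocates the second sequence to the suffix. An equivalent and perhaps cleaner way to present the same argument is in the language of alignments: an alignment between two strings is a monotone correspondence whose cost counts unmatched and substituted positions, and the concatenation of an optimal alignment of $(X_1,Y_1)$ with an optimal alignment of $(X_2,Y_2)$ is a valid alignment of $(X_1 X_2, Y_1 Y_2)$ of cost $a+b$. Either formulation yields a proof of one or two sentences, so I would favor whichever framing is consistent with the notation used elsewhere in the paper.
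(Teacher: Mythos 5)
Your proposal is correct. The paper states this as a \emph{Fact} and omits the proof entirely, treating it as standard; both of your formulations (concatenating optimal edit sequences with the index shift by $|Y_1|$, or equivalently concatenating optimal alignments) are the usual arguments and are sound. One small note: it is slightly cleaner to state the edit-sequence version by first applying $\sigma_2$ to the suffix and then $\sigma_1$ to the prefix, which avoids even the minor bookkeeping of shifting indices by $|Y_1|$ (since the prefix length is untouched while you edit the suffix); but the shift-by-$|Y_1|$ version you give is also correct.
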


\begin{problem}[\GEDa]
      Given strings $X,Y\in \Sigma^n$ and integers $\alpha \ge \beta \ge 0$,
      return YES if $\ED(X,Y)\le \beta$, NO if $\ED(X,Y)>\alpha$, and an arbitrary answer otherwise.
\end{problem}

\begin{theorem}[Landau and Vishkin~\cite{LV88}]\label{thm:lv}
There exists a deterministic algorithm that solves any instance of the \GEDa problem
(with arbitrary $\alpha\ge \beta \ge 0$) in $\Oh(n+\beta^2)$ time.
\end{theorem}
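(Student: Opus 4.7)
The plan is to implement the classical Landau--Vishkin diagonal dynamic program, which reduces deciding whether $\ED(X,Y)\le \beta$ to $\Oh(\beta^2)$ arithmetic steps on top of an $\Oh(n)$-time preprocessing that answers longest-common-extension queries in constant time.

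First, I would preprocess $(X,Y)$ so that for any $i,j\in[0\dd n]$ the quantity $\LCE(i,j):=\max\{\ell\ge 0 : X\fr{i}{i+\ell}=Y\fr{j}{j+\ell}\}$ can be returned in $\Oh(1)$ time. The standard recipe is to build the generalized suffix tree of $X$ and $Y$ (with distinct sentinels appended) in $\Oh(n)$ time and augment it with a constant-time lowest-common-ancestor oracle, so that $\LCE(i,j)$ becomes the string depth of the LCA of two designated leaves.

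Next, for $e\in[0\dd\beta]$ and $d\in[-e\dd e]$, I define $L_{e,d}\in[0\dd n]$ as the largest $i$ for which $X\fr{0}{i}$ can be converted into $Y\fr{0}{i+d}$ using at most $e$ edit operations (considering only pairs with $i+d\in[0\dd n]$). These values satisfy the recurrence
\[
L_{e,d}=\max\bigl\{L_{e-1,d-1},\,L_{e-1,d}+1,\,L_{e-1,d+1}+1\bigr\}
\]
followed by the diagonal extension $L_{e,d}\gets L_{e,d}+\LCE(L_{e,d},L_{e,d}+d)$, with the natural cap so that neither coordinate exceeds $n$. Each entry is thus computed in $\Oh(1)$, for a total of $\Oh(\beta^2)$ across all $(e,d)$. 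Correctness is a straightforward induction on $e$: the three arguments of the $\max$ correspond to a deletion, substitution, and insertion relative to $L_{e-1,\cdot}$, while the LCE call slides along diagonal $d$ as far as matching characters permit.

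The algorithm returns YES as soon as some $L_{e,0}\ge n$ is encountered with $e\le\beta$, and NO if no such $e$ exists. This is correct because $L_{e,0}\ge n$ certifies $\ED(X,Y)\le e\le\beta$, while failure for every $e\le\beta$ forces $\ED(X,Y)>\beta$, at which point NO is a permissible output under the \GEDa promise. The total running time is $\Oh(n+\beta^2)$, matching the claimed bound. The only delicate ingredient is the $\Oh(n)$-time construction of a suffix tree together with a constant-time LCA oracle; once that black box is in place, the rest is a textbook induction on the number of edits.
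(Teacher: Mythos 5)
The paper offers no proof of this theorem; it is imported as a black box by citing Landau and Vishkin~\cite{LV88}, so there is nothing to compare against textually. That said, your reconstruction is exactly the standard Landau--Vishkin diagonal DP: $\Oh(n)$-time suffix-tree plus LCA preprocessing for constant-time $\LCE$ queries, the furthest-reaching-point table $L_{e,d}$ for $e\in[0\dd\beta]$ and $|d|\le e$ with the deletion/substitution/insertion recurrence followed by $\LCE$ extension, and the acceptance test $L_{e,0}\ge n$ for equal-length inputs. The recurrence and the termination criterion are both correctly stated, and your observation that returning NO when $\ED(X,Y)>\beta$ is always a permissible output under the gap promise (regardless of where $\alpha$ sits) is the right way to discharge the $\GEDa$ interface. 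This is the same algorithm as in the cited reference, so no gap to flag.
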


\begin{fact}[see e.g.~\cite{HIM12,KOR00}]\label{fct:hm}
There exists a randomized algorithm that solves any instance of the \GEDa{\alpha}{0} problem
in $\Oh(\frac{n}{1+\alpha})$ time with success probability at least $\frac23$.
\end{fact}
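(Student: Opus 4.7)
The plan is to reduce the task to uniform sampling of positions, relying on the elementary observation that edit distance is always bounded above by Hamming distance.  If $\alpha=0$, simply compare $X$ and $Y$ character by character in $\Oh(n)$ time, returning YES iff they agree everywhere, so assume $\alpha\ge 1$.  The algorithm picks a multiset $S$ of $t=\ceil{c\cdot n/\alpha}$ positions from $[0\dd n)$ independently and uniformly at random (for a suitable absolute constant $c$), queries $X[i]$ and $Y[i]$ for each $i\in S$, and returns YES iff $X[i]=Y[i]$ for every sampled $i$.  The time complexity is $\Oh(t)=\Oh(n/\alpha)=\Oh(n/(1+\alpha))$, since each sampled position costs $\Oh(1)$ queries and $\Oh(1)$ comparisons.

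For correctness I would argue two cases separately.  If $\ED(X,Y)\le 0$, then $X=Y$, and the algorithm always returns YES.  If $\ED(X,Y)>\alpha$, then the number $m$ of indices $i\in[0\dd n)$ with $X[i]\ne Y[i]$ satisfies $m\ge \ED(X,Y)>\alpha$, because the Hamming alignment is one feasible sequence of $m$ substitutions witnessing $\ED(X,Y)\le m$.  Therefore, each sampled position independently hits a mismatch with probability $m/n>\alpha/n$, and the algorithm fails to detect a mismatch with probability at most
\[
\left(1-\tfrac{\alpha}{n}\right)^{t}\le \exp(-\alpha t/n)\le e^{-c}.
\]
Choosing $c=\ln 3$ (say) makes this at most $\tfrac13$, so the algorithm returns the correct answer with probability at least $\tfrac23$ in both cases.

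There is essentially no obstacle here; the only conceptual point is converting the gap on $\ED$ into a gap on Hamming distance via $\ED(X,Y)\le \Ham(X,Y)$, after which it is a textbook Chernoff/union-bound argument on uniform sampling.  One may strengthen the confidence from $\tfrac23$ to $1-\delta$ at the cost of an $\Oh(\log(1/\delta))$ factor in the number of samples via independent repetition or by enlarging $c$, but this is unnecessary for the stated bound.
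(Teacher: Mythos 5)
Your proof is correct and follows the standard argument implicit in the cited references: the gap problem reduces to Hamming-distance testing via $\ED(X,Y)\le \HAM(X,Y)$, after which uniformly sampling $\Theta(n/\alpha)$ positions gives the stated time and success bounds. The paper states this as a known fact without its own proof, so there is no distinct argument to compare against, and your self-contained derivation is the expected one.
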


\begin{theorem}[Andoni and Nosatzki~\cite{AN20}]\label{thm:an}
There exist decreasing functions $\fAN,\gAN:\R_+\to \R_{\ge 1}$ and a randomized algorithm $\A$ that, given $X,Y\in \Sigma^n$ and $\eps \in \R_+$, runs in $\Oh(\gAN(\eps)n^{1+\eps})$ time and returns a value $\A(X,Y,\eps)$ satisfying \[\Pr[\ED(X,Y)\le \A(X,Y,\eps) \le \fAN(\eps) \ED(X,Y)]\ge\tfrac23.\]
\end{theorem}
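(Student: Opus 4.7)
The plan is to design a recursive divide-and-conquer algorithm on a balanced hierarchical decomposition of the two strings. For a branching factor $b = b(\eps)$, I would split $X$ and $Y$ into $b$ roughly equal contiguous blocks and recurse; at the bottom level, where blocks have constant length, the edit distance is computed by brute force. At an internal node handling substrings $X_u$ and $Y_u$, any optimal alignment partitions $Y_u$ into $b$ contiguous (possibly empty) pieces matched to the $b$ children-blocks of $X_u$, so an approximate value of $\ED(X_u,Y_u)$ can be obtained by a dynamic program over a polynomial-size grid of candidate split positions in $Y_u$, provided one has access to estimates for the children's sub-problems. The branching factor $b$ is chosen to make the recursion depth roughly $\Oh(1/\eps)$.

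To drive the running time down to $\Oh(n^{1+\eps})$, the naive recursion is too expensive, because each node would compare every child-block of $X_u$ against many candidate sub-strings of $Y_u$. My plan here is to apply a precision-sampling scheme in the spirit of Andoni--Krauthgamer--Onak: at each level, the algorithm samples only a small random subset of the (child-block, $Y_u$-substring) pairs, with sampling rates calibrated so that the reciprocal of the rate yields an unbiased estimator for each pair's contribution to $\ED(X_u,Y_u)$. Concentration then shows that the aggregated estimator approximates $\ED(X_u,Y_u)$ up to a constant factor with failure probability $1/\poly(n)$, so that a union bound over all recursion nodes still succeeds with constant probability. The recursion is evaluated only on sampled pairs, yielding per-level work of $\Ohtilde(n)$ and total time $\Oh(\gAN(\eps)\, n^{1+\eps})$, with $\gAN(\eps)$ absorbing the exponential dependence on $1/\eps$ coming from the branching factor and the amplification needed for the union bound.

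The main obstacle, and the key technical contribution, is controlling the multiplicative distortion of the approximation factor across $\Theta(1/\eps)$ levels of recursion. A black-box recursion would compound a per-level factor of $c$ into roughly $c^{1/\eps}$, destroying the promised constant-factor guarantee. To prevent this, I would employ a \emph{densification}-style argument, in which each level uses an approximate sub-routine without letting the error propagate multiplicatively: concretely, by arranging the recursion so that sub-estimates are used only on a scale-invariant, triangle-inequality basis, by truncating child sub-problems whose estimated distance is too small to matter, and by deferring short-distance instances to the exact algorithm of \cref{thm:lv} at the leaves. The overall approximation factor then depends only on $\eps$ (not on $n$), yielding a bound $\fAN(\eps)$ as required. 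Finally, standard independent-repetition amplification on the top-level estimate boosts the success probability to $\tfrac{2}{3}$, giving the claimed guarantee.
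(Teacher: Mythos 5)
This theorem is not proved in the paper at all; it is an external result cited verbatim from Andoni and Nosatzki~\cite{AN20}, and the paper uses it as a black box (first through \cref{cor:an}, then in \cref{cor:our,cor:simple}). There is therefore no ``paper's own proof'' to compare against, and a proof attempt is not what is called for here.

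That said, your sketch correctly identifies several of the ingredients that actually appear in~\cite{AN20}: a hierarchical block decomposition with branching factor tuned to depth $\Theta(1/\eps)$, the Andoni--Krauthgamer--Onak precision-sampling machinery to keep per-level work near-linear, and the central obstacle that a naive recursion compounds a per-level factor $c$ into $c^{\Theta(1/\eps)}$. However, the part of the argument that carries essentially all of the technical weight in the actual result---how to prevent that compounding---is stated only as an aspiration. Asserting that one will ``arrange the recursion so that sub-estimates are used only on a scale-invariant, triangle-inequality basis,'' together with truncation and leaf-level exactness, does not constitute an argument: the entire difficulty is in exhibiting a recursive invariant (the densification structure, the interval-matching gadgets, and the specific way errors are charged additively rather than multiplicatively) under which the distortion provably stays bounded by a function of $\eps$ alone. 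As written, your proposal would yield, at best, the polylogarithmic approximation of the earlier AKO framework, not the constant-factor guarantee the statement promises. Reproducing~\cite{AN20} is a substantial standalone project; for the purposes of this paper you should simply invoke the theorem as stated.
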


Below, $\fAN$ and $\gAN$ denote the functions of \cref{thm:an}.

\begin{corollary}\label{cor:an}
There exists a randomized algorithm that, given $\eps,\delta \in \R_+$ and an instance
of \GEDa satisfying $\alpha \ge \floor{\fAN(\eps)\beta}$, solves the instance in time $\Oh(\gAN(\eps)n^{1+\eps}\log\frac1\delta)$ with error probability at most $\delta$.
\end{corollary}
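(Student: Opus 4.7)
The plan is to boost the Andoni--Nosatzki $\fAN(\eps)$-approximation of \cref{thm:an} into a gap decision by a single threshold comparison, and then amplify the success probability by independent repetition. First I would invoke the algorithm of \cref{thm:an} once to obtain a value $V$ satisfying $\ED(X,Y) \le V \le \fAN(\eps)\,\ED(X,Y)$ with probability at least $\tfrac{2}{3}$, and set $V' = \lfloor V\rfloor$. Since $\ED(X,Y)$ is a nonnegative integer, the inequality $\ED(X,Y) \le V$ is equivalent to $\ED(X,Y) \le V'$, so $V'$ inherits the two-sided guarantee $\ED(X,Y) \le V' \le \fAN(\eps)\,\ED(X,Y)$ while being itself an integer.

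The decision rule is to output YES iff $V' \le \alpha$. In the YES case ($\ED(X,Y)\le\beta$), the upper bound gives $V' \le \fAN(\eps)\,\beta$; since $V'$ is an integer, this upgrades to $V' \le \lfloor \fAN(\eps)\,\beta\rfloor \le \alpha$, where the last inequality is precisely the hypothesis of the corollary. In the NO case ($\ED(X,Y)>\alpha$), the lower bound yields $V' \ge \ED(X,Y) > \alpha$. Hence a single execution is correct with probability at least $\tfrac{2}{3}$ and runs in $\Oh(\gAN(\eps)\,n^{1+\eps})$ time by \cref{thm:an}.

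To drive the error probability down to~$\delta$, I would repeat the above $m = \Theta(\log\tfrac{1}{\delta})$ times with independent randomness and return the majority vote; a standard Chernoff bound on i.i.d.\ Bernoulli trials with success probability at least $\tfrac{2}{3}$ gives failure probability at most $\delta$ for an appropriate constant in~$m$. The total running time is $\Oh(\gAN(\eps)\,n^{1+\eps}\log\tfrac{1}{\delta})$, matching the claim. I do not anticipate any real obstacle: the only subtle step is the integrality argument that converts a real-valued approximation into an exact comparison against $\alpha$, and it follows immediately from the monotonicity of $\lfloor\cdot\rfloor$ applied to integer-valued quantities.
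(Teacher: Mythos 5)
Your proof is correct and follows essentially the same approach as the paper: run the Andoni--Nosatzki estimator, threshold the output (your comparison $\lfloor V\rfloor\le\alpha$ is equivalent to the paper's comparison against $\alpha+1$), and amplify by $\Theta(\log\tfrac1\delta)$ independent repetitions with a majority vote. The integrality bookkeeping you spell out is exactly what makes the hypothesis $\alpha\ge\lfloor\fAN(\eps)\beta\rfloor$ sufficient, just as in the paper.
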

\begin{proof}
Consider running the algorithm of \cref{thm:an}.
If $\ED(X,Y)\le \beta$, then the answer is at most $\fAN(\eps)\beta < \alpha+1$ with probability at least~$\frac23$.
If $\ED(X,Y) > \alpha$, then the answer is at least $\alpha+1$ with probability at least $\frac23$.
Hence, comparing the answer against $\alpha+1$ solves the gap problem
in time $\Oh(\gAN(\eps)n^{1+\eps})$ with success probability at least $\frac23$.
The success probability can be amplified to at least $1-\delta$ by running the algorithm $\Theta(\log \frac1\delta)$ times
with independent randomness and returning the dominant answer.
\end{proof}

\section{Simple Algorithm}\label{sec:simpleAlg}

The main result of this section is a randomized algorithm for the \GEDa problem that,
under mild technical conditions, 
makes $\hat{\Oh}(\frac{\beta}{\alpha} \cdot n)$
non-adaptive queries to the two input strings.
The precise time bound depends on the functions $\fAN$ and $\gAN$ from \cref{thm:an}
(see \cref{cor:our} for the formal statement; here, we assumed fixed $\eps,\delta>0$).
Our algorithm is essentially a reduction (presented in \cref{sec:simpleReduction})
to the same gap problem but with smaller gap parameters,
building upon an earlier reduction of Andoni and Onak~\cite{AO12}.%
\footnote{The reduction in~\cite{AO12} is presented 
  as an application of their almost-linear-time approximation algorithm.
}
In a nutshell, these reductions partition the two input strings into blocks
and call an oracle that solves gap problems 
on a few randomly chosen block pairs. 
The key difference from~\cite{AO12} is that
their reduction uses one block length,
while ours essentially uses all feasible block lengths. 

We start below with an overview of both reductions (\cref{sec:SimpleOverview}),
followed by a quick proof of their reduction (\cref{sec:AO}),
which makes it easier to read our reduction (\cref{sec:simpleReduction}) 
and also to compare the two.
To obtain our final result, we only need to implement the oracle,
and we simply plug in the state-of-the-art almost-linear-time algorithm of~\cite{AN20} 
into our reduction (\cref{sec:simpleAlgCor}).

\subsection{Overview}%
\label{sec:SimpleOverview}

To simplify this overview, we shall assume an algorithm
that approximates the edit distance within factor $f=\hat{\Oh}(1)$ in time $\hat{\Oh}(n)$,
and we shall refer to it as an oracle that solves \GEDa
in almost-linear time whenever $\alpha \ge f \beta$. 
Such algorithms were devised in~\cite{AO12,AN20},
and their precise bounds are not important for this overview.

We first sketch the algorithm (reduction) of Andoni and Onak~\cite{AO12}.
It partitions the two input strings $X,Y$
into $m:={\frac{n}{b}}$ blocks of length $b$ that will be determined later,
denoting their respective $i$th blocks by $X_i$ and $Y_i$ for $i\in [0\dd m)$.
If the algorithm determines that
$\ED(X_i,Y_i)>\beta$ for some $i\in [0\dd m)$,
then, by \cref{fct:hered}, also $\ED(X,Y)>\beta$, 
and the algorithm is safe to return NO.\@
The algorithm's strategy is just to search for such a ``NO witness'';
for this, it samples several indices~$i$,
calls the oracle to solve \GEDa{f\beta}{\beta} on the corresponding pairs $(X_i,Y_i)$,
and returns NO if and only if at least one of the oracle calls returned NO.\@
This algorithm is clearly correct whenever $\ED(X,Y)\leq\beta$, 
so we only need to consider $\ED(X,Y)>\alpha$.
In that case, by \cref{fct:subadd},
$\ED(X_i,Y_i) > \frac{\alpha}{m}$ holds for an average $i$
(a crude intuition is that an average block ``contains'' many edit operations).
For this sketch, let us consider only the two extreme scenarios. 
In one scenario, $\ED(X_i,Y_i)$ has the same value for all $i$;
if $\frac{\alpha}{m} \geq f\beta$,
then, no matter which block our algorithm samples,
the oracle will return NO on it, and our algorithm will also return NO.\@
We will thus constrain our choice of $m$ to satisfy $\frac{\alpha}{m} \geq f\beta$. 
In the other extreme scenario,
$\ED(X_i,Y_i)$ has a large value for a few indices $i$
and a small value, say zero for simplicity, for all other indices.
These large values are bounded by $\ED(X_i,Y_i)\leq b$;
hence, the first group must contain at least $\frac{\alpha}{b}$ indices $i$
(again by \cref{fct:subadd}). 
To have a good chance of sampling at least one of these indices,
our algorithm should sample each $i$ with probability (at least) 
$\rho = \Omega(\frac{b}{\alpha})$.
To optimize algorithm's query complexity,
we set the parameters to minimize the sampling rate $\rho$,
i.e., minimize $b$ or, equivalently, maximize~$m$.
Due to the constraint from above, the optimal choice is thus
$b=\frac{n}{m} = \frac{n\cdot f\beta}{\alpha}$. 
The query complexity of this algorithm is 
$\Oh(\rho n)=\Oh(\frac{b}{\alpha} \cdot n)
= \hat{\Oh}(\frac{n^2\cdot \beta}{\alpha^2})$,  
and the running time is almost-linear in the query complexity,
and thus bounded similarly.

The true limitation of this approach is that it uses a single block length $b$.
It is somewhat hidden because
we compare $\ED(X_i,Y_i)$ only against the natural threshold $\beta$
(and $f\beta$, but the factor $f$ is almost negligible here),
which leads to an optimal choice of $b$.
One idea is to use a different block length~$b$, or even multiple lengths.
But should it be larger or smaller? And what advantage can we gain from it?

What turns out to work well is a multi-level approach,
which partitions the input strings into blocks of different lengths (all powers of $2$)
and samples blocks from all the levels at the same rate $\rho$.
The query complexity is $O(\rho n)$ for each level,
and there are only $O(\log n)$ levels,
but we can now use sampling rate $\rho=\Theta(\frac{f\beta}{\alpha})$, 
which is significantly lower than $\Theta(\frac{nf\beta}{\alpha^2})$ needed in~\cite{AO12}. 
To understand this improvement in the sampling rate, 
recall the two extreme scenarios mentioned above. 
In the first scenario, where edits are spread evenly among the length-$b$ blocks for $b=\Omega(\frac{n\cdot f\beta}{\alpha})$,
we already argued that querying any length-$b$ block suffices to detect a ``NO witness''. 
In the second scenario, consider shorter blocks of length $\Oh(f \beta)$,
and suppose that each $\ED(X_i,Y_i)$ is either zero or exceeds $f\beta$. 
The number of indices $i$ in the latter group must be at least 
$\frac{\alpha}{\Oh(f \beta)}$ (again by \cref{fct:subadd}),
and they are all ``NO witnesses''. 
To have a good chance of sampling at least one of them, 
it suffices to use rate $\rho=\Theta(\frac{f\beta}{\alpha})$.

Our proof considers all levels and,
for each position $j\in [1\dd n]$,  identifies a ``suitable'' level
based on the distribution of errors in the proximity of $j$. 
This is achieved by decomposing $[1\dd n]$ into blocks of varying sizes,
so that the block covering position $j$ reveals the level responsible for~$j$.
Perhaps surprisingly, the analysis is thus adaptive 
even though the algorithm only makes non-adaptive queries!

\subsection{The Reduction of Andoni and Onak}\label{sec:AO}

We recall a sublinear-time algorithm of Andoni and Onak \cite[Section 4]{AO12}
that makes calls to an oracle solving the same gap problem but with a smaller gap.
They implement this oracle using their main result, 
which is an almost-linear-time approximation algorithm. 
We review their proof to illustrate how our algorithm and analysis are different. 

\begin{theorem}\label{thm:ao}
There exists a randomized reduction that, given a parameter $\phi\in \Zp$
and an instance of \GEDa satisfying $\frac13\alpha\ge \phi \ge \beta \ge 1$,
solves the instance using $\Oh(\frac{n}{\alpha})$
non-adaptive calls to an oracle for \GEDa{\phi}{\beta}
involving substrings of total length $\Oh(\frac{\phi n^2}{\alpha^2})$.
The reduction takes $\Oh(\frac{n}{\alpha})$ time,
does not access the input strings,
and errs with probability at most~$\frac{1}{e}$.
\end{theorem}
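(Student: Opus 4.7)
The plan is to follow exactly the block-sampling strategy laid out in \cref{sec:SimpleOverview}: partition $X$ and $Y$ into blocks of a single length $b$, and, for a random subset $S$ of block indices $i$, call the \GEDa{\phi}{\beta} oracle on the pair $(X_i,Y_i)$, returning NO iff at least one oracle call returns NO. Concretely, I would pick $b := \ceil{3\phi n/\alpha}$, giving $m := \ceil{n/b} = \Theta(\alpha/\phi)$ blocks $X_i = X\fr{ib}{\min(n,(i{+}1)b)}$ (and similarly for $Y$), and include each $i\in [0\dd m)$ in $S$ independently with probability $\rho := c\phi n/\alpha^2$ for a constant $c$ to be tuned. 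The reduction itself never reads the inputs: it simply generates $S$ by repeatedly jumping by a geometric random variable of parameter $\rho$, costing $\Oh(|S|) = \Oh(n/\alpha)$ time in expectation, and outputs the list of block endpoints.

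For correctness in the YES case ($\ED(X,Y)\le \beta$), \cref{fct:hered} gives $\ED(X_i,Y_i)\le \beta$ for every $i$, so every oracle call returns YES and so does the reduction. In the NO case ($\ED(X,Y) > \alpha$), the crux is to lower-bound the number of ``bad'' block indices $B := \{i : \ED(X_i, Y_i) > \phi\}$. Iterating \cref{fct:subadd} over the fixed partition yields $\sum_i \ED(X_i,Y_i) \ge \ED(X,Y) > \alpha$; since each $\ED(X_i,Y_i) \le b$ and each good block contributes at most $\phi$, this gives
\[ |B|\cdot b + (m-|B|)\cdot \phi \;>\; \alpha. \]
Combined with $m\phi \le \alpha/3$ (forced by the choice of $b$ and $\alpha \ge 3\phi$), we obtain $|B|(b-\phi) > 2\alpha/3$, and hence $|B| = \Omega(\alpha/b) = \Omega(\alpha^2/(\phi n))$. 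The expected number of sampled bad blocks is then $\rho|B| = \Omega(c)$, so by independence of the coin flips the failure probability is at most $(1-\rho)^{|B|} \le e^{-\rho|B|}$, which we drive below $1/e$ by choosing $c$ large enough.

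For complexity, each queried pair $(X_i,Y_i)$ has combined length $2b = \Oh(\phi n/\alpha)$; the expected size of $S$ is $\rho m = \Theta(\phi n/\alpha^2)\cdot \Theta(\alpha/\phi) = \Theta(n/\alpha)$, so the expected number of oracle calls is $\Oh(n/\alpha)$ and the total substring length passed to the oracle is $\Oh(\phi n^2/\alpha^2)$, matching the statement. Since $S$ depends only on the random coins and on $n,\alpha,\phi$, the calls are non-adaptive.

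The main obstacle I anticipate is precisely the lower bound on $|B|$. Subadditivity by itself only says $\sum_i \ED(X_i,Y_i) > \alpha$, which does not, on its own, rule out the mass piling up on a single bad block; what makes the argument go through is that a good block can still contribute up to $\phi$ to the sum, so one must exploit the slack provided by the hypothesis $\alpha \ge 3\phi$. The choice $b = \Theta(\phi n/\alpha)$ is exactly the one that makes the resulting sampling rate $\rho = \Theta(\phi n/\alpha^2)$ both large enough to hit a bad block with constant probability and small enough to keep the expected number of oracle calls at $\Oh(n/\alpha)$; any smaller $b$ would shrink $|B|$ too much, and any larger $b$ would blow up the total queried length.
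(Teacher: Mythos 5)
Your proposal follows exactly the paper's argument: the same single-scale decomposition into blocks of length $b=\lceil 3\phi n/\alpha\rceil$, the same sampling rate $\rho=\Theta(\phi n/\alpha^2)$, and the same lower bound on $|B|$ from subadditivity together with the slack afforded by $\alpha\ge 3\phi$. Two small points worth noting: the paper draws a \emph{fixed} number $\lceil\rho m\rceil$ of indices with replacement, giving a deterministic (not merely expected) bound on the number of oracle calls as the statement promises, whereas your independent-Bernoulli sampling needs a cap or concentration step; and $m\phi\le\alpha/3$ is not quite right because the ceiling in $m=\lceil n/b\rceil$ can push $m\phi$ up to $\alpha/3+\phi$, but the looser estimate $m\phi\le 2n\phi/b$ (valid since $b\le n$ follows from $3\phi\le\alpha$) still yields $|B|>(\alpha b-2n\phi)/b^2\ge \phi n/b^2=1/\rho$, so the argument goes through after adjusting your constant $c$.
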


\begin{proof}
Let us partition the input strings $X,Y$ into $m:=\ceil{\frac{n}{b}}$ \emph{blocks} of length $b:=\ceil{\frac{3\phi n}{\alpha}}$
(the last blocks might be shorter), denoting the $i$th blocks by $X_i$ and $Y_i$ for $i\in [0\dd m)$.
For a \emph{sampling rate} $\rho:=\frac{b^2}{\phi n}$, the algorithm performs $\ceil{m\rho}$ iterations.
In each iteration, the algorithm chooses $i\in \fr{0}{m}$ uniformly at random and makes an oracle call to solve an instance $(X_{i},Y_{i})$ of \GEDa{\phi}{\beta}.
The algorithm returns YES if and only if all oracle calls return YES.\@

The total length of substrings involved in the oracle calls is
$\Oh(\rho m\cdot b)=\Oh(\rho n) = \Oh(\frac{b^2}{\phi})=\Oh(\frac{\phi n^2}{\alpha^2})$,
whereas the running time and number of oracle calls are 
$\Oh(\rho m)=\Oh(\frac{b}{\phi})=\Oh(\frac{n}{\alpha})$.

To prove the correctness of this reduction (assuming the oracle makes no errors), 
let $B:=\{i\in [0\dd m) : \ED(X_i,Y_i)>\phi\}$
and observe that $\ED(X,Y)\le |B| b + m\phi$.
Hence, if $\ED(X,Y)>\alpha$, then \[|B| > \tfrac{\alpha - m \phi}{b} \ge \tfrac{\alpha b-2n\phi}{b^2}
 \ge \tfrac{3\phi n - 2\phi n}{b^2}=\tfrac{1}{\rho}.\]
The probability that the algorithm returns YES is then at most
\[
  \left(1-\tfrac{|B|}{m}\right)^{\ceil{\rho m}}
  \le \exp\left(-\tfrac{|B|}{m}\cdot \ceil{\rho m}\right)
  \le \exp\left(-\rho|B|\right)\le \exp(-1).
\]
On the other hand, if $\ED(X,Y)\le \beta$, then \cref{fct:hered} implies that $\ED(X_{i},Y_{i})\le \beta$ for all $i\in \fr{0}{m}$. Consequently, all oracle calls return YES, and so does the entire algorithm.
\end{proof}

\subsection{Our Reduction}\label{sec:simpleReduction}


\begin{theorem}\label{thm:our}
There exists a randomized reduction that, given a parameter $\phi\in \Zp$
and an instance of \GEDa satisfying $\frac1{10}\alpha\ge \phi \ge \beta \ge 1$,
solves the instance using $\Oh(\frac{n}{\alpha})$
non-adaptive calls to an oracle for \GEDa{\phi}{\beta}
involving substrings of total length $\Oh(\frac{\phi n \log n}{\alpha})$.
The reduction takes $\Oh(\frac{n}{\alpha})$ time,
does not access the input strings,
and errs with probability at most~$\frac{1}{e}$.
\end{theorem}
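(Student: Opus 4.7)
The plan is to upgrade the single block length of \cref{thm:ao} into a logarithmic family of block lengths, sampled non-adaptively at a common rate. Set $b_\ell := 2^\ell b_0$ for $\ell\in [0\dd L]$, where $b_0 = \Theta(\phi)$ (so that bad blocks can already exist at the smallest level) and $b_L = \Theta(\phi n/\alpha)$, giving $L = \Oh(\log(n/\alpha))$ levels; at each level partition $X$ and $Y$ into $m_\ell := \lceil n/b_\ell\rceil$ aligned pairs $(X_{i,\ell},Y_{i,\ell})$. The algorithm includes each index $i$ at each level $\ell$ in the sample independently with probability $\rho = \Theta(\phi/\alpha)$, invokes the \GEDa{\phi}{\beta} oracle on every sampled pair, and returns YES iff all calls return YES. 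The expected number of calls is $\sum_\ell \rho m_\ell = \rho n\sum_\ell 1/b_\ell = \Oh(\rho n/b_0) = \Oh(n/\alpha)$, a geometric sum dominated by its smallest term, while the expected total substring length is $\sum_\ell \rho m_\ell b_\ell = \rho n L = \Oh(\phi n\log n/\alpha)$. The reduction only draws samples and never touches the input, so it runs in $\Oh(n/\alpha)$ time. The YES case is immediate from \cref{fct:hered}: if $\ED(X,Y)\le \beta$, every sampled pair already has ED at most $\beta$.

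For the NO case, call an index at level $\ell$ \emph{bad} if $\ED(X_{i,\ell},Y_{i,\ell})>\phi$, let $B_\ell$ be the set of bad indices at level $\ell$, and write $|F|:=\sum_\ell |B_\ell|$. Because the sample is independent across levels and across indices, the probability that no bad block is queried is at most $\prod_\ell (1-\rho)^{|B_\ell|}\le \exp(-\rho|F|)$, so it suffices to establish $|F| = \Omega(\alpha/\phi)$ whenever $\ED(X,Y)>\alpha$ and then choose $\rho = \Theta(\phi/\alpha)$ large enough to push the miss probability below $1/e$. The principal obstacle is this aggregate bound: the naive per-level estimate $|B_\ell|\ge (\alpha-m_\ell\phi)/b_\ell$ obtained from \cref{fct:subadd} is far too weak and is often vacuous at small levels. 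The key structural observation, immediate from \cref{fct:hered}, is that the bad blocks are closed under taking dyadic ancestors, so they form a binary forest rooted at the bad level-$L$ blocks. Its leaves, the \emph{minimal bad} blocks --- i.e., bad blocks whose two dyadic children are both good (or bad blocks at level $0$, whose length is at most $2\phi$) --- satisfy $\ED\in(\phi,2\phi]$ by \cref{fct:subadd} applied to the two good halves.

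Inside each bad level-$L$ block $B$, the minimal bad descendants together with the good siblings of its internal bad nodes form a partition of $B$. A short binary-tree count shows that in the subtree rooted at $B$, with $K_B$ leaves and $I_B$ internal nodes (each at level $\ge 1$ and hence having two dyadic children), the number of good siblings equals $2I_B-(I_B+K_B-1)=I_B-K_B+1$, so summing over all bad roots yields $G = |F| - 2K + |B_L|$ for the global totals. Combining \cref{fct:subadd} across level $L$ with these in-block partitions then gives
\[
  \alpha < \ED(X,Y) \le 2\phi K + \phi G + (m_L - |B_L|)\phi = \phi(|F| + m_L),
\]
where substituting the expression for $G$ makes every other term cancel. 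Picking $b_L$ as the smallest power of $2$ with $b_L \ge 2n\phi/\alpha$ forces $m_L \le \alpha/(2\phi)+1$, which under the hypothesis $\phi \le \alpha/10$ still yields $|F| = \Omega(\alpha/\phi)$. Substituting into the miss-probability estimate with $\rho = \Theta(\phi/\alpha)$ bounds the failure probability by $1/e$, completing the proof.
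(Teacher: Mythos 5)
Your algorithm is the same as the paper's (dyadic multi-level partition, oracle calls at a uniform sampling rate $\rho = \Theta(\phi/\alpha)$, YES iff all calls say YES), but you prove the key combinatorial fact --- that a NO instance has $\Omega(\alpha/\phi)$ bad blocks across the sampled levels --- by a different route. You observe that bad blocks are upward-closed (\cref{fct:hered}), hence form a forest rooted at level $L$; decomposing each bad root into minimal-bad leaves (with $\ED\le 2\phi$ by \cref{fct:subadd}) and the good siblings of internal nodes, and counting nodes, gives $\alpha < \ED(X,Y) \le \phi(|F|+m_L)$ directly. The paper instead proves \cref{lem:key} by induction on the level, maintaining the invariant $\sum_{i\in B_p}\ED(X_{p,i},Y_{p,i})\le 2\tau\sum_{q\le p}|B_q|$, and then separately subtracts the few bad blocks residing at levels above $\floor{\log(\rho n)}$; your argument is essentially an explicit, unrolled form of that amortization that folds the top-level correction into the $m_L$ term and yields a slightly better constant. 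Both are sound. One minor issue: you sample each (index, level) pair independently with probability $\rho$, so your $\Oh(n/\alpha)$ bound on the number of oracle calls is only in expectation, whereas the theorem asserts it as a fixed bound. The paper avoids this by making exactly $\ceil{\rho m_p}$ uniformly random calls per level; adopting that (or capping your sample size) closes the gap without affecting your union-bound estimate.
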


\paragraph*{The algorithm}
For every \emph{level} $p \in [0\dd \ceil{\log n}]$, partition $X,Y$ into $m_p:=\lceil\frac{n}{2^p}\rceil$ \emph{blocks} of length $2^p$ (the last blocks might be shorter), given by $ X_{p,i} = X\fr{i\cdot2^p}{\min(n,(i+1)2^p)}$ and $Y_{p,i} = Y\fr{i\cdot2^p}{\min(n,(i+1)2^p)}$ for $i\in \fr{0}{m_p}$.

Let $\rho:= \frac{10\phi}{\alpha}$. 
For each level $p\in [\ceil{\log \phi}\dd \floor{\log (\rho n)}]$, our algorithm performs $\ceil{\rho m_p}$ iterations.
In each iteration, the algorithm chooses $i\in \fr{0}{m_p}$ uniformly at random and calls an oracle to solve an instance $(X_{p,i},Y_{p,i})$ of the \GEDa{\phi}{\beta} problem. The algorithm returns YES if and only if all oracle calls (across all levels) return YES.\@

\paragraph*{Complexity Analysis}
The total length of substrings involved in the oracle calls is
\[\Oh\left(\sum_{p=\ceil{\log \phi}}^{\floor{\log(\rho n)}} 2^p\cdot \ceil{\rho m_p}\right)
=\Oh(\rho n \log n)= \Oh(\tfrac{\phi n\log n}{\alpha}).\]
The running time and the number of oracle calls are
\[\Oh\left(\sum_{p = \ceil{\log \phi}}^{\floor{\log (\rho n)}} \ceil{\rho m_p}\right)
= \Oh\left(\sum_{p=\ceil{\log \phi}}^{\floor{\log (\rho n)}} \tfrac{\rho n }{2^p}\right)  =
\Oh(\tfrac{\rho n}{\phi})=\Oh\left(\tfrac{n}{\alpha}\right).\]

\paragraph*{Correctness}
The core of the analysis is the following lemma,
which proves that an instance with large edit distance must contain,
across all the levels, many blocks of ``high cost''.
In the lemma, these blocks are denoted by $B_p$ for level $p$,
as illustrated in Figure~\ref{fig:blocks}. 

\begin{lemma}\label{lem:key}
  Consider a threshold $\tau\in \Zp$.
  For each level $p\in [0\dd \ceil{\log n}]$,
  let \[B_p:= \{i\in \fr{0}{m_p}: \ED(X_{p,i},Y_{p,i})>\tau\}.\]
  If $\ED(X,Y)> \tau$,
  then $\sum_{p=\ceil{\log \tau}}^{\ceil{\log n}}|B_p| > \frac{1}{2\tau}\ED(X,Y)$.
  \end{lemma}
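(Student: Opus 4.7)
My plan is to build a greedy dyadic partition of $[0\dd n)$ and marry it with sub-additivity. Starting from the top-level block $X_{\lceil \log n \rceil, 0}$ at level $\lceil \log n \rceil$, I would recursively replace any current block $b$ with $\ED(X_b, Y_b) > \tau$ by its two children at the next lower level; the recursion terminates because any block of length at most $\tau$ trivially satisfies $\ED \le \tau$. The resulting leaves form a partition $\mathcal{D}$ of $[0\dd n)$, and the process is recorded as a binary tree whose internal nodes are exactly the blocks that were split.

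Two structural observations drive the argument. First, every leaf satisfies $\ED \le \tau$ by construction. Second, every internal block $b$ satisfies $\ED(X_b,Y_b) > \tau$ and hence lies in $B_{p(b)}$, where $p(b)$ is its level; moreover, $\ED \le |X_b|$ forces the internal block's length (and thus $2^{p(b)}$) to exceed $\tau$, so $p(b) \ge \lceil \log \tau \rceil$. Therefore every internal node contributes $1$ to $\sum_{p=\lceil \log \tau \rceil}^{\lceil \log n \rceil} |B_p|$. Sub-additivity (\cref{fct:subadd}) applied to the partition $\mathcal{D}$ gives $\ED(X,Y) \le \sum_{b \in \mathcal{D}}\ED(X_b, Y_b) \le \tau \cdot |\mathcal{D}|$, while the standard binary-tree identity yields $|\mathcal{D}| \le \#\text{internal} + 1 \le \sum_p |B_p| + 1$. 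Combining these bounds,
\[
  \ED(X,Y) \;\le\; \tau\Bigl(\textstyle\sum_{p = \lceil \log \tau \rceil}^{\lceil \log n \rceil} |B_p| + 1\Bigr),
\]
i.e.\ $\sum_p |B_p| \ge \ED(X,Y)/\tau - 1$. Whenever $\ED(X,Y) > 2\tau$, this immediately yields the stated $\sum_p |B_p| > \ED(X,Y)/(2\tau)$. In the narrow band $\tau < \ED(X,Y) \le 2\tau$, I would instead use the direct observation that the whole-string block is already in $B_{\lceil \log n \rceil}$, contributing one to the sum and exceeding $\ED(X,Y)/(2\tau)$.

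The main technical nuisance is the non-power-of-$2$ truncation of blocks near the right end of $X$ and $Y$: a split may produce an empty right child, so the tree is not strictly a complete binary tree. I expect this to require only a short inductive check that the inequality $\#\text{leaves} \le \#\text{internal} + 1$ survives such degenerate splits (a degenerate internal node with one child acts essentially like a renaming of its child, so the count is unaffected). Beyond that, the argument is essentially sub-additivity plus a clean counting argument on the decomposition tree, and the step that needs the most care is checking that the tight boundary $\ED(X,Y)\approx 2\tau$ is handled by the case split rather than by the main chain of inequalities.
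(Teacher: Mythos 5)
Your approach is correct but genuinely different from the paper's, and the comparison is worth making. The paper establishes, by \emph{bottom-up induction on the level~$p$}, the invariant $\sum_{i\in B_p} \ED(X_{p,i},Y_{p,i}) \le 2\tau\sum_{q=\ceil{\log\tau}}^{p}|B_q|$, then instantiates at $p=\ceil{\log n}$ where $B_{\ceil{\log n}}=\{0\}$. You instead build a \emph{top-down greedy dyadic tree}: split every block with $\ED>\tau$ until all leaves have $\ED\le\tau$, then combine the partition-subadditivity bound $\ED(X,Y)\le \tau\,|\mathcal{D}|$ with the binary-tree count $|\mathcal{D}|\le\#\mathrm{internal}+1\le\sum_p|B_p|+1$. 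Both arguments live on the same dyadic structure and both crucially use \cref{fct:subadd} and the observation that a block with $\ED>\tau$ must have length $>\tau$, hence level $\ge\ceil{\log\tau}$. Your route is arguably more transparent (the tree makes the ``witness'' blocks explicit and explains the constant), and it even yields the marginally stronger inequality $\sum_p|B_p|\ge\ED(X,Y)/\tau-1$, which dominates $\ED(X,Y)/(2\tau)$ for $\ED\ge 2\tau$.

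One caveat you should be aware of: at the exact boundary $\ED(X,Y)=2\tau$, your case analysis only delivers $\sum_p|B_p|\ge 1=\ED(X,Y)/(2\tau)$, not the strict $>$ claimed in the statement (and indeed one can construct $X,Y$ with $\ED=2\tau$ where the root is the \emph{only} block in $\bigcup_p B_p$). This is not a flaw peculiar to your argument: the paper's own chain of inequalities likewise terminates in $\ED(X,Y)\le 2\tau\sum_p|B_p|$ and therefore also only yields $\ge$. The strict $>$ in the lemma is a small overclaim that is harmless for every downstream use (the lemma is always invoked with $\ED(X,Y)>\alpha\ge 10\tau$ or $>112\tau\ceil{\log n}$, so there is ample slack), but if you wanted to state your proof carefully you would either relax the lemma to $\ge$ or note that $\ED=2\tau$ is excluded in the intended applications. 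Beyond this shared edge case, your decomposition-tree argument and the ``degenerate right child'' handling you sketch are sound.
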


  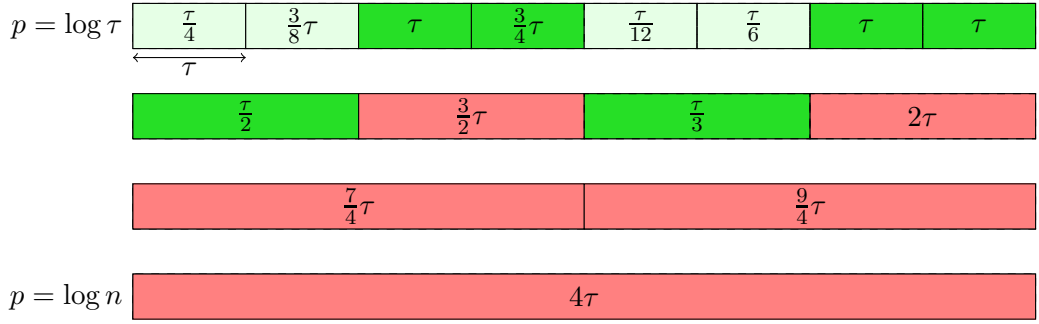
\begin{figure}[ht]
    \centering
\begin{tikzpicture}[yscale=1.2]
    \draw[] (-9,0) rectangle (3,0.5);
    \draw[dashed] (-9,0) rectangle (-7.5,0.5);
    \draw[<->] (-9,-0.1) -- (-7.5,-0.1);
    \node[] at (-8.25,-0.22) {$\tau$};

    \draw[dashed] (-7.5,0) rectangle (-6,0.5);
    \draw[dashed] (-6,0) rectangle (-4.5,0.5);
    \draw[dashed] (-4.5,0) rectangle (-3,0.5);
    \draw[fill=green!10] (-9,0) rectangle (-7.5,0.5);
    \draw[fill=green!10] (-7.5,0) rectangle (-6,0.5);
    \draw[fill=green!85!black!85] (-6,0) rectangle (-4.5,0.5);
    \draw[fill=green!85!black!85] (-4.5,0) rectangle (-3,0.5);
    \draw[fill=green!85!black!85] (0,0) rectangle (1.5,0.5);
    \draw[fill=green!10] (-3,0) rectangle (-1.5,0.5);
    \draw[fill=green!10] (-1.5,0) rectangle (0,0.5);
    \draw[fill=green!85!black!85] (1.5,0) rectangle (3,0.5);
    \draw[dashed] (-3,0) rectangle (-1.5,0.5);
    \draw[dashed] (-1.5,0) rectangle (0,0.5);
    \draw[dashed] (0,0) rectangle (1.5,0.5);
    \draw[dashed] (1.5,0) rectangle (3,0.5);
    
    \node[] at (-8.25,0.25) {$\frac \tau 4$};
    \node[] at (-6.75,0.25) {$\frac 3 8 \tau$};
    \node[] at (-5.25,0.25) {$\tau$};
    \node[] at (-3.75,0.25) {$\frac 3 4\tau$};
    \node[] at (-2.25,0.25) {$\frac \tau {12}$};
    \node[] at (-0.75,0.25) {$\frac \tau 6$};
    \node[] at (0.75,0.25) {$\tau$};
    \node[] at (2.25,0.25) {$\tau$};
    
    \draw (-9,0.25) node[left] {$p= \log \tau$};

    \draw[] (-9,-1) rectangle (3,-0.5);
    \draw[dashed] (-9,-1) rectangle (-6,-0.5);
    \draw[dashed] (-6,-1) rectangle (-3,-0.5);
    \draw[fill=red!50] (-6,-1) rectangle (-3,-0.5);
    \draw[fill=red!50] (0,-1) rectangle (3,-0.5);
    \draw[fill=green!85!black!85] (-9,-1) rectangle (-6,-0.5);
    \draw[fill=green!85!black!85] (-3,-1) rectangle (0,-0.5);
    \draw[dashed] (-3,-1) rectangle (0,-0.5);
    \draw[dashed] (0,-1) rectangle (3,-0.5);
    
    \node[] at (-7.5,-0.75) {$\frac \tau 2$};
    \node[] at (-4.5,-0.75) {$\frac 3 2\tau $};
    \node[] at (-1.5,-0.75) {$\frac \tau 3$};
    \node[] at (1.5,-0.75) {$2\tau$};

    \draw[] (-9,-2) rectangle (3,-1.5);
    \draw[dashed] (-9,-2) rectangle (-3,-1.5);
    \draw[dashed] (-3,-2) rectangle (3,-1.5);
    \draw[fill=red!50] (-9,-2) rectangle (-3,-1.5);
    \draw[fill=red!50] (-3,-2) rectangle (3,-1.5);
    \node[] at (-6,-1.75) {$\frac 7 4\tau$};
    \node[] at (0,-1.75) {$\frac 9 4 \tau$};
    
    \draw[] (-9,-3) rectangle (3,-2.5);
    \draw[dashed] (-9,-3) rectangle (3,-2.5);
    \draw[fill=red!50] (-9,-3) rectangle (3,-2.5);
    \node[] at (-3,-2.75) {$4\tau$};
    
    \draw (-9,-2.75) node[left] {$p= \log n$};
    
    \end{tikzpicture}
    \caption{
      The multi-level partitioning of $[0\dd n)$ into blocks of length $2^p$ at each level $p$.
      Inside each rectangle we write the edit distance
      between the corresponding pair of blocks in $X$ and in $Y$.
      A red color represents that the block is in $B_{p}$ (high cost),
      and the two shades of green represent the remaining blocks. The dark green blocks color represents blocks in $\hat{B}_p \sm B_p$ (green blocks with a red `parent'); the crux of \cref{lem:key} is that the entire range can be decomposed into few such blocks. 
    }\label{fig:blocks}
  \end{figure}

\begin{proof}
We prove by induction on $p$ that \[\sum_{i\in B_p} \ED(X_{p,i},Y_{p,i})\le 2\tau\cdot \sum_{q=\ceil{\log \tau}}^p |B_q|.\]
The base case of $p<\ceil{\log \tau}$ holds trivially
because then $\ED(X_{p,i},Y_{p,i})\le \tau$ for all $i\in \fr{1}{m_p}$, and thus $B_{p}=\emptyset$.
        
For the inductive step, consider $p\ge\ceil{\log \tau}$.
        Observe that each $i\in \fr{0}{m_p}$ satisfies $X_{p,i}=X_{p-1,2i}\cdot X_{p-1,2i+1}$
        (if $X_{p-1,2n_{p}+1}$ is undefined, we set it to be empty) and, similarly,
        $Y_{p,i}=Y_{p-1,2i}\cdot Y_{p-1,2i+1}$.
        Consequently, we define $\hat{B}_{p-1} = \bigcup_{i\in B_p}\{2i,2i+1\}$, with $|\hat{B}_{p-1}| = 2|B_p|$, 
        and derive
\begin{align*}
  \sum_{i\in B_p} \ED(X_{p,i},Y_{p,i}) 
  \stackrel{\text{\cref{fct:subadd}}}{\le} & \sum_{j\in \hat{B}_{p-1}} \ED(X_{p-1,j},Y_{p-1,j})
  \\
  \stackrel{\text{\phantom{induction}}}{\le} & \sum_{j\in \hat{B}_{p-1}\sm B_{p-1}}  \ED(X_{p-1,j},Y_{p-1,j}) + \sum_{j\in B_{p-1}} \ED(X_{p-1,j},Y_{p-1,j})
  \\
   \stackrel{\text{induction}}{\le} & \tau\cdot|\hat{B}_{p-1} \sm B_{p-1}| + 2\tau\cdot\sum_{q=\ceil{\log\tau}}^{p-1} |B_q|
  \\  \stackrel{\text{\phantom{induction}}}{\le} & \tau\cdot|\hat{B}_{p-1}| + 2\tau\cdot\sum_{q=\ceil{\log\tau}}^{p-1} |B_q|\\
  \stackrel{\text{\phantom{induction}}}{=} & 2\tau\cdot\sum_{q=\ceil{\log \tau}}^{p} |B_q|.
\end{align*}
This completes the inductive proof.

The lemma follows by applying the inequality proved above and observing that $\ED(X,Y)>\tau$ implies $B_{\ceil{\log n}}=\{0\}$:
\[\ED(X,Y)
= \ED(X_{\ceil{\log n},0},Y_{\ceil{\log n},0})
\le 2\tau \!\!\sum_{p=\ceil{\log \tau}}^{\ceil{\log n}}\!\!|B_p|.\qedhere\]
\end{proof}

Let us proceed with the correctness analysis of our algorithm.
First, suppose that $\ED(X,Y)>\alpha$.
Using \cref{lem:key} with $\tau := \phi$
and the fact $\alpha \ge 10\phi > \tau$,
we then obtain
\[
  \sum_{p=\ceil{\log \phi}}^{\ceil{\log n}} |B_p| \ge \tfrac{\ED(X,Y)}{2\tau} > \tfrac{\alpha}{2\phi} = \tfrac{5}{\rho}.
\]
Using a naive bound
\[\sum_{p=\ceil{\log (\rho n)}}^{\ceil{\log n}}|B_p|
\le \sum_{p=\ceil{\log (\rho n)}}^{\ceil{\log n}}m_p
\le \sum_{p=\ceil{\log (\rho n)}}^{\ceil{\log n}}\tfrac{2n}{2^p}
\le \tfrac{4}{\rho},\]
we conclude that
$\sum_{p=\ceil{\log \phi}}^{\floor{\log (\rho n)}} |B_p| \ge \frac{1}{\rho}$.
For each level $p\in [\ceil{\log \phi}\dd \floor{\log (\rho n)}]$,
the probability that a single oracle (at a fixed iteration) call returns YES
is at most
$1-\frac{|B_p|}{m_p} \le \exp\big({-\frac{|B_p|}{m_p}}\big)$.
Across all levels $p\in [\ceil{\log \phi}\dd \floor{\log (\rho n)}]$,
the probability that all calls return YES is at most
\[
  \exp\left(-\sum_{p=\ceil{\log \phi}}^{\floor{\log(\rho n)}}\tfrac{|B_p|}{m_p}\cdot \ceil{\rho m_p}\right)
\le \exp\left(-\sum_{p=\ceil{\log \phi}}^{\floor{\log(\rho n)}}\rho|B_p|\right)
 \le 
\exp(-1).
\]
Thus, the algorithm returns YES with probability at most~$\frac{1}{e}$.

Now, suppose that $\ED(X,Y)\le \beta$.
Then, \cref{fct:hered} implies $\ED(X_{p,i},Y_{p,i})\le \beta$ for all $p\in [0\dd \ceil{\log n}]$ and $i\in \fr{0}{m_p}$.
Consequently, each oracle call returns YES, so our algorithm also returns YES.\@
This completes the proof of \cref{thm:our}.

\subsection{Corollaries (by Plugging Known Algorithms)}\label{sec:simpleAlgCor}

\begin{corollary}\label{cor:our}
There exists a non-adaptive randomized algorithm that, given parameters $\eps,\delta \in \R_+$ and an instance
of \GEDa satisfying $\alpha \ge \floor{\fAN(\eps)\beta}$,
solves the instance in time
$\Oh(\frac{1+\beta}{1+\alpha} \fAN(\eps)\gAN(\eps)\cdot n^{1+\eps}\log^2 n \cdot \log\frac{1}{\delta})$,
using $\Oh(\frac{1+\beta}{1+\alpha} \fAN(\eps) \cdot n\log n \cdot \log\frac{1}{\delta})$ queries to the input strings, 
and with error probability at most $\delta$,
where $\fAN$ and $\gAN$ are the functions of \cref{thm:an}.
\end{corollary}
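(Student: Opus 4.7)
The plan is to invoke \cref{thm:our} with the oracle for \GEDa{\phi}{\beta} implemented via \cref{cor:an}, choosing $\phi := \floor{\fAN(\eps)\beta}$ so that \cref{cor:an} applies to each oracle instance. The precondition $\alpha \ge 10\phi$ of \cref{thm:our} becomes $\alpha \ge 10\floor{\fAN(\eps)\beta}$. Two boundary cases are disposed of directly: for $\beta = 0$, \cref{fct:hm} solves \GEDa{\alpha}{0} in $\Oh(n/(1+\alpha))$ time, amplified to error $\delta$ at an $\Oh(\log(1/\delta))$ multiplicative cost, which fits within the claimed budget. For $\beta \ge 1$ with $\alpha < 10\floor{\fAN(\eps)\beta}$, we have $\tfrac{1+\beta}{1+\alpha}\fAN(\eps) = \Omega(1)$, so a single direct call to \cref{cor:an} on the entire instance runs in $\Oh(\gAN(\eps)n^{1+\eps}\log(1/\delta))$ time with $\Oh(n\log(1/\delta))$ queries, again within budget.

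In the main case ($\beta \ge 1$ and $\alpha \ge 10\floor{\fAN(\eps)\beta}$), I apply \cref{thm:our} with $\phi := \floor{\fAN(\eps)\beta}$, implementing each oracle call via \cref{cor:an} with error parameter $\delta_0 := \alpha/(Cn)$ for a sufficiently large constant $C$. A union bound over the $\Oh(n/\alpha)$ oracle calls in a single reduction run ensures that all succeed with probability at least $1-1/C$. In the YES case ($\ED \le \beta$), since no block exceeds threshold $\phi$, the reduction correctly returns YES whenever every oracle is correct, giving error at most $1/C$. In the NO case ($\ED > \alpha$), the reduction returns the wrong answer only if either no NO witness is sampled (probability $\le 1/e$ by the analysis of \cref{thm:our}) or some oracle errs on a sampled NO witness, yielding error at most $1/e + 1/C$. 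For $C$ large enough, both errors are at most $\tfrac12 - \eta$ for a fixed constant $\eta>0$. Repeating the reduction $r := \Theta(\log(1/\delta))$ times with fresh randomness and returning the majority answer then amplifies the success probability to $1-\delta$ by a standard Chernoff bound.

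For the complexity, one reduction run queries at most $\Oh(\phi n \log n/\alpha)$ positions (by \cref{thm:our}), so the total over $r$ runs is $\Oh(\fAN(\eps)\beta n \log n \log(1/\delta)/\alpha)$ queries, matching the claim. Each oracle call on a substring of length $L_i$ runs in time $\Oh(\gAN(\eps)L_i^{1+\eps}\log(1/\delta_0)) = \Oh(\gAN(\eps)L_i^{1+\eps}\log n)$, using $\log(1/\delta_0) = \Oh(\log(n/\alpha)) = \Oh(\log n)$. The elementary bound $\sum_i L_i^{1+\eps} \le n^\eps \sum_i L_i \le n^\eps \cdot \Oh(\phi n \log n/\alpha)$ applied per reduction run, and then multiplied by $r$, gives total oracle time $\Oh(\gAN(\eps)\fAN(\eps)\beta n^{1+\eps}\log^2 n \log(1/\delta)/\alpha)$, matching the claim. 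The $\Oh(n/\alpha)$ overhead of the reduction itself per run is absorbed.

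The principal subtlety is the choice of $\delta_0$: it must be small enough ($=\Theta(\alpha/n)$) that a union bound yields constant success probability per reduction run, yet large enough that $\log(1/\delta_0) = \Oh(\log n)$ so that the claimed $\log^2 n$ factor in the running time is not exceeded. Amplifying at the reduction level via majority voting over $r = \Theta(\log(1/\delta))$ runs, rather than inside each individual oracle call, is essential: a naive per-call amplification to error $\Theta(\delta \alpha/n)$ would inject an extra $\log(1/\delta)$ factor into the time bound.
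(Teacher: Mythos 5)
Your proposal matches the paper's proof: the same three-way case split ($\beta = 0$ via \cref{fct:hm}; small $\alpha$ via a direct call to \cref{cor:an}; otherwise \cref{thm:our} with $\phi = \floor{\fAN(\eps)\beta}$ and the oracle from \cref{cor:an} at error $\Theta(\alpha/n)$), followed by majority amplification over $\Theta(\log\tfrac1\delta)$ independent runs of the whole reduction. The complexity accounting via $\sum_i L_i^{1+\eps}\le n^\eps\sum_i L_i$ and the observation that per-call amplification would inject an extra $\log\tfrac1\delta$ factor are both consistent with the paper's reasoning.
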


\begin{proof}
If $\beta = 0$, then we use the algorithm of \cref{fct:hm},
which takes time $\Oh(\frac{n}{1+\alpha})$. 
If $\floor{\fAN(\eps)\beta}\le \alpha < 10\fAN(\eps)\beta$, we use the algorithm of \cref{cor:an},
which takes $\Oh(\gAN(\eps)n^{1+\eps})$ time and $\Oh(n)$ queries. 
In the remaining case of $0 < 10 \fAN(\eps)\beta \le \alpha$,
we use the reduction of \cref{thm:our} with $\phi = \floor{ \fAN(\eps)\beta }$
and the oracle implemented using \cref{cor:an}. 
The oracle is randomized, so we need to set its error probability to $\Theta(\frac{\alpha}{n})$
so that all oracle calls are correct with large constant probability.
An oracle call involving a pair of strings of length $m$ takes time
$\Oh(\gAN(\eps) m^{1+\eps} \log n)
= \Oh(\gAN(\eps) m \cdot n^{\eps} \log n)$,
and the total length of all strings involved in the oracle calls is
$\Oh(\frac{\beta}{\alpha}\cdot \fAN(\eps) n\log n)$;
therefore, the total running time is
$\Oh(\frac{\beta}{\alpha}\cdot \fAN(\eps)\gAN(\eps)\cdot n^{1+\eps}\log^2 n)$.
This completes the algorithm's description for $\delta > \frac{1}{e}$.
For general $\delta>0$,
we amplify the success probability by taking the majority answer among
$\Oh(\log\frac{1}{\delta})$ independent repetitions of the entire algorithm.
\end{proof}

Next, we observe that the running time can be expressed as $\hat{\Oh}(\frac{1+\beta}{1+\alpha}\cdot n)$
as long as $\frac{\alpha}{\beta}=\omega(1)$.

\begin{corollary}\label{cor:simple}
Let $s:\Zz \to \Zz$ be a function such that $\lim_{x\to \infty}\frac{s(x)}{x}=\infty$.
There exists a randomized algorithm that solves any instance of \GEDa with $\alpha \ge s(\beta)$
in time $\hat{\Oh}(\frac{1+\beta}{1+\alpha}\cdot n)$ correctly with high probability.
\end{corollary}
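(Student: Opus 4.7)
The plan is to reduce to \cref{cor:our} with a carefully chosen accuracy parameter $\eps=\eps(n)$ going to $0$ slowly with $n$, and to fall back to the exact algorithm \cref{thm:lv} whenever the precondition $\alpha \ge \lfloor\fAN(\eps)\beta\rfloor$ of \cref{cor:our} is not met. The growth condition on $s$ ensures that this fallback occurs only for instances with small $\beta$ (so Landau--Vishkin's $\Oh(n+\beta^2)$ running time is affordable), while in the main regime, having $\eps(n)\to 0$ keeps the subpolynomial overhead of \cref{cor:our} under control.

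First I would define $U\colon\Zp\to\R_+$ by $U(x):=\inf_{y\ge x} s(y)/y$. By the hypothesis on $s$, $U$ is non-decreasing and tends to infinity. Next, by a diagonalization argument, I would construct a function $\eps\colon\Zp\to\R_+$ with $\eps(n)\to 0$ as $n\to\infty$ satisfying both of the following simultaneously: (P1) $\fAN(\eps(n))\gAN(\eps(n))\,n^{\eps(n)}\log^3 n = n^{o(1)}$, and (P2) $\fAN(\eps(n))\le U(\sqrt n)$. Concretely, for each integer $k$ one picks a threshold $n_k$ large enough so that $\fAN(1/k)\le U(\sqrt{n_k})$ (achievable because $U\to\infty$) and $\fAN(1/k)\gAN(1/k)\log^3 n_k \le n_k^{1/k}$ (achievable because the LHS is a constant in $k$ while $n^{1/k}$ grows polynomially), and sets $\eps(n)=1/k$ for $n\in[n_k,n_{k+1})$ with the $n_k$ strictly increasing.

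Given this $\eps(n)$, the algorithm is as follows. If $\alpha \ge \fAN(\eps(n))\,\beta$, invoke \cref{cor:our} with parameters $\eps=\eps(n)$ and $\delta=n^{-3}$; this branch also handles $\beta=0$ via the internal fallback of \cref{cor:our} to \cref{fct:hm}. By (P1), the resulting running time collapses to $\hat{\Oh}(\tfrac{1+\beta}{1+\alpha}\,n)$. Otherwise, invoke \cref{thm:lv}. To verify that \cref{thm:lv} meets the target in this fallback case, combine $\alpha < \fAN(\eps(n))\beta$ with $\alpha \ge s(\beta)\ge U(\beta)\cdot\beta$ to get $U(\beta)<\fAN(\eps(n))\le U(\sqrt n)$ by (P2); the monotonicity of $U$ then forces $\beta<\sqrt n$, so \cref{thm:lv} runs in time $\Oh(n+\beta^2)=\Oh(n)$. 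This lies within $\hat{\Oh}(\tfrac{1+\beta}{1+\alpha}\,n)$ because $\tfrac{1+\alpha}{1+\beta}\le \Oh(\fAN(\eps(n)))=n^{o(1)}$ by (P1).

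The main obstacle is the diagonalization: we must choose $\eps(n)\to 0$ slowly enough that $\fAN(\eps(n))$ stays below $U(\sqrt n)$ (so that the fallback branch enforces $\beta<\sqrt n$), yet fast enough that $n^{\eps(n)}$ remains subpolynomial in $n$ while $\fAN(\eps(n))\gAN(\eps(n))$ does not blow up. We have no explicit control over the rates of $\fAN$, $\gAN$, or $U$, but the separation between the constant $\fAN(1/k)\gAN(1/k)$ and the polynomial $n^{1/k}$ (for fixed $k$ and growing $n$), combined with $U(\sqrt n)\to\infty$, makes the diagonalization go through.
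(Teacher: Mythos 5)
Your proof is correct and follows essentially the same route as the paper's: both diagonalize to construct $\eps(n)\to 0$ slowly enough that the subpolynomial overhead of \cref{cor:our} stays $n^{o(1)}$ while $\fAN(\eps(n))$ stays below the eventual growth rate of $s$, then case-split on whether $\alpha\ge\fAN(\eps(n))\beta$, falling back to \cref{thm:lv} otherwise. The only cosmetic differences are that the paper bounds $\beta$ by $\log n$ rather than your $\sqrt n$ in the fallback branch, and reaches the contradiction directly from $\fAN(\eps(n))\le s(x)/x$ for $x>\log n$ instead of introducing the auxiliary function $U$ and its monotonicity.
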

\begin{proof}
Observe that there exists a function $\eps: \Zz \to \R_+$ such that $\eps(n)=o(1)$, $\fAN(\eps(n)) \le \log n$, $\gAN(\eps(n))\le \log n$, and $\fAN(\eps(n))\le \frac{s(x)}{x}$ for all $n\in \Zz$ and $x> \log n$.
If $\alpha \ge \fAN(\eps(n))\beta$,
we use \cref{cor:our} with $\eps = \eps(n)$,
which takes
$\Ohtilde(\frac{1+\beta}{1+\alpha}\fAN(\eps(n))\gAN(\eps(n))n^{1+\eps(n)})=\hat{\Oh}(\frac{1+\beta}{1+\alpha}n)$ time.
If $\alpha < \fAN(\eps(n))\beta$ and $\beta \le \log n$,
we use \cref{thm:lv}, which takes
$\Oh(n+\beta^2)=\Oh(n)=\Oh(\frac{1+\alpha}{1+\beta}\cdot \frac{1+\beta}{1+\alpha}\cdot n)
=\Oh(\fAN(\eps(n))\cdot \frac{1+\beta}{1+\alpha}\cdot n)=\hat{\Oh}(\frac{1+\beta}{1+\alpha}\cdot n)$ time.
In the remaining case of $\alpha < \fAN(\eps(n))\beta$ and $\beta > \log n$,
we have $\alpha < \fAN(\eps(n))\beta \le s(\beta)$,
which contradicts our assumption $\alpha \ge s(\beta)$.
\end{proof}

\section{Improved Query Complexity}\label{sec:qc}
In this section, we improve the query complexity of the algorithm described in~\cref{sec:simpleAlg}, solving the \GEDa{\alpha}{\beta} with query complexity $\Ohtilde\left(\frac{n\sqrt \beta}{\alpha}\right)$ provided that $\alpha \gg \beta$.

\subsection{Overview}
Recall that \cref{thm:our} provides a randomized reduction from the \GEDa problem to the \GEDa{\phi}{\beta} problem.
We used $\ED(X_{i,p},Y_{i,p})\le \ED(X,Y)$ to justify the correctness for YES instances: The input can be safely rejected as soon as we discover that  $\ED(X_{i,p},Y_{i,p})> \beta$ holds for some level $p$ and index $i\in [0\dd m_p)$. 
If we were guaranteed that $\ED(X_{i,p},Y_{i,p})\le \psi$ holds with good probability (over random $i\in [0\dd m_p)$) for some $\psi < \beta$, then we could use an oracle for the \GEDa{\phi}{\psi} problem
instead of the \GEDa{\phi}{\beta} problem, i.e., our reduction would produce instances of the \GED problem with a larger gap.
Unfortunately, this is not the case in general.
In particular, if $X$ consists of distinct characters and $Y$ is obtained by moving the last $s\le \frac12n$ characters of $X$ to the front, then $\ED(X_{i,p},Y_{i,p})=\ED(X,Y)=2s$ holds for all levels $p\ge \log (2s)$ and indices $i\in [0\dd m_p)$.
Nevertheless, in this example, the optimal alignment between $X_{i,p}$ and $Y_{i,p}$ is very simple:
up to the shift by $s$ characters (which effectively removes the last $s$ characters of $X_{i,p}$ and the first $s$ characters of $Y_{i,p}$), the two blocks are perfectly aligned.
In general, for a fixed alignment $\mathcal{A}$ of $X$ and $Y$, the induced alignment of $X_{i,p}$ and $Y_{i,p}$
performs the edits that $\A$ would make on $X_{i,p}$ and $Y_{i,p}$,
and the only effect of edits that $\A$ makes outside these blocks is that some leading and trailing characters of  $X_{i,p}$ and $Y_{i,p}$ need to be deleted (because $\A$ aligns them with characters outside the considered blocks). 
Thus, in a YES-instance, for a random $i\in [0\dd m_p)$, we always see up to $\beta$ edits between $X_{i,p}$ and $Y_{i,p}$, but in expectation only $\frac{\beta}{m_p}$ of these edits cannot be attributed to a \emph{shift} between $X_{i,p}$ and $Y_{i,p}$. This motivates the following notion.

\begin{definition}
For two strings $X,Y\in \Sigma^*$ and a threshold $\beta\in \Zz$,
define the $\beta$-\emph{shifted edit distance} $\ED_{\beta}(X,Y)$ as
\[
   \min\left(\bigcup_{\Delta=0}^{\min(|X|,|Y|,\beta)}\big\{\ED(X[\Delta\dd |X|),Y[0\dd |Y|-\Delta)), \ED(X[0\dd |X|-\Delta), Y[\Delta\dd |Y|))\big\}\right).\]
Note that $\ED_{\beta}(X,Y)\le \ED(X,Y)\le \ED_{\beta}(X,Y)+2\beta$ holds for every $\beta\in \Zz$.
\end{definition}

As argued above, the YES-instances of \GEDa satisfy $\Exp_i[\ED_{\beta}(X_{p,i},Y_{p,i})] \le \frac{\beta}{m_p}$.
Given that we sample blocks with rate $\rho$, we expect to see $\Oh(1)$ blocks with $\ED_{\beta}(X_{p,i},Y_{p,i}) > \psi$ 
if we appropriately set $\psi = \Thetatilde(\rho \beta)$. Moreover, this statement is also true with high probability. 
Furthermore, the argument in the proof of \cref{thm:our} can be strengthened to prove that, in a NO-instance, with high probability, 
we see $\Omega(1)$ blocks with $\ED(X_{p,i},Y_{p,i}) > 3\phi$.
Thus, instead of using an oracle for the \GEDa{\phi}{\beta} problem,
we can use an oracle for the \SEDa{\phi}{\beta}{\psi} problem defined as follows.

\begin{problem}[\SEDa]
    Given strings $X,Y\in \Sigma^n$ and integer thresholds $\alpha\ge \beta \ge \gamma \ge 0$, return
    YES if $\ED_{\beta}(X,Y)\le  \gamma$,
    NO if $\ED(X,Y) > 3\alpha$,
    and an arbitrary answer otherwise.   
\end{problem}

The idea to separate the shift from the ``local'' edits originates from~\cite{BEKMRRS03},
but they were only able to solve the \GEDa problem for $\alpha = \Omega(n)$.
Combining their insight into our reduction of \cref{thm:our}, we can handle a much wider range of parameters.

Similarly to~\cite{BEKMRRS03}, our algorithm is recursive in nature, with $\beta$ decreased in each level (until it reaches 0). 
There is a key difference, though: 
They reduce the \SED problem to a more general problem,
which becomes even more complicated in subsequent recursion levels. 
Our new insight is that, surprisingly, the \SED problem
can be reduced back to (multiple instances of) \GED.\@
This yields an algorithm with a much cleaner structure,
and furthermore improves the query complexity
because all these instances operate on relatively few different input strings, 
which can be easily exploited due to the non-adaptive nature of our approach.
In fact, this query complexity is optimal (up to $\log^{\Oh(1)}(n)$ terms) for non-adaptive algorithms,
as indicated by our lower bound, which generalizes the one in~\cite{BEKMRRS03}.

In this section, we present a solution that achieves this optimal query complexity but does not significantly improve the running time compared to \cref{sec:simpleAlg}.
The latter issue is addressed in \cref{sec:faster}, 
where we exploit dependencies between the \SED instances produced throughout the recursive calls. 
Specifically, we provide a more efficient implementation for the batched version of the \SEDa{\alpha}{\beta}{0} problem
arising at the lowest level of our recursion. Moreover, we carefully adjust the parameters 
at the three lowest levels of recursion so that they produce batches with desirable properties.
Up to logarithmic factors, our time bound matches that of~\cite{BEKMRRS03}, 
but the latter is valid only for $\alpha = \Omega(n)$. 


\subsection{From \GED to \SED}\label{sec:gs}

Below, we reduce the \GEDa problem to the \SEDa{\phi}{\beta}{\psi} problem,
where $\phi\ge \beta$ can be adjusted and  $\psi$ is set to $\Ohtilde(\frac{\beta\phi}{\alpha})$.
Our immediate application in \cref{sec:baseline} uses $\phi=\beta$, but subsequent speedups in \cref{sec:faster} sometimes require $\phi \gg \beta$.
\begin{lemma}\label{lem:gs}
There exists a randomized reduction that, given a parameter $\phi\in \Zp$
and an instance of \GEDa satisfying $\phi \ge \beta \ge \psi:=\floor{\frac{112\beta\phi \ceil{\log n}}{\alpha}}$, solves the instance using $\Oh(\frac{n}{\alpha})$ non-adaptive calls to an oracle for \SEDa{\phi}{\beta}{\psi} involving substrings of total length $\Oh(\frac{\phi n\log n}{\alpha})$.
The reduction costs $\Oh(\frac{n}{\alpha})$ time,
does not access the input strings,
and errs with probability at most~$\frac{1}{e}$.
\end{lemma}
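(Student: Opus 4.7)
The plan is to rerun the multi-level sampling reduction of \cref{thm:our} essentially verbatim, but replacing every \GEDa{\phi}{\beta} oracle call with a \SEDa{\phi}{\beta}{\psi} oracle call. That is, for each level $p\in [\lceil\log\phi\rceil\dd\lfloor\log(\rho n)\rfloor]$, I would draw $\ceil{\rho m_p}$ indices $i\in\fr{0}{m_p}$ uniformly at random and query the \SEDa{\phi}{\beta}{\psi} oracle on $(X_{p,i},Y_{p,i})$, where $\rho$ is chosen to be a large enough constant multiple of $\phi/\alpha$ (e.g., $\rho = \lceil 40\phi/\alpha\rceil$). The algorithm returns YES iff every oracle call returns YES. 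Since neither the sampling rate nor the level range has changed qualitatively, the running-time, query-count, and total-substring-length bounds are inherited from the complexity analysis in \cref{thm:our}.

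For correctness in the NO case ($\ED(X,Y)>\alpha$), I would apply \cref{lem:key} with threshold $\tau:=3\phi$, since the \SEDa oracle is forced to return NO precisely when $\ED(X_{p,i},Y_{p,i})>3\phi$. With $B_p:=\{i:\ED(X_{p,i},Y_{p,i})>3\phi\}$, the lemma yields $\sum_p |B_p| > \alpha/(6\phi)$. Subtracting the at most $4/\rho$ blocks residing at too-high levels ($p>\log(\rho n)$, whose total block count is bounded as in \cref{thm:our}) still leaves $\Omega(1/\rho)$ bad blocks in the sampled range; hence the probability that every oracle call returns YES is at most $\exp(-\rho\sum_p |B_p|)\le 1/e$ once the hidden constant in $\rho$ is chosen sufficiently large.

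For correctness in the YES case ($\ED(X,Y)\le\beta$), the heart of the argument is the inequality
\[
\sum_{i\in\fr{0}{m_p}}\ED_\beta(X_{p,i},Y_{p,i}) \le 2\beta \qquad \text{for every level } p.
\]
I would prove this by fixing an optimal alignment $\A$ of $X$ and $Y$ of cost $\le\beta$. For each $i$, $\A$ aligns $X_{p,i}$ with some window $Y[\ell_i\dd r_i)$ at cost $e_i$, where $\ell_i=i\cdot 2^p+\Delta_i$, $|\Delta_i|\le\beta\le\phi\le 2^p$ (so $\Delta_i$ is a valid shift in the definition of $\ED_\beta$), $|r_i-(\ell_i+2^p)|\le e_i$, and $\sum_i e_i\le\beta$. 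Plugging $\Delta:=|\Delta_i|$ (with the sign direction determined by $\Delta_i$) into the definition of $\ED_\beta(X_{p,i},Y_{p,i})$, I would take the $\A$-induced sub-alignment restricted to the first $2^p-|\Delta_i|$ characters of $X_{p,i}$ and then trim up to $e_i$ extra $Y$-characters to account for the length mismatch between $Y[\ell_i\dd r_i)$ and the required window, obtaining $\ED_\beta(X_{p,i},Y_{p,i})\le 2e_i$. Summing over $i$ gives the claim. Markov's inequality then bounds the number of blocks at each level with $\ED_\beta>\psi$ by $2\beta/\psi$, so the probability that some sampled block is bad is at most $\sum_p \rho m_p\cdot 2\beta/(\psi m_p)=\Oh(\rho\beta\log n/\psi)$; with $\psi=\lfloor 112\beta\phi\lceil\log n\rceil/\alpha\rfloor$ and $\rho$ a suitable constant times $\phi/\alpha$, this is strictly below $1/e$, so all oracle calls return YES with probability $\ge 1-1/e$.

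The main obstacle is the YES-case shift argument described above: one must produce an explicit $\Delta$ admissible in the definition of $\ED_\beta(X_{p,i},Y_{p,i})$ and carefully account for the length mismatch between the $\A$-aligned window $Y[\ell_i\dd r_i)$ and the shifted window matching $|X_{p,i}|-|\Delta|$. Once the bound $\ED_\beta(X_{p,i},Y_{p,i})=\Oh(e_i)$ is in hand, the rest is a routine Markov-plus-union-bound calculation that closely parallels \cref{thm:our}.
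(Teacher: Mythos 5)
Your multi-level sampling skeleton, the choice $\tau=3\phi$ for invoking \cref{lem:key}, and the per-level bound $\sum_i \ED_\beta(X_{p,i},Y_{p,i}) \le 2\beta$ for YES instances all match the paper's reasoning (the latter is the paper's \cref{clm:gp}, via essentially the same argument; the paper is more careful with the edge case where the $\A$-induced window $Y'_{p,i}$ has length at most $\beta$, and uses $\psi+1$ rather than $\psi$ in the Markov step so that $\psi=0$ is handled).

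However, the acceptance rule ``return YES iff every oracle call returns YES'' does not work with the stated $\psi=\lfloor 112\beta\phi\lceil\log n\rceil/\alpha\rfloor$, and this is not a matter of tuning the hidden constant in $\rho=c\phi/\alpha$. The two constraints collide. In the NO case, \cref{lem:key} with $\tau=3\phi$ gives $\sum_p|B_p| > \alpha/(6\phi) = (c/6)/\rho$, and after discarding levels above $\lfloor\log(\rho n)\rfloor$ one keeps at least $(c/6-4)/\rho$ bad blocks; to force $\exp(-\rho\sum|B_p|)\le 1/e$ you need $c\ge 30$. In the YES case, $\sum_p|G_p|\le 2\beta\lceil\log n\rceil/(\psi+1) < \alpha/(56\phi)$, so the expected number of NO answers is at most $2\rho\sum_p|G_p| < c/28$; the union bound then needs $c/28 \le 1/e$, i.e.\ $c\lesssim 10.3$. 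There is no $c$ satisfying both. More intuitively: with the given $\psi$, a YES instance still has an \emph{expected} $\Theta(1)$ oracle calls that may return NO, so ``some call returns NO'' happens with constant probability and the all-YES rule errs too often.

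The paper resolves this by accepting whenever the count $\hat b$ of NO answers is at most $5$, choosing $\rho=84\phi/\alpha$ and the level range starting at $\lceil\log(3\phi)\rceil$ so that $\Exp[\hat b]\ge 10$ on NO instances and $\Exp[\hat b]\le 3$ on YES instances, and applying Chernoff bounds in both directions. Your proof needs this thresholded decision rule (or a substantially larger $\psi$, which would prove a weaker statement than \cref{lem:gs} and would propagate worse constants through the later recursion); as written it has a genuine gap.
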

\begin{proof}
Let $\rho = \frac{84\phi}{\alpha}$ and $\tau=3\phi$.
For each level $p\in [\ceil{\log \tau}\dd \floor{\log (\rho n)}]$, our algorithm performs  $\ceil{\rho m_p}$ iterations.
In each iteration, the algorithm chooses $i\in \fr{0}{m_p}$ uniformly at random and solves an instance $(X_{p,i},Y_{p,i})$ of the \SEDa{\phi}{\beta}{\psi} problem. 
Finally, the algorithm returns YES if the number $\hat{b}$ of oracle calls with NO answers satisfies $\hat{b}\le 5$;
if $\hat{b}\ge 6$, the algorithm returns NO.\@

Let us first analyze the complexity of the algorithm.
The total length of all strings involved in the oracle calls is \[\Oh\left(\sum_{p=\ceil{\log \tau}}^{\floor{\log(\rho n)}} 2^p\cdot  \ceil{\rho m_p}\right)
=\Oh(\rho n \log n)= \Oh(\tfrac{\phi n\log n}{\alpha}).\] 
The running time and the number of oracle calls are
\[\Oh\left(\sum_{p = \ceil{\log \tau}}^{\floor{\log (\rho n)}} \ceil{\rho m_p}\right)
= \Oh\left(\sum_{p=\ceil{\log \tau}}^{\floor{\log (\rho n)}} \tfrac{\rho n}{2^p}\right)  =
\Oh\left(\tfrac{\rho n}{\tau}\right)=\Oh\left(\tfrac{n}{\alpha}\right).\]

Let us now proceed with the algorithm correctness.
If $\ED(X,Y)>\alpha$, then we use \cref{lem:key}.
Due to $\alpha \ge 112\phi\ceil{\log n} > 3\phi= \tau$, we have $\sum_{p=\ceil{\log\tau}}^{\ceil{\log n}} |B_p| > \frac{\alpha}{2\tau}=\frac{14}{\rho}$.
At the same time, $\sum_{p=\ceil{\log (\rho n)}}^{\ceil{\log n}}|B_p| \le \sum_{p=\ceil{\log n}}^{\infty}\frac{2n}{2^p} \le \frac{4}{\rho}$, so
$\sum_{p=\ceil{\log\tau}}^{\floor{\log (\rho n)}} |B_p| \ge \frac{10}{\rho}$.
For a fixed iteration at level $p\in [\ceil{\log\tau}\dd \ceil{\log n}]$, the probability that the oracle call returns NO is at least $\frac{|B_p|}{m_p}$.
Across all iterations and all levels $p \in [\ceil{\log\tau}\dd \floor{\log (\rho n)}]$, the expected number of NO answers is therefore 
\[\Exp\left[\hat{b}\right] \ge \sum_{p=\ceil{\log\tau}}^{\floor{\log(\rho n)}} \frac{|B_p|\ceil{\rho m_p}}{m_p}
\ge \rho\cdot \sum_{p=\ceil{\log\tau}}^{\floor{\log (\rho n)}}|B_p|\ge 10.\]
By the Chernoff bound, we thus have
\[\Pr\left[\hat{b} \le 5\right] = \Pr\left[\hat{b} \le (1-\tfrac12)\cdot 10\right] \le \exp\left(-\tfrac{(\frac12)^2 \cdot 10}{2}\right)  = \exp(-\tfrac{5}{4}) < \exp(-1).\]

Finally, consider the case of $\ED(X,Y)\le \beta$.
For every level $p\in [0\dd \ceil{\log n}]$, we define a set $G_p= \{i\in \fr{1}{m_p} : \ED_{\beta}(X_{p,i},Y_{p,i})>\psi\}$ corresponding to oracle calls that may return NO.\@
\begin{claim}\label{clm:gp}
We have $\sum_{p=1}^{\lceil \log n\rceil} |G_p| \le \frac{2\beta \ceil{\log n}}{\psi+1}\le \frac{3}{2\rho}$.
\end{claim}
\begin{proof}
For each level $p\in [1\dd \ceil{\log n}]$, let us consider a partition $Y=\bigodot_{i\in [0\dd m_p)} Y'_{p,i}$ such that $\ED(X,Y)=\sum_{i\in [0\dd m_p)}\ED(X_{p,i},Y'_{p,i})$.
We claim that $\ED_{\beta}(X_{p,i},Y_{p,i})\le 2\ED(X_{p,i},Y'_{p,i})$.
If $|Y'_{p,i}|\le \beta$, then $\ED_{\beta}(X_{p,i},Y_{p,i})\le \max(0,|X_{p,i}|-\beta)
\le \ED(X_{p,i},Y'_{p,i})$ and the claim holds trivially.
Thus, we assume $|Y'_{p,i}|>\beta$ and consider two cases. 
\begin{figure}[b!]
    \centering
\begin{tikzpicture}
    \draw[] (-9,0) rectangle (-6,0.5);
    
    \draw[] (-6,0) rectangle (-3,0.5);
    \draw[] (-3,0) rectangle (0,0.5);
    \draw[] (0,0) rectangle (3,0.5);

    \draw[] (-9,-1.5) rectangle (-5,-1);
   
    \draw[] (-5,-1.5) rectangle (-2.75,-1);
    \draw[] (-2.75,-1.5) rectangle (0.3,-1);
    \draw[] (0.3,-1.5) rectangle (3,-1);
    
    \draw[dashed, ->] (-6,0) -- (-5,-1);
    \draw[dashed, ->] (-6,0) -- (-6,-1);
    
    \draw[ <->] (-6,-0.9) -- (-5,-0.9);
    
    \node[] at (-5.5,-0.75) {$\Delta$};
    
    \node[] at (-10,-1.25) {$Y$};
    
    \node[] at (-10,0.25) {$X$};
    
    \node[] at (-7.5,0.25) {$X_{p,1}$};
    \node[] at (-4.5,0.25) {$X_{p,2}$};
    \node[] at (-1.5,0.25) {$X_{p,3}$};
    \node[] at (1.5,0.25) {$X_{p,4}$};

    \node[] at (-7,-1.25) {$Y'_{p,1}$};
    \node[] at (-3.75,-1.25) {$Y'_{p,2}$};
    \node[] at (-1,-1.25) {$Y'_{p,3}$};
    \node[] at (1.5,-1.25) {$Y'_{p,4}$};

    \end{tikzpicture}
    \caption{The partitions $X=\bigodot_{i\in [0\dd m_p)} X_{p,i}$ and $Y=\bigodot_{i\in [0\dd m_p)} Y'_{p,i}$. }\label{fig:good}
\end{figure}
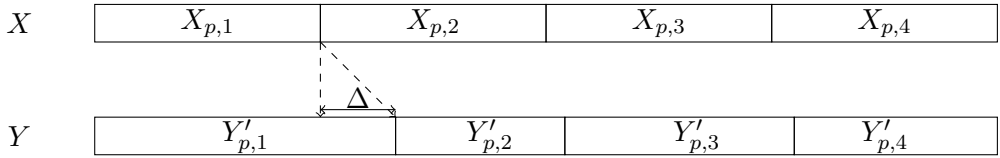

First, suppose that $Y'_{p,i}$ starts at position $i\cdot 2^p+\Delta$ for $\Delta \ge 0$.
We then have 
$\Delta \le \ED(X[0\dd i\cdot 2^p),Y[0\dd i\cdot 2^p+\Delta)) = \sum_{j=0}^{i-1}\ED(X_{p,j},Y'_{p,j})\le \ED(X,Y)\le \beta$ and thus:
\begin{align*}
\ED_{\beta}(X_{p,i},Y_{p,i}) &\le \ED(X_{p,i}[0\dd |X_{p,i}|-\Delta), Y_{p,i}[\Delta\dd |Y_{p,i}|))\\
&\le \ED(X_{p,i}[0\dd |X_{p,i}|-\Delta),Y'_{p,i}[0\dd |Y'_{p,i}|-\Delta)) + \big||Y_{p,i}|-|Y'_{p,i}|\big|\\
        &\le \ED(X_{p,i}, Y'_{p,i}) + \big||X_{p,i}|-|Y'_{p,i}|\big|\\
        &\le 2\ED(X_{p,i},Y_{p,i}).
\end{align*}
Similarly, if $Y'_{p,i}$ starts at position $i\cdot 2^p - \Delta$ for some $\Delta \ge 0$, then $\Delta \le \sum_{j=0}^{i-1}\ED(X_{p,j},Y'_{p,j})\le \ED(X,Y) \le \beta$
and
\begin{align*}
\ED_{\beta}(X_{p,i},Y_{p,i}) &\le \ED(X_{p,i}[\Delta\dd |X_{p,i}|), Y_{p,i}[0\dd |Y_{p,i}|-\Delta))\\
&\le \ED(X_{p,i}[\Delta \dd |X_{p,i}|),Y'_{p,i}[\Delta \dd |Y'_{p,i}|)) + \big||Y_{p,i}|-|Y'_{p,i}|\big|\\
&\le \ED(X_{p,i}, Y'_{p,i}) + \big||X_{p,i}|-|Y'_{p,i}|\big|\\
&\le 2\ED(X_{p,i},Y'_{p,i})    
\end{align*}
Thus, $\sum_{p=1}^{\ceil{\log n}} \sum_{i=0}^{m_p-1} \ED_\beta(X_{p,i},Y_{p,i}) \le 2\beta\ceil{\log n}$
and at most $\frac{2\beta\ceil{\log n}}{\psi+1}$ terms exceed $\psi$.
\end{proof}

For a fixed iteration at level $p\in [\ceil{\log\tau}\dd \floor{\log (\rho n)}]$, the probability that the oracle call returns NO
is at most $\frac{|G_p|}{m_p}$.
Across all iterations and levels $p\in [\ceil{\log\tau}\dd \floor{\log (\rho n)}]$, the expected number of NO answers is
therefore \[\Exp\left[\hat{b}\right]\le \sum_{p=\ceil{\log\tau}}^{\floor{\log (\rho n)}} \frac{|G_p|\ceil{\rho m_p}}{m_p}
\le 2\rho \sum_{p=\ceil{\log\tau}}^{\floor{\log (\rho n)}} |G_p| \le 3.\]
By the Chernoff bound, we thus have
\[\Pr\left[\hat{b} \ge 6\right] = \Pr\left[\hat{b} \ge (1+1)\cdot 3\right] \le \exp(-\tfrac{1^2\cdot 3}{2+1}) =\tfrac1e.\qedhere\]
\end{proof}

\subsection{From \SED to \GED}\label{sec:sg}
\begin{lemma}\label{lem:sg}
There exists a deterministic reduction that, given instance of \SEDa
satisfying $\alpha\ge 3\gamma$, solves the instance
using $\Oh\big(\frac{1+\beta}{1+\gamma}\big)$ non-adaptive calls to an oracle for \GEDa{\alpha}{3\gamma} involving $\Oh\big(\frac{\sqrt{1+\beta}}{\sqrt{1+\gamma}}\big)$ \emph{distinct} substrings. each of length at most~$n$.
The reduction takes $\Oh(\frac{1+\beta}{1+\gamma})$ time
and does not access the input strings. 
\end{lemma}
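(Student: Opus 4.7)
The plan is to realize every candidate shift $\Delta\in[0,\beta]$ from (one of the two branches of) the $\ED_\beta$ definition, up to granularity $\gamma$, by a single oracle query on a pair of equal-length windows $(X_p,Y_q)$; the windows would be drawn from two small families whose sizes match the target $\Oh(\sqrt{(1+\beta)/(1+\gamma)})$. I would parameterize $\Delta=a_Y-a_X$, where $a_X$ ranges over multiples of a ``fine'' step $\gamma$ (capped at $S\gamma$) and $a_Y$ over multiples of a ``coarse'' step $S\gamma$ (capped at $\beta$), for a judicious $S=\Theta(\sqrt{(1+\beta)/(1+\gamma)})$. All windows share the common length $\ell:=n-\beta$, giving $X_p:=X[a_X\dd a_X+\ell)$ and $Y_q:=Y[a_Y\dd a_Y+\ell)$. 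I would add a single extra $Y$-window at $a_Y=\beta$ (paired with every $X_p$) to close the sliver of shifts uncovered when $\beta$ is not a multiple of $S\gamma$, and include a mirror family (swapping the roles of $X$ and $Y$) to handle the symmetric branch.

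For the YES analysis, suppose $\ED(X[0\dd n-\Delta^*),Y[\Delta^*\dd n))\le\gamma$ for some $\Delta^*\in[0,\beta]$. I would choose a realized shift $\Delta=a_Y-a_X$ with $|\Delta-\Delta^*|\le\gamma$ and $a_Y\le\beta$, then apply the triangle inequality through the hypothesis pair---using that trimming $|\Delta-\Delta^*|$ characters from a prefix of $X$ or a suffix of $Y$ costs exactly that many edits---to get $\ED(X[0\dd n-\Delta),Y[\Delta\dd n))\le 3\gamma$. Applying \cref{fct:hered} to the window $[a_X\dd a_X+\ell)$ (which lies inside $[0,n-\Delta)$ precisely because $a_Y=\Delta+a_X\le\beta$) then extracts $\ED(X_p,Y_q)\le 3\gamma$, so the GED oracle certifies YES on this pair.

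For the NO analysis, every queried window is $X$ (or $Y$) with exactly $n-\ell=\beta$ characters deleted, and both $a_X,a_Y\le\beta$. Splitting $X=X[0\dd a_X)\cdot X_p\cdot X[a_X+\ell\dd n)$ and $Y$ analogously, then bounding the outer parts by $\max(|A|,|B|)$ before invoking \cref{fct:subadd}, I obtain $\ED(X,Y)\le|a_X-a_Y|+\beta+\ED(X_p,Y_q)\le 2\beta+\ED(X_p,Y_q)$. Combined with the integer inequality $\ED(X,Y)\ge 3\alpha+1$ and the SED hypothesis $\beta\le\alpha$, this forces $\ED(X_p,Y_q)\ge\alpha+1>\alpha$, so the oracle returns NO on every queried pair.

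The main delicate step I anticipate is selecting $S$ carefully so that the realized shift lattice $\{qS\gamma-p\gamma\}\cup\{\beta-p\gamma\}$ is $\gamma$-dense in all of $[0,\beta]$ while every $a_Y$ remains $\le\beta$ (otherwise the NO argument loses slack); this is precisely why the construction includes one extra window at $a_Y=\beta$. Counting then gives $\Oh(\sqrt{(1+\beta)/(1+\gamma)})$ distinct substrings and $\Oh((1+\beta)/(1+\gamma))$ oracle calls, with running time matching the query count since the reduction only writes down index pairs. Minor bookkeeping for edge cases ($\gamma=0$, trivially small $\beta$, $|X|\ne|Y|$, or $\ell\le 0$ where an exact algorithm already applies) does not affect this structure.
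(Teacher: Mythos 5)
Your proposal is correct and follows essentially the same approach as the paper's proof: both decompose each candidate shift $\Delta\in[0\dd\beta]$ into a coarse offset placed on one string and a fine offset placed on the other, so that only $\Oh\big(\sqrt{(1+\beta)/(1+\gamma)}\big)$ distinct windows are queried, then handle YES-instances by restricting (\cref{fct:hered}) a $\le 3\gamma$-cost alignment obtained by perturbing the true shift by at most $\gamma$, and handle NO-instances via \cref{fct:subadd} with $\le 2\beta$ extra edits absorbed by the trimmed prefix/suffix. The only cosmetic differences are which string carries the coarse versus fine steps and your use of a mirrored family for the second branch of $\ED_\beta$, where the paper instead includes start offsets congruent to both $0$ and $\beta$ modulo the coarse step.
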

\begin{proof}
Let $\xi\in [\gamma\dd \beta]$ be a parameter set later and $n'=n-\beta$. 
We use the oracle to solve several instances of the \GEDa{\alpha}{3\gamma} problem
and return YES if and only if at least one oracle call returned YES.
These instances are $(X[x\dd x+n'),Y[y\dd y+n'))$
for all $x \in [0\dd \beta]$ such that $x\equiv 0 \pmod{1+\xi}$ or $x\equiv \beta \pmod{1+\xi}$,
all $y\in [0\dd \xi]$ such that $y\equiv 0 \pmod{1+\gamma}$,
and all $y\in [\beta-\xi\dd \beta]$ such that  $y\equiv \beta \pmod{1+\gamma}$.
The number of oracle calls is at most $4\big\lceil{\frac{1+\beta}{1+\xi}}\big\rceil\big\lceil{\frac{1+\xi}{1+\gamma}\big\rceil}\le 16\cdot \frac{1+\beta}{1+\gamma}$, 
but the number of distinct strings involved in these calls is at most $2\big\lceil{\frac{1+\beta}{1+\xi}}\big\rceil+2\big\lceil{\frac{1+\xi}{1+\gamma}}\big\rceil$, which is $\Oh\big(\frac{\sqrt{1+\beta}}{\sqrt{1+\gamma}}\big)$
if we set $1+\xi = \floor{\sqrt{(1+\beta)(1+\gamma)}}$.

Suppose that $\ED_{\beta}(X,Y)\le \gamma$.
First, consider the case when $\ED_{\beta}(X,Y)=\ED(X[\Delta\dd n),\allowbreak Y[0\dd n-\Delta))$ for some $\Delta \in [0\dd \beta)$.
Let us choose the smallest $x\in [\Delta\dd \beta]$ with $x\equiv \beta \pmod{1+\xi}$
and the largest $y\in [0\dd x-\Delta]$ with $y\equiv 0 \pmod{1+\gamma}$.
Observe that \begin{align*}\ED(X[x\dd x+n'),Y[y\dd y+n')) &\le 2\gamma + \ED(X[x\dd x+n'),Y[x-\Delta \dd x-\Delta+n'))\\
&\le 2\gamma+ \ED(X[\Delta\dd n),Y[0\dd n-\Delta))\le 3\gamma.\end{align*}
Hence, the oracle call for $(x,y)$ must return YES.\@

Similarly, let us consider the case when $\ED_{\beta}(X,Y)=\ED(X[0\dd n-\Delta),\allowbreak Y[\Delta \dd n))$ for some $\Delta \in [0\dd \beta)$.
Let us choose the largest $x\in [0\dd \beta-\Delta]$ with $x\equiv 0 \pmod{1+\xi}$
and the smallest $y\in [x+\Delta \dd \beta]$ with $y\equiv \beta \pmod{1+\gamma}$.
Observe that \begin{align*}\ED(X[x\dd x+n'),Y[y\dd y+n')) & \le 2\gamma + \ED(X[x\dd x+n'),Y[x+\Delta \dd x+\Delta+n'))\\
&\le 2\gamma+ \ED(X[0\dd n-\Delta),Y[\Delta \dd n))\le 3\gamma\end{align*}
Hence, the oracle call for $(x,y)$ must return YES.\@

Next, suppose that some oracle call for $(x,y)$ returned YES.\@
This implies $\ED(X[x\dd x+n'),\allowbreak Y[y\dd {y+n'}))\le \alpha$ for some $x,y\in [0\dd \beta]$.
At the same time, we have \begin{align*}
\ED(X[0\dd x),Y[0\dd y))&\le \max(x,y)\le \beta,\qquad\text{ and}\\
\ED(X[x+n'\dd n),Y[y+n'\dd n))&\le \max(\beta-x,\beta-y) \le \beta.\end{align*}
Hence, $\ED(X,Y)\le \alpha+2\beta\le 3\alpha$ holds as claimed.
\end{proof}

\subsection{Baseline Implementation}\label{sec:baseline}

\begin{proposition}\label{prp:baseline}
There exists a non-adaptive algorithm that, given $h\in \Zz$, $\delta\in \R_+$,
and an instance of the \GEDa problem,
satisfying  $\beta < (336\ceil{\log n})^{\frac{-h}2} \alpha^{\frac{h}{h+1}}$, solves the instance 
in $\Oh\big(\frac{1+\beta}{1+\alpha}\cdot n\log^{2h} n \cdot \log\frac{1}{\delta}\cdot 2^{\Oh(h)}\big)$ time, using $\Oh\big(\frac{\sqrt{1+\beta}}{1+\alpha}\cdot n\log^{2h}n\cdot \log\frac{1}{\delta}\cdot 2^{\Oh(h)}\big)$ queries, and with error probability at most $\delta$.
\end{proposition}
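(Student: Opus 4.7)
The plan is to prove the proposition by induction on $h$. For the base case $h = 0$, the hypothesis $\beta < 1$ forces $\beta = 0$, so we solve \GEDa{\alpha}{0} by running the algorithm of \cref{fct:hm} for $\Oh(\log (1/\delta))$ independent repetitions and taking a majority vote; this matches the claimed bounds for $h = 0$.

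For the inductive step ($h \geq 1$), pick an integer $\phi$ in the interval $[\beta, \alpha^h/(\beta^h(336\ceil{\log n})^{h(h+1)/2})]$, which is nonempty precisely because of the hypothesis $\beta < (336\ceil{\log n})^{-h/2}\alpha^{h/(h+1)}$. This choice guarantees $\psi := \lfloor 112\beta\phi\ceil{\log n}/\alpha\rfloor \le \beta$ (so the precondition of \cref{lem:gs} holds) and also $3\psi < (336\ceil{\log n})^{-(h-1)/2}\phi^{(h-1)/h}$ (so each resulting \GEDa{\phi}{3\psi} sub-instance satisfies the inductive hypothesis at level $h-1$). Apply \cref{lem:gs} to reduce the input to $\Oh(n/\alpha)$ \SED instances with parameters $(\phi,\beta,\psi)$ of total substring length $\Oh(\phi n\log n/\alpha)$, then apply \cref{lem:sg} to each, producing $\Oh((1+\beta)/(1+\psi))$ calls to \GEDa{\phi}{3\psi} per \SED while involving only $\Oh(\sqrt{(1+\beta)/(1+\psi)})$ distinct substrings per \SED. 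Each sub-instance is solved by induction, with the error parameter rescaled by a union bound over all sub-calls.

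The time bound follows by summing the inductive per-call time over the $\Oh(n\beta/(\alpha\psi))$ sub-calls: the factor $(1+3\psi)/(1+\phi) = \Oh(\beta\log n/\alpha)$, the total substring length $\Oh(\phi n\log n/\alpha)$, and the inductive $\log^{2(h-1)} n$ factor multiply to $\Oh((1+\beta)/(1+\alpha) \cdot n\log^{2h} n)$, with the $2^{\Oh(h)}\log(1/\delta)$ factor absorbing error-amplification overhead. The query bound uses the same calculation but replaces the $\Oh((1+\beta)/(1+\psi))$ sub-calls per \SED with only $\Oh(\sqrt{(1+\beta)/(1+\psi)})$ distinct substrings: because the algorithm is non-adaptive, the queries on each distinct substring can be shared across all sub-calls in the same \SED that use it, saving a factor $\sqrt{(1+\beta)/(1+\psi)}$ per recursion level; telescoping across the $h$ levels yields the claimed $\Oh(\sqrt{1+\beta}/(1+\alpha) \cdot n\log^{2h} n)$ query bound. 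The main obstacle will be formalizing this shared-query savings: a naive inductive hypothesis applied independently to each sub-call overcharges the per-substring cost, so the proof needs either a strengthened inductive statement that handles a batch of sub-calls sharing substrings or a direct accounting of the distinct positions queried in $X$ and $Y$ across all recursion levels.
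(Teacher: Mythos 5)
Your proposal matches the structure of the paper's joint inductive proof of \cref{prp:baseline,prp:baseline2}: reduce \GED to \SED via \cref{lem:gs}, then \SED back to \GED via \cref{lem:sg}, recurse, and exploit the $\Oh\bigl(\sqrt{(1+\beta)/(1+\gamma)}\bigr)$ distinct-substring count in \cref{lem:sg}. Two remarks. First, a genuine slip: you claim that \emph{any} $\phi$ in the interval $\bigl[\beta,\ \alpha^h/(\beta^h(336\ceil{\log n})^{h(h+1)/2})\bigr]$ satisfies $\psi := \floor{112\beta\phi\ceil{\log n}/\alpha} \le \beta$, but this can fail at the upper end once $h\ge 2$. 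For example, with $h=2$, $\beta = L := \ceil{\log n}$, and $\alpha = 10^6 L^4$, the upper endpoint gives $\phi\approx 2.6\times 10^4\, L^3$ and $\psi\approx\floor{2.96\, L} > \beta$, violating the precondition $\phi\ge\beta\ge\psi$ of \cref{lem:gs}. The paper avoids this by fixing $\phi=\beta$ and checking $\psi < \frac13\phi < \beta$ directly.

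Second, the ``main obstacle'' you flag---formalizing the shared-query savings---is indeed the crux, and your proposal leaves it unresolved. The paper handles it by stating a companion \SED proposition (\cref{prp:baseline2}) carrying the same $\frac{\sqrt{1+\beta}}{1+\alpha}$ query bound and proving it jointly with \cref{prp:baseline} by mutual induction. In the reduction from \SED to \GED, the query count is simply (number of distinct substrings from \cref{lem:sg}) times (per-call query bound from the \GED inductive hypothesis): because the \GED algorithm is non-adaptive, running all oracle calls under one shared random seed makes each call query the same positions inside any fixed input substring, so the number of distinct queried positions is at most (distinct substrings) $\times$ (per-call queries). This is exactly why \cref{lem:sg} bounds distinct substrings, not just the number of oracle calls. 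The $\sqrt{1+\beta}$ factor then propagates through both propositions without any batched inductive statement; the batched lemmas of \cref{sec:faster} are introduced later only to speed up the running time, not to tighten the query complexity.
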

\begin{proposition}\label{prp:baseline2}
There exists a non-adaptive algorithm that, given $h\in \Zz$, $\delta\in \R_+$, and an instance of the \SEDa problem
satisfying $\gamma < \frac13(336\ceil{\log n})^{\frac{-h}2} \alpha^{\frac{h}{h+1}}$, solves the instance in $\Oh\big(\frac{1+\beta}{1+\alpha}\cdot n\log^{2h} n \cdot \log\frac{n}{\delta}\cdot 2^{\Oh(h)}\big)$ time,
using $\Oh\big(\frac{\sqrt{1+\beta}}{1+\alpha}\cdot n\log^{2h}n \cdot \log\frac{n}{\delta}\cdot 2^{\Oh(h)}\big)$ queries, and with error probability at most $\delta$.
\end{proposition}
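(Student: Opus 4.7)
The plan is to prove \cref{prp:baseline2} by a direct reduction to \cref{prp:baseline} with the same parameter $h$, via the reduction from \SED to \GED given in \cref{lem:sg}. Given an instance of \SEDa, I first apply \cref{lem:sg}, which is applicable since the hypothesis $\gamma < \frac13(336\lceil\log n\rceil)^{-h/2}\alpha^{h/(h+1)}$ in particular implies $3\gamma \le \alpha$. This produces $N := \Oh(\tfrac{1+\beta}{1+\gamma})$ non-adaptive calls to an oracle for \GEDa{\alpha}{3\gamma}, with the crucial additional feature that these calls together involve only $M := \Oh(\sqrt{(1+\beta)/(1+\gamma)})$ \emph{distinct} substrings of length at most $n$.

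Next I verify that each resulting \GEDa{\alpha}{3\gamma} instance satisfies the hypothesis of \cref{prp:baseline} with the same parameter $h$: multiplying the assumed bound on $\gamma$ by $3$ yields precisely $3\gamma < (336\lceil\log n\rceil)^{-h/2}\alpha^{h/(h+1)}$. I then invoke \cref{prp:baseline} on each of the $N$ GED instances, setting the individual error probability to $\delta' = \Theta(\delta/N)$ so that a union bound yields overall error at most $\delta$; since $N = \Oh(n)$, we have $\log \tfrac{1}{\delta'} = \Oh(\log\tfrac{n}{\delta})$, which explains the $\log(n/\delta)$ factor in the statement (versus $\log(1/\delta)$ in \cref{prp:baseline}).

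The key step, and where the $\sqrt{1+\beta}$ saving on the query side comes from, is the query accounting. Because \cref{prp:baseline} is non-adaptive and the reduction of \cref{lem:sg} itself is non-adaptive, the union over all $N$ GED calls of the queries they make depends only on the underlying $M$ distinct substrings, not on $N$: each distinct substring is sampled once, using the non-adaptive query pattern of \cref{prp:baseline}. Each such substring has length at most $n$, so the queries contributed per substring are $\Ohtilde(\tfrac{\sqrt{1+3\gamma}}{1+\alpha}\cdot n \log^{2h} n \cdot \log\tfrac{n}{\delta}\cdot 2^{\Oh(h)})$, and multiplying by $M = \Oh(\sqrt{(1+\beta)/(1+\gamma)})$ telescopes the two square roots into $\sqrt{1+\beta}$, yielding the claimed bound $\Ohtilde(\tfrac{\sqrt{1+\beta}}{1+\alpha}\cdot n\log^{2h} n\cdot \log\tfrac{n}{\delta}\cdot 2^{\Oh(h)})$ on queries. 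For time, by contrast, I cannot share work across calls: summing the per-call time $\Ohtilde(\tfrac{1+3\gamma}{1+\alpha}\cdot n\log^{2h} n\cdot \log\tfrac{n}{\delta}\cdot 2^{\Oh(h)})$ over all $N$ calls yields $\Ohtilde(\tfrac{1+\beta}{1+\alpha}\cdot n\log^{2h} n\cdot \log\tfrac{n}{\delta}\cdot 2^{\Oh(h)})$, as claimed.

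The main obstacle (and the place one must be most careful) is precisely this query-vs-time asymmetry: one must justify that distinct GED instances can genuinely re-use the same sampled characters from the shared substrings. This works only because (i) \cref{lem:sg} truly produces a small collection of distinct substrings rather than merely distinct ``aligned pairs,'' and (ii) the GED algorithm of \cref{prp:baseline} is itself non-adaptive, so its query pattern on a given substring is fixed in advance and can be executed once and reused across all the GED calls that reference it. Everything else---verifying the hypothesis on $3\gamma$, the union bound, and the elementary sum---is routine.
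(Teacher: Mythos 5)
Your proposal matches the paper's proof of \cref{prp:baseline2}: both apply \cref{lem:sg} to reduce \SEDa to \GEDa{\alpha}{3\gamma}, invoke the algorithm of \cref{prp:baseline} with the same parameter $h$ and per-call error probability $\Theta(\delta/n)$ as the oracle, and obtain the query bound by charging per distinct substring (yielding $\sqrt{(1+\beta)/(1+\gamma)} \cdot \sqrt{1+3\gamma} = \Oh(\sqrt{1+\beta})$) while charging time per call. The only difference is one of exposition: you spell out why the non-adaptive query pattern can be shared across calls that touch the same substring, a point the paper uses implicitly without comment.
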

\begin{proof}[Proof of \cref{prp:baseline,prp:baseline2}]
As for \GEDa, let us assume that $\delta > \frac{1}{e}$; in general, we amplify the success probability by repeating the algorithm $\Oh(\log\frac{1}{\delta})$ times.
If $\beta=0$ (and, in particular, $h=0$), we simply use \cref{fct:hm}.
Otherwise, we apply \cref{lem:gs} with $\phi = \beta$ using our \SEDa{\phi}{\beta}{\psi} algorithm (with parameters $h-1$ and $\Theta(\frac{1}{n})$) as the oracle.
This is valid because
\[\psi \le \tfrac{112 \beta^2 \ceil{\log n}}{\alpha} < \tfrac{112 \beta^2\ceil{\log n}}{\beta^{\frac{h+1}{h}}\cdot (336 \ceil{\log n})^{\frac{h+1}{2}}}  =
\tfrac13\cdot \left(336 \ceil{\log n}\right)^{\frac{1-h}{2}}\cdot \phi^{\frac{h-1}{h}} \le \tfrac13\cdot \phi < \beta.\]
The running time is \[
  \Oh\left(\tfrac{\phi \log n}{\alpha} \cdot \tfrac{\beta}{\phi}\cdot n\log^{2h-2}n\cdot \log n \cdot 2^{\Oh(h-1)}\right)
= \Oh\left(\tfrac{\beta}{\alpha}\cdot n\log^{2h}n \cdot 2^{\Oh(h)}\right),\]
whereas the query complexity is \[
  \Oh\left(\tfrac{\phi \log n}{\alpha} \cdot \tfrac{\sqrt{\beta}}{\phi}\cdot n\log^{2h-2}n \cdot \log n \cdot 2^{\Oh(h-1)}\right)
= \Oh\left(\tfrac{\sqrt{\beta}}{\alpha}\cdot n\log^{2h}n \cdot 2^{\Oh(h)}\right).\]

As for the \SEDa problem, we apply \cref{lem:sg} using our \GEDa{\alpha}{3\gamma} algorithm (with parameters $h$ and $\Theta(\frac{\delta}{n})$) as the oracle.
This is valid since $3\gamma < (336\ceil{\log n})^{\frac{-h}2} \alpha^{\frac{h}{h+1}} \le \alpha$.
The running time is  \[
\Oh\left(\tfrac{1+\beta}{1+\gamma} \cdot \tfrac{1+3\gamma}{1+\alpha}\cdot n\log^{2h}n \cdot \log\tfrac{n}{\delta} \cdot 2^{\Oh(h)}\right)\\
= \Oh\left(\tfrac{1+\beta}{1+\alpha}\cdot n\log^{2h}n \cdot \log\tfrac{n}{\delta} \cdot 2^{\Oh(h)}\right),\]
whereas the query complexity is 
\[
\Oh\left(\tfrac{\sqrt{1+\beta}}{\sqrt{1+\gamma}} \cdot \tfrac{\sqrt{1+3\gamma}}{1+\alpha}\cdot n\log^{2h}n \cdot \log\tfrac{n}{\delta} \cdot 2^{\Oh(h)}\right)\\
= \Oh\left(\tfrac{\sqrt{1+\beta}}{1+\alpha}\cdot n\log^{2h}n\cdot  \log\tfrac{n}{\delta} \cdot 2^{\Oh(h)}\right).\qedhere
\]
\end{proof}

\section{Faster Implementation}\label{sec:faster}
In this section, we improve the running time while preserving the query complexity behind \cref{prp:baseline}. 
The main trick is to consider a \emph{batched} version on the \GEDa and \SEDa problems:
Instances $(X_1,Y_1),\ldots,(X_q,Y_q)$ form a \emph{batch}
if $X_1=\cdots = X_q$.

\subsection{\SED for $h=0$}
\begin{lemma}\label{lem:s1}
There exists a non-adaptive algorithm that, given a parameter $\delta\in \R_+$,
and a batch of $q$ instances of \SEDa{\alpha}{\beta}{0},
solves the instances in $\Oh\big(\frac{\sqrt{q(q+\beta)}}{1+\alpha}\cdot n\log \frac{n}{\delta}\big)$ time with each answer correct with probability at least $1-\delta$. Moreover, at most $\Oh\big(\frac{1+\beta}{1+\alpha}\cdot n \log \frac{n}{\delta})$  characters of the common string $X$ are accessed.
\end{lemma}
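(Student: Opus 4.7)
The plan is to reduce each \SEDa{\alpha}{\beta}{0} instance to a batch of \GEDa{\alpha}{0} tests via \cref{lem:sg} with a tailored choice of its internal parameter $\xi$, and then solve all these tests collectively using a shared, Karp--Rabin-compressed HM-style sketch. Concretely, I would rerun the proof of \cref{lem:sg} with $\gamma=0$ and $\xi := \max\{1,\ceil{\sqrt{\beta/q}}\}$ in place of the default $\floor{\sqrt{(1+\beta)(1+\gamma)}}$; inspecting that proof, only $\xi\in[\gamma,\beta]$ is needed for correctness, and this choice yields, per instance $(X,Y_j)$, $\Oh(\beta)$ \GEDa{\alpha}{0} oracle calls involving only $\Oh(\beta/\xi)$ distinct substrings of $X$ and $\Oh(\xi)$ distinct substrings of $Y_j$. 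Crucially, the $\Oh(\beta/\xi)$ $X$-substrings are the same across the whole batch because $X$ is shared.

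Next, I implement the \GEDa{\alpha}{0} oracle via hashing. All oracle pairs consist of two substrings of the same length $n':=n-\beta$, and for equal-length strings $\ED\le\Ham$, so it suffices to distinguish $\Ham=0$ from $\Ham>\alpha$. In the spirit of \cref{fct:hm}, I fix a uniformly random set $S\sub\fr{0}{n'}$ of size $s=\Oh(\frac{n}{1+\alpha}\log\frac{n}{\delta})$ and a random Karp--Rabin hash $\phi$ with range $\poly(n/\delta)$; the sketch of a length-$n'$ substring $Z$ is $\phi((Z[i])_{i\in S})$, which is computable in $\Oh(s)$ time. Identical substrings deterministically have identical sketches, whereas a pair with Hamming distance exceeding $\alpha$ has matching sketches with probability $\Oh(\delta/\beta)$ after calibration of $s$ and the hash range. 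I insert the $\Oh(\beta/\xi)$ $X$-sketches into a hash table, then for each $Y_j$ compute its $\Oh(\xi)$ sketches and query them; YES is reported iff some query succeeds. Correctness is inherited from \cref{lem:sg} after a union bound over the $\Oh(\beta)$ pairs of a single instance, which keeps the per-answer error at most $\delta$.

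For complexity, sketching costs $\Oh((\beta/\xi)\cdot s)$ for $X$ plus $\Oh(q\xi\cdot s)$ across all $Y_j$, while the hash-table operations add $\Oh(\beta/\xi+q\xi)$ and are dominated by sketching. One checks case by case that $\beta/\xi+q\xi=\Oh(\sqrt{q(q+\beta)})$ (taking $\xi=\ceil{\sqrt{\beta/q}}$ when $q\le\beta$ and $\xi=1$ when $q>\beta$), yielding total time
\[\Oh\!\left(\sqrt{q(q+\beta)}\cdot \tfrac{n}{1+\alpha}\log\tfrac{n}{\delta}\right),\]
and the number of accesses to $X$ is $\Oh((\beta/\xi)s)\le\Oh(\beta s)=\Oh(\tfrac{1+\beta}{1+\alpha}n\log\tfrac{n}{\delta})$, as claimed. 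The main obstacle I expect is the careful re-verification of \cref{lem:sg} with the non-default $\xi$ and the calibration of $s$ and of the Karp--Rabin range so that the per-pair collision probability is $\Oh(\delta/\beta)$ using only $s=\Oh(\frac{n}{1+\alpha}\log\frac{n}{\delta})$ samples; once these are in place, the balance between sketching and lookup costs is automatic.
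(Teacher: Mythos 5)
Your proposal is correct and essentially matches the paper's own proof: both reuse the reduction of \cref{lem:sg} with a tuned $\xi$ (the paper sets $1+\xi=\lceil\sqrt{(q+\beta)/q}\rceil$, which is the same as your $\xi=\max\{1,\lceil\sqrt{\beta/q}\rceil\}$ up to constants), exploit that the $\Oh((1+\beta)/(1+\xi))$ $X$-substrings are shared across the batch, and implement the $\GEDa{\alpha}{0}$ calls by restricting all substrings to a single random sample $S$ of size $\Theta(\frac{n\log(n/\delta)}{1+\alpha})$. The only cosmetic difference is that the paper stores the fingerprints $X'[S]$ in a ternary trie and compares $Y'[S]$ against it by direct search, whereas you compress $Z[S]$ further with a Karp--Rabin hash before doing a hash-table lookup; both give the same asymptotics, though yours adds a (harmless, easily bounded) second source of error from hash collisions. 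One tiny nit: your $\xi=\max\{1,\dots\}$ violates the requirement $\xi\le\beta$ when $\beta=0$, so you would want to special-case $\beta=0$ (where $\SEDa{\alpha}{0}{0}$ degenerates to a single $\GEDa{\alpha}{0}$ instance per item of the batch).
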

\begin{proof}
    We simulate the algorithm in the proof of \cref{lem:sg}, setting $1+\xi = \big\lceil{\frac{\sqrt{q+\beta}}{\sqrt{q}}}\big\rceil$.
    For each instance, this yields $\Oh(1+\beta)$ oracle calls asking to solve the \GEDa{\alpha}{0} problem for $(X',Y')$
    with $|X'|=|Y'|=n'\le n$. The set of pairs $(X',Y')$ involved in these calls 
    can be expressed as $\mathcal{X}\times \mathcal{Y}$, where $|\mathcal{X}|=\Oh\big(\frac{1+\beta}{1+\xi}\big)=\Oh\big(\min\big(1+\beta,\sqrt{q(q+\beta)}\big)\big)$ and $|\mathcal{Y}|=\Oh(1+\xi)=\Oh\big(\frac{\sqrt{q+\beta}}{\sqrt{q}}\big)$.
    Moreover, since our algorithm is non-adaptive, the set $\mathcal{X}$ is the same for all $q$ instances.

    Recall that the reduction of \cref{lem:sg} returns YES if and only if at least one of the oracle calls returns YES.\@ To simulate implementing the calls using the algorithm of \cref{fct:hm}, we construct a random sample $S\sub [0\dd n')$ of size  $\Theta\big(\frac{n' \log\frac{n}{\delta}}{1+\alpha}\big)$.
    We build a set $\mathcal{X}_S:=\{X'[S] : X'\in \mathcal{X}\}$ and, for each $Y'\in \mathcal{Y}$, we check whether $Y'[S]\in \mathcal{X}_S$. If so, then we return YES.\@
    If processing all $Y'\in \mathcal{Y}$ is completed without a YES answer,
    then we return NO (for the given instance).\@

    If $\ED(X',Y')>\alpha$ holds for all $(X',Y')\in \mathcal{X}\times \mathcal{Y}$,
    then, by the union bound, the probability that $Y'[S]\in \mathcal{X}_S$ holds for some $Y'\in \mathcal{Y}$ is at most $\delta$.
    Thus, the algorithm returns YES with probability at most $\delta$.
    On the other hand, if $X'=Y'$ holds for some $(X',Y')\in \mathcal{X}\times \mathcal{Y}$,
    then $X'[S]=Y'[S]$, and we do return YES due to $Y'[S]\in \mathcal{X}_S$.

    If $\mathcal{X}_S$ is implemented as a ternary trie~\cite{BS79}, then its construction cost is \[\Oh\left(|\mathcal{X}|\left(\log |\mathcal{X}|+\tfrac{n \log \frac{n}{\delta}}{1+\alpha}\right)\right) = \Oh\left(|\mathcal{X}|\cdot \tfrac{n \log \frac{n}{\delta}}{1+\alpha}\right) =\Oh\left(\tfrac{\min\left(1+\beta,\sqrt{q(q+\beta)}\right)}{1+\alpha}\cdot n\log \tfrac{n}{\delta}\right).\]
    The time complexity of the second step is \[\Oh\left(|\mathcal{Y}|\left(\log|\mathcal{X}|+\tfrac{n\log \frac{n}{\delta}}{1+\alpha}\right)\right)=\Oh\left(\tfrac{\sqrt{q+\beta}}{(1+\alpha)\sqrt{q}}\cdot n\log \tfrac{n}{\delta}\right)\] per instance and $\Oh\big(\frac{\sqrt{q(q+\beta)}}{1+\alpha}\cdot n\log \frac{n}{\delta}\big)$ in total.
\end{proof}

\subsection{\GED for $h=1$}
\begin{lemma}\label{lem:g2}
    There exists a non-adaptive algorithm that, given a parameter $\delta\in \R_+$ and a batch of $q$ instances of \GEDa{\alpha}{\beta} satisfying $\beta^2 \le \frac{\alpha}{336\ceil{\log n}}$,
    solves the instances in $\Oh\big(\frac{\sqrt{q(q+\beta)}}{1+\alpha}\cdot n\log^2 n\cdot \log\frac{1}{\delta}\big)$ time with   each answer correct with probability at least $1-\delta$. Moreover, at most $\Oh\big(\frac{1+\beta}{1+\alpha}\cdot n\log^2 n \cdot \log \frac{1}{\delta})$ characters of the common string $X$ are accessed.
    \end{lemma}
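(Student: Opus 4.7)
The plan is to apply the reduction of \cref{lem:gs} with $\phi = \beta$ to the joint batch of $q$ instances. Under the hypothesis $\beta^2 \le \alpha/(336\ceil{\log n})$ (and assuming $\beta \ge 1$; the case $\beta = 0$ is dispatched directly by \cref{fct:hm}), the threshold produced by the reduction is $\psi = \floor{112\beta^2 \ceil{\log n}/\alpha} = 0$, so every oracle call is an instance of \SEDa{\beta}{\beta}{0}. Since \cref{lem:gs} uses randomness only to sample $\Oh(n/\alpha)$ level/block pairs $(p,i)$ and does not itself touch the input strings, I would share this random sampling across the whole batch: for each sampled $(p,i)$, the $q$ instances contribute oracle calls whose first argument is the common block $X_{p,i}$, so together they form a batch of $q$ \SEDa{\beta}{\beta}{0} instances with common first string, ripe for \cref{lem:s1}.

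For error control I would run the outer reduction $R = \Theta(\log (1/\delta))$ times independently, invoking \cref{lem:s1} with per-answer failure probability $\delta' = \Theta(\alpha/n)$. A union bound over the $\Oh(n/\alpha)$ oracle answers concerning a fixed instance in a single run then shows these are all correct with probability at least $5/6$; combined with the $1/e$ outer error guaranteed by \cref{lem:gs}, each run classifies a fixed instance correctly with probability strictly above $1/2$, and a majority vote across the $R$ runs yields the required per-instance error at most $\delta$.

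For the complexity, let $\rho = 84\beta/\alpha$. The total length of sampled blocks in one run is $\Oh(\rho n \log n) = \Oh(\beta n \log n/\alpha)$, since $\ceil{\rho m_p}\cdot 2^p = \Oh(\rho n)$ over $\Oh(\log n)$ levels. By \cref{lem:s1}, a batch with block length $n'$ costs $\Oh(\sqrt{q(q+\beta)}/(1+\beta) \cdot n' \log (n/\delta'))$ time and queries $\Oh(n' \log(n/\delta'))$ characters of the common $X_{p,i}$. Summing over all batches in one run — noting that blocks at a common level are disjoint, so this also bounds the distinct $X$-positions queried — and then over the $R$ repetitions, and absorbing $\log(n/\delta') = \Oh(\log n)$ under our choice of $\delta'$, the totals become $\Oh(\sqrt{q(q+\beta)}\cdot n \log^2 n\log (1/\delta)/\alpha)$ time and $\Oh(\beta n \log^2 n \log (1/\delta)/\alpha)$ $X$-character accesses. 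These match the claimed bounds using $1+\alpha = \Theta(\alpha)$, which holds whenever $\beta \ge 1$ since then $\alpha \ge 336\ceil{\log n}\ge 1$.

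The main obstacle will be orchestrating the two-layer error amplification in parallel with the batching so as not to lose log factors: choosing $\delta' = \Theta(\alpha/n)$ independent of $\delta$ keeps the per-call overhead inside \cref{lem:s1} at $\log(n/\delta') = \Oh(\log n)$, while the outer repetition absorbs only one additional factor of $\log(1/\delta)$. Conceptually, the decisive step — and the reason the batch bound improves on running the previous reduction $q$ times separately — is that the non-adaptive nature of \cref{lem:gs} lets us share the $(p,i)$ samples across all $q$ instances, so that each sampled level/block index yields a genuine batch for \cref{lem:s1} rather than $q$ independent calls.
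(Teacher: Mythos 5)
Your proposal follows the paper's proof essentially step for step: apply \cref{lem:gs} with $\phi=\beta$, use the hypothesis to get $\psi=0$, observe that non-adaptivity lets the sampled $(p,i)$ pairs be shared across the batch so each sample yields a genuine batch for \cref{lem:s1}, then sum costs and amplify by repetition; this is exactly what the paper does (the paper chooses per-call error $\Theta(1/n)$ rather than $\Theta(\alpha/n)$, but both give $\log(n/\delta')=\Oh(\log n)$). One small arithmetic slip: oracle correctness $5/6$ together with the $1/e$ error of \cref{lem:gs} gives per-run success $1-\tfrac16-\tfrac1e\approx 0.465<\tfrac12$, so the majority vote as stated does not amplify; you need the per-run oracle error to be below $\tfrac12-\tfrac1e\approx 0.132$, which is achieved by shrinking the constant in $\delta'=\Theta(\alpha/n)$ (e.g.\ so that the union-bounded oracle error is at most $\tfrac1{10}$). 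This does not change any asymptotic bound.
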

\begin{proof}
    Let us assume that $\delta > \frac{1}{e}$; in general, we amplify the success probability by repeating the algorithm $\Oh(\log\frac{1}{\delta})$ times.
    If $\beta = 0$, then we simply use \cref{fct:hm}.
    Otherwise, we apply \cref{lem:gs} with $\phi=\beta$ and the algorithm of \cref{lem:s1} (with parameter $\Theta(\frac{1}{n})$) as the oracle.
    This is valid because \[\psi = \left\lfloor{\tfrac{112\beta^2\ceil{\log n}}{\alpha}}\right\rfloor \le \left\lfloor{\tfrac{112\alpha\ceil{\log n}}{336\alpha \ceil{\log n}}}\right\rfloor=\left\lfloor{\tfrac{1}{3}}\right\rfloor=0.\]
    Since the algorithm of \cref{lem:gs} is non-adaptive, the queries remain batched.
    The total running time is
    \[\Oh\left(\tfrac{\phi \log n}{\alpha}\cdot \tfrac{\sqrt{q(q+\beta)}}{1+\phi}\cdot n\log n\right)=\Oh\left(\tfrac{\sqrt{q(q+\beta)}}{1+\alpha} \cdot n\log^2 n\right),\]
    whereas the number of accessed characters of the common string $X$ does not exceed
    \[\Oh\left(\tfrac{\phi \log n}{\alpha}\cdot \tfrac{1+\beta}{1+\phi}\cdot n\log n\right)= \Oh\left(\tfrac{1+\beta}{1+\alpha}\cdot n\log^2 n\right).\qedhere\]
\end{proof}

\subsection{\SED for $h=1$}
\begin{lemma}\label{lem:s2}
    There exists a non-adaptive algorithm that, given a parameter $\delta\in \R_+$ and a batch of $q$ instances of  \SEDa satisfying $\gamma^2 \le \frac{\alpha}{3024\ceil{\log n}}$, solves the instances in \[\Oh\left(\left(\tfrac{\sqrt{q(q+\beta)}}{1+\alpha}+\tfrac{q(1+\beta)}{1+\alpha}\cdot \sqrt{\tfrac{\log n}{1+\alpha}}\right) n\log^2n\cdot \log\tfrac{n}{\delta}\right)\] time, using $\Oh\big(\tfrac{\sqrt{q(q+\beta)}}{1+\alpha}\cdot n\log^{2}n \cdot \log\tfrac{n}{\delta}\big)$ queries, and with each answer correct with probability at least $1-\delta$.
    Moreover, at most $\Oh\big(\tfrac{1+\beta}{1+\alpha}\cdot n\log^{2}n \cdot \log\tfrac{n}{\delta}\big)$  characters of the common string $X$ are accessed.
    \end{lemma}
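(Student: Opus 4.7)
The plan is to apply the reduction of \cref{lem:sg} to each of the $q$ \SEDa instances and route the resulting \GEDa{\alpha}{3\gamma} calls into \cref{lem:g2} as a batched oracle. Because \cref{lem:sg} is non-adaptive and all $q$ instances share the common string $X$, these calls split into batches indexed by the common $X$-substring, each batch being exactly of the form required by \cref{lem:g2}. Admissibility is immediate: the YES threshold of the calls is $3\gamma$, and the constraint in \cref{lem:g2} reads $(3\gamma)^2 \le \alpha/(336\lceil\log n\rceil)$, which coincides with the hypothesis $\gamma^2 \le \alpha/(3024\lceil\log n\rceil)$.

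To arrive at the two-term time bound of the claim, I would first relax $\gamma$ upward to $\gamma' := \min(\beta,\lfloor\sqrt{\alpha/(3024\lceil\log n\rceil)}\rfloor)$, the largest value still admissible for \cref{lem:g2}. This is correctness-preserving because the YES promise of the shifted-gap problem only weakens as its third parameter grows, while the NO promise is unaffected. I would then run \cref{lem:sg} on each instance with its free parameter $\xi$ set to $\beta$, which collapses the reduction to $|\mathcal{X}|=1$ common $X$-substring and $|\mathcal{Y}|=\Oh((1+\beta)/(1+\gamma'))$ per-instance $Y$-substrings. Grouping across the $q$ instances yields a single batched \GEDa{\alpha}{3\gamma'} workload of size $q' := q\cdot|\mathcal{Y}|$, handed to \cref{lem:g2} with per-call error scaled by $\Theta(1/\mathrm{poly}(n))$ so that a union bound over the $\Oh(q(1+\beta)/(1+\gamma'))$ oracle calls preserves global success probability $1-\delta$.

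Plugging $q'$ into \cref{lem:g2}'s guarantee and decoupling $\sqrt{q'(q'+3\gamma')} \le q'+\sqrt{3\gamma'q'}$ produces the two target summands: the linear term $q'/(1+\alpha) = \Oh(q(1+\beta)/((1+\gamma')(1+\alpha)))$ becomes $\Oh\bigl(\frac{q(1+\beta)}{1+\alpha}\sqrt{\frac{\log n}{1+\alpha}}\bigr)$ by the choice of $\gamma'$, while the square-root term reduces to $\Oh(\sqrt{q(1+\beta)}/(1+\alpha))$, which is at most $\Oh(\sqrt{q(q+\beta)}/(1+\alpha))$ for $q\ge 1$. The common-$X$ access bound is inherited from \cref{lem:g2} with $\beta_g = 3\gamma' \le 3\beta$. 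For the total query count, I rely on the fact that the nested construction ultimately queries each distinct substring only at a shared random sample of size $\Oh(n\log(n/\delta)/(1+\alpha))$ coming from the trie-based \cref{lem:s1}, so the query bound reduces to a count of distinct substrings that are actually accessed, and this count matches the $\sqrt{q(q+\beta)}/(1+\alpha)$ factor of the claim rather than the larger time bound.

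The main obstacle is this parameter balance. The default \cref{lem:sg} choice $1+\xi = \lfloor\sqrt{(1+\beta)(1+\gamma')}\rfloor$ would instead yield $\Theta(\sqrt{(1+\beta)/(1+\gamma')})$ batches each of size $\Theta(q\sqrt{(1+\beta)/(1+\gamma')})$, and extracting the claimed bound from the resulting $\sum\sqrt{\cdot}$ estimate is less transparent than the single-batch route taken above. A secondary subtlety is the edge case $\beta < \sqrt{\alpha/(3024\lceil\log n\rceil)}$, in which $\gamma'=\beta$, the reduction degenerates to a single \GEDa{\alpha}{3\beta} call per instance, and only the first summand appears; the uniform choice $\gamma' = \min(\beta,\ldots)$ handles this case with no further work.
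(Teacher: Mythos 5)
Your time-bound derivation is essentially sound, but the query-complexity step has a genuine gap, and it is exactly the step where the paper does something different.

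With $\xi=\beta$ the reduction of \cref{lem:sg} produces \emph{one} $X'$-substring but $\Theta\big(\tfrac{1+\beta}{1+\gamma'}\big)$ $Y'$-substrings per input instance, so the single batch handed to \cref{lem:g2} has size $q'=\Theta\big(\tfrac{q(1+\beta)}{1+\gamma'}\big)$. The only query bound \cref{lem:g2} offers (beyond the common-$X$ access) is its time bound $\Oh\big(\tfrac{\sqrt{q'(q'+3\gamma')}}{1+\alpha}\cdot n\log^2n\cdot\log\tfrac{1}{\delta'}\big)$, which is $\Omega\big(\tfrac{q'}{1+\alpha}\cdot n\log^2 n\cdot\log\tfrac{1}{\delta'}\big)$ once $q'=\Omega(\gamma')$. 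Since $q'=\tfrac{q(1+\beta)}{1+\gamma'}$ can far exceed $\sqrt{q(q+\beta)}$ (take, say, $q=\beta=\alpha$ and $\gamma'=\Theta\big(\sqrt{\alpha/\log n}\big)$: then $q'=\Theta\big(q\sqrt{\beta\log n/\alpha}\big)=\Theta(q\sqrt{\log n}\cdot\sqrt{\beta})$, while $\sqrt{q(q+\beta)}=\Theta(q)$), your asserted query bound $\Oh\big(\tfrac{\sqrt{q(q+\beta)}}{1+\alpha}\cdot n\log^2 n\log\tfrac{n}{\delta}\big)$ does not follow. The sentence ``the query bound reduces to a count of distinct substrings that are actually accessed, and this count matches the $\sqrt{q(q+\beta)}/(1+\alpha)$ factor'' is precisely the unjustified step: the number of distinct $Y'$-substrings is $q'$, not $\sqrt{q(q+\beta)}$, and nothing in \cref{lem:s1,lem:g2} caps the per-$Y'$ access by a \emph{single} shared sample of size $\Oh\big(\tfrac{n\log(n/\delta)}{1+\alpha}\big)$ — the block-sampling of \cref{lem:gs} inside \cref{lem:g2} touches each $Y'$ at $\Theta\big(\tfrac{n\log^2 n\log(n/\delta')}{1+\alpha}\big)$ positions, and any overlap between the shifted copies $Y_j[y\dd y+n')$ within a single $Y_j$ is not quantified (and in general is not enough).

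The fix is exactly what the paper does: do \emph{not} collapse to one batch. The paper keeps $\gamma$ inflated to $\bgamma=\min\big(\beta,\lfloor\sqrt{\alpha/(3024\ceil{\log n})}\rfloor\big)$ as you do, but sets $\xi = \max\big(\bgamma,\min\big(\beta,\lfloor\bgamma\sqrt{\beta}/\sqrt{q}\rfloor\big)\big)$, yielding $\Oh(\beta/\xi)$ batches of $\Oh(q\xi/\bgamma)$ oracle calls each. This choice equalizes the two query contributions: the $X$-side contributes $\Oh\big(\tfrac{\beta}{\xi}\cdot\tfrac{\bgamma}{\alpha}\big)$ (number of batches times per-batch common-$X$ access from \cref{lem:g2}) and the $Y$-side contributes $\Oh\big(\tfrac{q\xi}{\bgamma}\cdot\tfrac{1}{\alpha}\big)$ (a single batch's worth of $Y'$-queries, reused across batches since the batches differ only in $X'$); both become $\Oh\big(\tfrac{\sqrt{q(q+\beta)}}{\alpha}\big)$, which is the claim. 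Your single-batch route ($\xi=\beta$) minimizes the $X$-side at the cost of blowing up the $Y$-side, so the balance is lost. The $\sqrt{q'(q'+3\gamma')}\le q'+\sqrt{3\gamma'q'}$ decoupling and the case split on whether $\gamma'=\beta$ are fine for the running-time analysis, which you do get right up to constants; the paper's case split on $q\lessgtr\bgamma^2/\beta$ is the analogous device but applied to $\xi$ rather than only to $\gamma'$.
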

\newcommand{\bgamma}{\bar{\gamma}}
\begin{proof}
    If $\gamma=0$, then we simply use \cref{lem:s1}.
    Consequently, we henceforth assume $\alpha\ge \beta \ge \gamma > 0$.

    In the remaining case, we proceed as in the proof of \cref{lem:sg} except that we artificially increase $\gamma$ to \[\bgamma:=\min\left(\beta, \left\lfloor{\sqrt{\tfrac{\alpha}{3024\ceil{\log n}}}}\right\rfloor\right),\]
    set $\xi = \max\big(\bgamma,\min\big(\beta, \big\lfloor{\tfrac{\bgamma\sqrt{\beta}}{\sqrt{q}}}\big\rfloor\big)\big)$, and use the algorithm of \cref{lem:g2} (with parameter $\Theta(\frac{\delta}{n})$) as the oracle; this is valid due to $(3\bgamma)^2 \le \frac{\alpha}{336\ceil{\log n}}$.
    The input instances are solved using $\Oh\big(\tfrac{\beta}{\xi}\big)$ batches of $\Oh(\tfrac{q\xi}{\bgamma})$ oracle calls,
    and these batches only differ in the strings $X'$ (common to each batch).

    If $q \le \frac{\bgamma^2}{\beta}$, then $\xi = \beta$, so the input instances are solved using $\Oh(1)$ batches of 
    $\Oh(\frac{q\beta}{\bgamma})=\Oh(\bgamma)$ oracle calls, and hence the total running time and the query complexity are
    \[
      \Oh\left(\tfrac{\sqrt{\frac{q\beta}{\bgamma}\cdot\bgamma}}{\alpha}\cdot n\log^2 n\cdot \log \tfrac{n}{\delta}\right)
    =\Oh\left(\tfrac{\sqrt{q\beta}}{\alpha}\cdot n\log^{2} n\cdot  \log \tfrac{n}{\delta}\right).
    \]
    If $q > \frac{\bgamma^2}{\beta}$, then the input instances are solved using $\Oh\big(\tfrac{\beta}{\xi}\big)$ batches of $\Theta(\tfrac{q\xi}{\bgamma})=\Omega(\tfrac{\xi \bgamma}{\beta})=\Omega(\bgamma)$ oracle calls.
    Hence, the total running time is
    \begin{multline*}
      \Oh\left(\tfrac{\beta}{\xi}\cdot \tfrac{q\xi}{\bgamma} \cdot \tfrac{1}{\alpha}\cdot n\log^2 n\cdot \log \tfrac{n}{\delta}\right)
    =\Oh\left(\tfrac{q\beta}{\alpha\bgamma}\cdot n\log^{2} n\cdot  \log \tfrac{n}{\delta}\right)\\
    =\Oh\left(\left(\tfrac{q}{\alpha}+\tfrac{q\beta}{\alpha}\cdot \sqrt{\tfrac{\log n}{\alpha}}\right)\cdot n\log^{2} n\cdot  \log \tfrac{n}{\delta}\right).
    \end{multline*}
    Moreover, the batches of oracle calls differ only by the strings $X'$ (common to each batch).
    Hence, the query complexity can be bounded as follows:
    \begin{multline*} \Oh\left(\left(\tfrac{\beta}{\xi}\cdot \tfrac{\bgamma}{\alpha}+ \tfrac{q\xi}{\bgamma} \cdot \tfrac{1}{\alpha}\right)\cdot n\log^2 n\cdot \log \tfrac{n}{\delta}\right)
      = \Oh\left(\left(\tfrac{\beta\gamma}{\alpha\beta} + \tfrac{\beta \bgamma \sqrt{q}}{\alpha \bgamma \sqrt{\beta}}+ \tfrac{q\bgamma \sqrt{\beta}}{\alpha \bgamma \sqrt{q}}+\tfrac{q\bgamma}{\alpha\bgamma}\right)\cdot n\log^2 n\cdot \log \tfrac{n}{\delta}\right) \\
      = \Oh\left(\left(\tfrac{\bgamma}{\alpha}+\tfrac{\sqrt{q \beta}}{\alpha}+\tfrac{q}{\alpha}\right)\cdot n\log^2 n\cdot \log \tfrac{n}{\delta}\right)
      =\Oh\left(\tfrac{\sqrt{q(q+ \beta)}}{\alpha}\cdot n\log^2 n\cdot \log \tfrac{n}{\delta}\right).
    \end{multline*}
    In any case, the number of accessed characters of the common string $X$ does not exceed
    \[
      \Oh\left(\tfrac{\beta}{\xi}\cdot \tfrac{\bgamma}{\alpha}\cdot n\log^2 n\cdot \log \tfrac{n}{\delta}\right)
      =\Oh\left(\tfrac{\beta}{\alpha}\cdot  n\log^2 n\cdot  \log \tfrac{n}{\delta}\right).\qedhere
    \]
\end{proof}

\subsection{\GED for $h=2$}
\begin{lemma}\label{lem:g3}
    There exists a non-adaptive algorithm that, given a parameter $\delta\in \R_+$ and a batch of $q$ instances of \SEDa satisfying $\beta \le \frac{\alpha^{2/3}}{336\ceil{\log n}}$, solves the instances in 
    \[\Oh\left(\left(\tfrac{\sqrt{q(q+\beta)}}{1+\alpha}+\tfrac{q(1+\beta)^2}{(1+\alpha)^2}\cdot \log^2 n\right)\cdot n\log^4 n \cdot \log\tfrac{1}{\delta}\right)\] time, using $\Oh\big(\frac{\sqrt{q(q+\beta)}}{1+\alpha}\cdot n\log^4 n \cdot \log\frac{1}{\delta}\big)$ queries, and with each answer correct with probability at least $1-\delta$.     Moreover, at most $\Oh\big(\tfrac{1+\beta}{1+\alpha}\cdot n\log^{4}n \cdot \log\tfrac{1}{\delta}\big)$  characters of the common string $X$ are accessed.
    \end{lemma}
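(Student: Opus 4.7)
The plan is to climb one more level of the recursion begun in \cref{lem:g2,lem:s2}: apply the \GED-to-\SED reduction of \cref{lem:gs} to each of the $q$ input instances, and implement the resulting \SED oracle using \cref{lem:s2}. To preserve batching, I simulate \cref{lem:gs} with randomness shared across all $q$ instances, so that the induced sequence of sample pairs $(p,i)$ is identical everywhere; since $X$ is common to the batch, each sample yields a single sub-batch of $q$ \SEDa{\phi}{\beta}{\psi} instances on the common block $X_{p,i}$. As in \cref{prp:baseline}, I first reduce to $\delta > \frac{1}{e}$ by repeating the whole algorithm $\Oh(\log\frac{1}{\delta})$ times, and I dispose of the $\beta = 0$ case directly with \cref{fct:hm}.

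The key novelty, flagged in \cref{sec:qc} as ``$\phi \gg \beta$'', is the choice $\phi := \floor{\alpha^{2}/(C\beta^{2}\ceil{\log n}^{3})}$ with $C := 112^{2}\cdot 3024$. Writing out $\psi = \floor{112\beta\phi\ceil{\log n}/\alpha}$ from \cref{lem:gs} and substituting, one checks that $\psi^{2} \le \phi/(3024\ceil{\log n})$, which is exactly the precondition of \cref{lem:s2} applied to \SEDa{\phi}{\beta}{\psi}. Meanwhile the hypothesis $\beta \le \alpha^{2/3}/(336\ceil{\log n})$ rewrites as $\beta^{3}\le \alpha^{2}/(C\ceil{\log n}^{3})$ (using the numerical coincidence $336^{3} = 112^{2}\cdot 3024 = C$), which forces $\phi \ge \beta$ so that \cref{lem:gs} also applies. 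I set the per-oracle-call error probability of \cref{lem:s2} to $\Theta(1/n)$ and union-bound over the $\Oh(n/\alpha)$ calls per input instance, preserving \cref{lem:gs}'s own constant error.

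The complexity follows by summing \cref{lem:s2}'s per-batch cost over all sub-batches. At level $p$ there are $\Oh(\rho m_{p}) = \Oh(\rho n/2^{p})$ sub-batches of length $2^{p}$, with $\rho = 84\phi/\alpha$; a per-level aggregate thus picks up a factor $\rho n$, and summing across the $\Oh(\log n)$ levels adds one more logarithm. The first term $\sqrt{q(q+\beta)}/(1+\phi)$ of \cref{lem:s2} integrates to $\Oh(\sqrt{q(q+\beta)}/(1+\alpha)\cdot n\log^{4}n)$, independently of the exact $\phi$; this controls the query complexity and the dominant part of the running time, and the $X$-access count similarly integrates to $\Oh((1+\beta)/(1+\alpha)\cdot n\log^{4}n)$. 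The purely-time second term $q(1+\beta)/(1+\phi)\sqrt{\log n/(1+\phi)}$ integrates to $\Oh(q(1+\beta)\sqrt{\log n}/(\sqrt{\phi}\,(1+\alpha))\cdot n\log^{3}n)$; substituting $\sqrt{\phi} = \Theta(\alpha/(\beta\ceil{\log n}^{3/2}))$ converts it to $\Oh(q(1+\beta)^{2}\log^{2}n/(1+\alpha)^{2}\cdot n\log^{4}n)$, matching the second term of the statement. Because \cref{lem:s2}'s query bound only contains the first term, no analogous blowup enters the query count.

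The main obstacle is the delicate tuning of $\phi$: it must be simultaneously large enough for the second term of \cref{lem:s2} to collapse to the claimed $(1+\beta)^{2}/(1+\alpha)^{2}$ shape, yet small enough that \cref{lem:s2}'s precondition $\gamma^{2}\le \alpha/(3024\ceil{\log n})$ still holds for $\gamma = \psi$. These two constraints happen to meet (up to constants) at the chosen $\phi$, which is why the boundary $\beta \le \alpha^{2/3}/(336\ceil{\log n})$ appears in the hypothesis: it is precisely the statement that, at this critical $\phi$, we still have $\phi \ge \beta$ and hence \cref{lem:gs} remains applicable. The rest of the argument is routine bookkeeping of the $\log^{\Oh(1)}n$ and $\log\frac{1}{\delta}$ factors through the three nested reductions.
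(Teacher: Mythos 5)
Your overall route is the paper's: pick $\phi=\lfloor\alpha^2/(\beta^2(336\lceil\log n\rceil)^3)\rfloor$, feed each of the $q$ batched instances through the \GED-to-\SED reduction of \cref{lem:gs}, and implement the oracle with \cref{lem:s2}; the arithmetic $336^3=112^2\cdot3024$, the check $\psi^2\le\phi/(3024\lceil\log n\rceil)$, and the final cost accounting all match the paper. However, there is a genuine gap in your handling of small $\beta$. You only peel off $\beta=0$ via \cref{fct:hm} and then apply \cref{lem:gs} for all $\beta\ge1$, but \cref{lem:gs} (and the very definition of \SEDa) requires $\phi\ge\beta\ge\psi$, and you never verify $\psi\le\beta$. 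That inequality is \emph{not} automatic: your chosen $\phi$ gives $\psi\le\alpha/(3\beta(336\lceil\log n\rceil)^2)$, which can exceed $\beta$ once $\alpha\gtrsim\beta^2\log^2 n$, a regime fully compatible with the hypothesis $\beta\le\alpha^{2/3}/(336\lceil\log n\rceil)$. The paper's proof closes this by first dispatching the whole range $\beta^2\le\alpha/(336\lceil\log n\rceil)$ to \cref{lem:g2}, so that in the remaining case $\alpha<336\beta^2\lceil\log n\rceil$, which is exactly what makes $\psi<\beta$ go through (this is the step $\psi\le\alpha/(3\beta(336\lceil\log n\rceil)^2)<336\beta^2\lceil\log n\rceil/(3\beta(336\lceil\log n\rceil)^2)<\beta$). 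You need the same case split; handling only $\beta=0$ is insufficient.

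As a minor arithmetic note, your intermediate expression for the second (time-only) term, $\Oh\bigl(q(1+\beta)\sqrt{\log n}/(\sqrt{\phi}(1+\alpha))\cdot n\log^3 n\bigr)$, is off by a factor of $\log n$ relative to what the substitution from \cref{lem:gs} actually yields ($n\log^4 n$, coming from one $\log n$ in the total oracle length, two from \cref{lem:s2}'s $\log^2 n$, and one from $\log\frac{n}{\delta'}$ with $\delta'=\Theta(1/n)$); your final bound is nonetheless consistent with the statement, so this is a bookkeeping slip rather than an error in the conclusion.
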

\begin{proof}
    Let us assume that $\delta > \frac{1}{e}$; in general, we amplify the success probability by repeating the algorithm $\Oh(\log\frac{1}{\delta})$ times.
    If $\beta^2 \le \frac{\alpha}{336\ceil{\log n}}$, then we simply use \cref{lem:g2}.
    Otherwise, we apply \cref{lem:gs} with $\phi = \left\lfloor{\frac{\alpha^2}{\beta^2(336\ceil{\log n})^3}}\right\rfloor$ and the algorithm of \cref{lem:s2} (with parameter $\Theta(\frac{1}{n})$) as the oracle.
    This is valid due to the following inequalities:
    \begin{align*} 
    \psi &\le \tfrac{112\beta\phi\ceil{\log n}}{\alpha} \le \tfrac{112\beta \alpha^2\ceil{\log n}}{\alpha \beta^2 (336\ceil{\log n})^3} = \tfrac{\alpha}{3\beta(336\ceil{\log n})^2} < \tfrac{336\beta^2 \ceil{\log n}}{3\beta(336\ceil{\log n})^2}=\tfrac{\beta}{1008\ceil{\log n}} < \beta,\\
    \psi^2 &\le \tfrac{(112\beta\phi\ceil{\log n})^2}{\alpha^2} \le \tfrac{\phi\cdot (112\beta\ceil{\log n})^2\cdot \alpha^2}{\alpha^2 \cdot \beta^2(336\ceil{\log n})^3} = \tfrac{\phi}{3024\ceil{\log n}},\\
    \phi &= \left\lfloor\tfrac{\alpha^2}{\beta^2(336 \ceil{\log n})^3}\right\rfloor \ge \left\lfloor\tfrac{(336\beta\ceil{\log n})^3}{(336\beta \ceil{\log n})^3}\right\rfloor = \beta.\end{align*}
    Since the algorithm of \cref{lem:gs} is non-adaptive, the queries remain batched.

    The total running time is 
    \begin{multline*}\Oh\left(\tfrac{\phi \log n}{\alpha}\cdot \left(\tfrac{\sqrt{q(q+\beta)}}{\phi}+\tfrac{q\beta}{\phi}\cdot \sqrt{\tfrac{\log n}{\phi}}\right)\cdot n\log^3 n \right)
    =\Oh\left(\left(\tfrac{\sqrt{q(q+\beta)}}{\alpha} +\tfrac{q\beta\sqrt{\log n}}{\alpha\sqrt{\phi}}\right)\cdot n \log^{4} n\right)\\
    =\Oh\left(\left(\tfrac{\sqrt{q(q+\beta)}}{\alpha}+\tfrac{q\beta^2}{\alpha^2}\cdot \log^2 n\right)\cdot n\log^4 n\right),\end{multline*}
    whereas the query complexity is 
    \[\Oh\left(\tfrac{\phi \log n}{\alpha}\cdot \tfrac{\sqrt{q(q+\beta)}}{\phi}\cdot n\log^3 n\right)=\Oh\left(\tfrac{\sqrt{q(q+\beta)}}{\alpha}\cdot n\log^4 n\right).\]
    Moreover, the number of accessed characters of the common string $X$ does not exceed
    \[\Oh\left(\tfrac{\phi \log n}{\alpha}\cdot \tfrac{\beta}{\phi}\cdot n \log^3 n\right)=\Oh\left(\tfrac{\beta}{\alpha}\cdot n \log^4 n\right).\qedhere\]

\end{proof}

\subsection{\SED for $h=2$}
\begin{lemma}\label{lem:s3}
    There exists a non-adaptive algorithm that, given a parameter $\delta\in \R_+$ and an instance of \SEDa satisfying $\gamma \le \frac{\alpha^{2/3}}{1008\ceil{\log n}}$, solves the instance in \[\Oh\left(\left(\tfrac{\sqrt{1+\beta}}{1+\alpha}+\tfrac{(1+\beta)(1+\gamma)}{(1+\alpha)^2}\cdot \log^2 n\right)\cdot n\log^4 n \cdot \log\tfrac{n}{\delta}\right)\] time, using $\Oh\big(\frac{\sqrt{1+\beta}}{1+\alpha}\cdot n\log^{4} n \cdot \log\frac{n}{\delta}\big)$ queries, and with error probability at most $\delta$.
    \end{lemma}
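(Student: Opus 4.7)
The plan is to perform one further recursion level on the template of \cref{lem:s2}: reduce \SED to \GED via \cref{lem:sg}, using Lemma~\ref{lem:g3} as the batched \GED oracle. In the easy regime $\gamma^2\le \alpha/(3024\ceil{\log n})$, I would simply forward the instance (with $q=1$) to \cref{lem:s2}, or to \cref{lem:s1} when $\gamma=0$. Verifying that the resulting bound fits comfortably inside the target of \cref{lem:s3} is a routine log-factor comparison that exploits the regime's hypothesis $\gamma^2\le \alpha/(3024\ceil{\log n})$, through which the $\sqrt{\log n/(1+\alpha)}$ factor from \cref{lem:s2} is absorbed into the target's $(1+\gamma)\log^2 n/(1+\alpha)$ multiplier.

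In the remaining regime $\gamma^2>\alpha/(3024\ceil{\log n})$, I would apply \cref{lem:sg} with an artificially raised threshold $\bgamma:=\min(\beta,\floor{\alpha^{2/3}/(1008\ceil{\log n})})$ replacing $\gamma$. The substitution is safe because YES-instances still satisfy $\ED_{\beta}(X,Y)\le \gamma\le\bgamma$, and the NO-direction of \cref{lem:sg} only requires $\alpha\ge 3\bgamma$, which holds by construction. The oracle calls are then instances of \GEDa{\alpha}{3\bgamma}, and the inequality $3\bgamma\le \alpha^{2/3}/(336\ceil{\log n})$ matches exactly Lemma~\ref{lem:g3}'s precondition. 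The parameter $\xi$ of \cref{lem:sg}, which governs the batch structure, is chosen analogously to the formula used in \cref{lem:s2}'s proof; it produces $\Oh(\beta/\xi)$ batches of $\Oh(\xi/\bgamma)$ oracle calls each, with the calls in a batch sharing a common string $X'$. Feeding each batch into Lemma~\ref{lem:g3} (with error parameter $\Theta(\delta/n)$ and amplification to the appropriate constant failure probability) and summing across batches yields, after algebra that parallels \cref{lem:s2}'s proof, the desired time, query, and common-string-access bounds.

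The main obstacle, exactly as in \cref{lem:s2}, will be the careful choice of $\xi$ together with a sub-case analysis on whether $\bgamma^2$ is above or below $\beta$. This analysis pins down $\xi\in\{\bgamma,\bgamma^2,\beta\}$ in each case, and ensures that both additive terms of Lemma~\ref{lem:g3}'s per-batch cost telescope correctly across the $\Oh(\beta/\xi)$ batches: the first summand collapses to $\Oh(\sqrt{1+\beta}/(1+\alpha))$, and the second to $\Oh((1+\beta)(1+\gamma)\log^2 n/(1+\alpha)^2)$, matching the claimed bound. A parallel but simpler sum analysis handles queries (each batch contributes its distinct-$Y'$ queries plus the shared-$X'$ access from Lemma~\ref{lem:g3}'s common-string-access guarantee). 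For general $\delta\in\R_+$, the success probability is further boosted by $\Oh(\log\tfrac{1}{\delta})$ independent repetitions of the entire algorithm.
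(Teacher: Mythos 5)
Your plan correctly routes the easy regime $\gamma^2 \le \alpha/(3024\ceil{\log n})$ to \cref{lem:s2} and uses \cref{lem:sg} with \cref{lem:g3} as oracle in the hard regime, but the central idea you borrow from \cref{lem:s2} --- inflating $\gamma$ to $\bgamma := \min\bigl(\beta, \lfloor\alpha^{2/3}/(1008\ceil{\log n})\rfloor\bigr)$ --- is precisely what must \emph{not} be done here, and it breaks the claimed time bound. The paper's own proof keeps $\gamma$ unchanged in the hard regime and instead sets $\xi = \min(\beta, \lfloor\gamma\sqrt{\beta}\rfloor)$.

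Here is why the inflation fails. The batched oracle of \cref{lem:g3}, applied with the inner \GEDa threshold set to $3\bgamma$, contributes a second (additive) time term that scales \emph{quadratically} in that threshold: per batch of $q'$ calls it is $\Oh\bigl(\frac{q'\cdot(1+3\bgamma)^2}{(1+\alpha)^2}\log^2 n\bigr)$. Summing $\Oh(\beta/\xi)$ batches of $q' = \Oh(\xi/\bgamma)$ calls each, this term totals
\[
\Oh\Bigl(\tfrac{\beta}{\xi}\cdot\tfrac{(\xi/\bgamma)\cdot\bgamma^2}{\alpha^2}\cdot\log^2 n\Bigr) \;=\; \Oh\Bigl(\tfrac{\beta\,\bgamma}{\alpha^2}\cdot\log^2 n\Bigr),
\]
which is independent of $\xi$ and scales with $\bgamma$, not $\gamma$. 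In the hard regime one has $\gamma > \sqrt{\alpha/(3024\ceil{\log n})}$, while $\bgamma$ can be as large as $\alpha^{2/3}/(1008\ceil{\log n})$, so the ratio $\bgamma/\gamma$ can reach $\Theta(\alpha^{1/6}/\sqrt{\log n})$. Hence your second summand would be $\Oh\bigl(\frac{(1+\beta)(1+\bgamma)}{(1+\alpha)^2}\log^2 n\bigr)$, which strictly exceeds the target $\Oh\bigl(\frac{(1+\beta)(1+\gamma)}{(1+\alpha)^2}\log^2 n\bigr)$ by a polynomial factor in $\alpha$; the sentence ``the second to $\Oh((1+\beta)(1+\gamma)\log^2 n/(1+\alpha)^2)$'' is exactly where the proposal silently replaces $\bgamma$ by $\gamma$ without justification.

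The reason inflation is the right move in \cref{lem:s2} but not here: the \cref{lem:g2} oracle has only the $\sqrt{q(q+\beta)}/(1+\alpha)$ time term, which (when $q$ dominates) behaves like $q/\alpha$ and is \emph{insensitive} to the threshold, so pushing $\bgamma$ to its maximum costs nothing while reducing the number of oracle calls. The \cref{lem:g3} oracle, however, carries the extra $q(1+\beta)^2/(1+\alpha)^2\log^2 n$ term, which penalizes a large threshold quadratically. In the hard regime of \cref{lem:s3} the hypothesis $\gamma^2 > \alpha/(3024\ceil{\log n})$ is already strong enough to absorb the first (non-quadratic) term into the $\gamma$-dependent target --- e.g.\ $\beta/(\alpha\gamma) = \beta\gamma/(\alpha\gamma^2) = \Oh(\beta\gamma\log n/\alpha^2)$ --- so no inflation is needed, and using the true $\gamma$ keeps the quadratic term at $\Oh(\beta\gamma\log^2 n/\alpha^2)$ as required. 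To repair the proposal, drop $\bgamma$ entirely, set $\xi = \min(\beta, \lfloor\gamma\sqrt{\beta}\rfloor)$, split on whether $\beta\le\gamma^2$, and redo the two sub-case sums with $\gamma$ in place of $\bgamma$.
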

\begin{proof}
    If $\gamma^2 \le \frac{\alpha}{3024\ceil{\log n}}$, then we simply use \cref{lem:s2}.
    In this case, due to $\alpha \ge \beta$, the running time is 
    \[
      \Oh\left(\left(\tfrac{\sqrt{1+\beta}}{1+\alpha} + \tfrac{1+\beta}{1+\alpha}\cdot \sqrt{\tfrac{\log n}{1+\alpha}}\right)\cdot n\log^2n \cdot \log\tfrac{n}{\delta}\right)
      =\Oh\left(\tfrac{\sqrt{1+\beta}}{1+\alpha}\cdot n\log^{2.5} n \cdot \log\tfrac{n}{\delta}\right).
   \]
    Otherwise, we proceed as in the proof of \cref{lem:sg} except that we set $\xi=\min(\beta,\floor{\gamma \sqrt{\beta}})$ and use \cref{lem:g3} (with parameter $\Theta(\frac{\delta}{n})$) as the oracle (this is valid due to $3\gamma \le \frac{\alpha^{2/3}}{336\ceil{\log n}}$).
    The input instances are solved using $\Oh\big(\tfrac{\beta}{\xi}\big)$ batches of $\Oh(\tfrac{\xi}{\bgamma})$ oracle calls,
    and these batches only differ in the strings $X'$ (common to each batch).

    If $\beta \le \gamma^2$, then $\xi = \beta$, so the input instances are solved using $\Oh(1)$ batches of $\Oh(\frac{\beta}{\gamma})=\Oh(\gamma)$ queries, and the hence the running time is
\[      \Oh\left(\left(\tfrac{\sqrt{\frac{\beta}{\gamma}\cdot\gamma}}{\alpha}+\tfrac{\frac{\beta}{\gamma}\cdot \gamma^2}{\alpha^2}\cdot \log^2 n\right)\cdot n\log^4 n\cdot \log \tfrac{n}{\delta}\right)
    =\Oh\left(\left(\tfrac{\sqrt{\beta}}{\alpha}+\tfrac{\beta \gamma}{\alpha^2}\cdot \log^2 n \right)\cdot n\log^{2} n\cdot  \log \tfrac{n}{\delta}\right),
   \]
    whereas the query complexity is
    \[
      \Oh\left(\tfrac{\sqrt{\frac{\beta}{\gamma}\cdot\gamma}}{\alpha}\cdot n\log^4 n\cdot \log \tfrac{n}{\delta}\right)
    =\Oh\left(\tfrac{\sqrt{\beta}}{\alpha}\cdot n\log^4 n\cdot  \log \tfrac{n}{\delta}\right).
    \]
     Otherwise, $\xi = \gamma \sqrt{\beta}$, so the input instances are solved using $\Oh(\tfrac{\sqrt{\beta}}{\gamma})$ batches
    of $\Theta(\sqrt{\beta})=\Omega(\gamma)$ queries, and hence the running time is
    \begin{multline*}
      \Oh\left(\tfrac{\sqrt{\beta}}{\gamma}\cdot \left(\tfrac{\sqrt{\beta}}{\alpha}+\tfrac{\sqrt{\beta}\cdot \gamma^2}{\alpha^2}\cdot \log^2 n\right)\cdot n\log^4 n\cdot \log \tfrac{n}{\delta}\right)
    =\Oh\left(\left(\tfrac{\beta\gamma}{\alpha \gamma^2}+\tfrac{\beta \gamma}{\alpha^2}\cdot \log^2 n \right)\cdot n\log^4 n\cdot  \log \tfrac{n}{\delta}\right)\\
    =\Oh\left(\left(\tfrac{\beta\gamma}{\alpha \frac{\alpha}{\log n}}+\tfrac{\beta \gamma}{\alpha^2}\cdot \log^2 n \right)\cdot n\log^4 n\cdot  \log \tfrac{n}{\delta}\right)
    =\Oh\left(\tfrac{\beta \gamma}{\alpha^2}\cdot n\log^6 n\cdot  \log \tfrac{n}{\delta}\right),
    \end{multline*}
    where the lower bound on $\gamma$ is due to $\gamma^2 > \tfrac{\alpha}{3024\ceil{\log n}}$.
    Moreover, the batches of oracle calls differ only by the strings $X'$ (common to each batch).
    Hence, the query complexity can be bounded as follows:
    \[
      \Oh\left(\left(\tfrac{\sqrt{\beta}}{\gamma}\cdot \tfrac{\gamma}{\alpha} + \tfrac{\sqrt{\beta}}{\alpha}\right)\cdot n\log^4 n\cdot \log \tfrac{n}{\delta}\right) 
    =\Oh\left(\tfrac{\sqrt{\beta}}{\alpha}\cdot n\log^4 n\cdot  \log \tfrac{n}{\delta}\right).\qedhere
    \]
    \end{proof}

\subsection{\GED and \SED for $h\ge 3$}

\begin{theorem}\label{thm:main}
    There exists a non-adaptive algorithm that, given $h\in \Zz$, $\delta\in \R_+$,
    and an instance of \GEDa satisfying $\beta < (336\ceil{\log n})^{\frac{-h}{2}} \alpha^{\frac{h}{h+1}}$,
    solves the instance in \[\Oh\left(\left(\tfrac{\sqrt{1+\beta}}{1+\alpha}+\tfrac{(1+\beta)^2}{(1+\alpha)^2}\cdot \log^h n \right)\cdot n\log^{2h}n \cdot \log\tfrac{1}{\delta} \cdot 2^{\Oh(h)}\right)\] time,
    using $\Oh\big(\frac{\sqrt{1+\beta}}{1+\alpha}\cdot n\log^{2h}n\cdot \log\tfrac{1}{\delta}  \cdot 2^{\Oh(h)}\big)$ queries, and with error probability at most $\delta$.
\end{theorem}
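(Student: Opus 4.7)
The plan is to establish \cref{thm:main} by induction on $h$, maintaining in parallel a matching \SEDa statement at each level (generalizing \cref{lem:s3}) so that the GED and SED routines can invoke each other. The base cases $h\in\{0,1,2\}$ are supplied by \cref{lem:g2,lem:s2,lem:g3,lem:s3} (together with \cref{fct:hm,lem:s1} for the $\beta=0$ or $\gamma=0$ corner cases). The inductive step $h-1 \to h$ mirrors the structure of \cref{lem:g3,lem:s3} verbatim: the GED routine at level $h$ is built by applying \cref{lem:gs} and using the SED routine at level $h-1$ as the oracle, while the SED routine at level $h$ is built by applying \cref{lem:sg} and using the (batched) GED routine at level $h$ as the oracle.

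For the GED step at level $h$, I would set $\phi = \lfloor \alpha^{h/(h-1)}/(\beta^{h/(h-1)}\cdot \polylog(n))\rfloor$, which, under the hypothesis $\beta < (336\lceil\log n\rceil)^{-h/2}\alpha^{h/(h+1)}$, ensures both $\phi\ge \beta$ and that the shift parameter $\psi = \Theta(\beta\phi\log n/\alpha)$ produced by \cref{lem:gs} falls below the threshold required by the SED algorithm at level $h-1$ (roughly $\psi \lesssim \phi^{(h-1)/h}/\polylog(n)$). Because \cref{lem:gs} is non-adaptive, the resulting oracle calls naturally form batches sharing a common $X$-string, which is exactly the structure the batched SED($h-1$) algorithm exploits. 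For the SED step at level $h$, I would apply \cref{lem:sg} with $\xi$ chosen analogously to \cref{lem:s3} --- balancing the number $\Oh(\beta/\xi)$ of batches against the per-batch cost of the batched GED($h$) oracle --- and then invoke the inductively constructed GED($h$) algorithm on each batch.

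The principal obstacle is the parameter bookkeeping. At every level I must certify that $\phi$ simultaneously satisfies $\phi\ge \beta$ and the precondition of the SED($h-1$) oracle, and that $\xi$ keeps the batched GED($h$) calls in the regime where their time/query bounds apply. Query complexity is comparatively clean: each level of recursion multiplies the budget by $\Oh(\log^2 n)$ (one factor from the sum over levels inside \cref{lem:gs}, one from error-probability amplification of the oracle), which yields the target $n\log^{2h}n$ factor. The additive time term $(\beta/\alpha)^2\cdot \log^h n$ is more delicate: it is incurred once at the deepest recursion level at which $\sqrt{\beta}/\alpha$ ceases to dominate, and then propagates upward through the remaining \cref{lem:gs} invocations, each of which contributes exactly one additional $\log n$ factor. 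The compounding constants across $\Oh(h)$ levels of recursion are absorbed by the $2^{\Oh(h)}$ factor in the statement.
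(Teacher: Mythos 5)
Your high-level plan --- mutual recursion between a GED statement and a companion SED statement (the paper's \cref{thm:main2}), with \cref{lem:gs} and \cref{lem:sg} as the two reduction steps and \cref{lem:g2,lem:s2,lem:g3,lem:s3} supplying the base cases --- is exactly the paper's structure. The critical detail you got wrong, however, is the choice of $\phi$ in the inductive GED step.

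You propose $\phi = \lfloor \alpha^{h/(h-1)}/(\beta^{h/(h-1)}\cdot\polylog n)\rfloor$, generalizing the formula from \cref{lem:g3} ($h=2$), and you claim that the hypothesis $\beta < (336\lceil\log n\rceil)^{-h/2}\alpha^{h/(h+1)}$ guarantees $\phi\ge\beta$. This is false for $h\ge 3$. The inequality $\phi\ge\beta$ unwinds to $\beta \lesssim \alpha^{h/(2h-1)}/\polylog n$, whereas the theorem hypothesis only gives $\beta\lesssim \alpha^{h/(h+1)}/\polylog n$, and $h/(h+1) > h/(2h-1)$ as soon as $h\ge 2$. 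Concretely, for $h=3$ the hypothesis permits $\beta \approx \alpha^{0.7}$ (which is also outside the regime $\beta\le \alpha^{2/3}/\polylog n$ handled by \cref{lem:g3}), but then $\phi\approx\alpha^{1.5}/\beta^{1.5}\approx\alpha^{0.45}<\beta$, violating the precondition $\phi\ge\beta$ of \cref{lem:gs}. The formula from \cref{lem:g3} is calibrated specifically for the window $\alpha^{1/2}\lesssim\beta\lesssim\alpha^{2/3}$ (using the lower bound on $\beta$ provided by falling out of the \cref{lem:g2} regime), and that calibration does not extend.

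The paper's actual fix for $h\ge 3$ is to abandon the ``clever'' $\phi$ altogether and simply set $\phi=\beta$. The verification that $\psi\le\frac13(336\lceil\log n\rceil)^{(1-h)/2}\phi^{(h-1)/h}$ then follows in one line from the theorem hypothesis, and both $\phi\ge\beta$ and $\psi\le\beta$ are immediate. The additive $(\beta/\alpha)^2\log^h n$ term in the time bound is generated once, at the bottom of the recursion in \cref{lem:g3}/\cref{lem:s2} (where the nontrivial $\phi$ is actually used), and then just carried upward; no further $\phi$-tuning is needed. A second, smaller deviation in your write-up: you plan to keep every level of the recursion batched. The paper only needs batching at $h\le 2$ (\cref{lem:s1,lem:g2,lem:s2,lem:g3}); for $h\ge 3$ the reductions are applied to single instances, and the claimed query bound already falls out of the non-batched accounting. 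Carrying a batch-size parameter $q$ through the full induction is unnecessary bookkeeping and is not how the paper's \cref{thm:main,thm:main2} are stated.
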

\begin{theorem}\label{thm:main2}
    There exists a non-adaptive algorithm that, given $h\in \mathbb{Z}_{\ge 2}$, $\delta\in \R_+$,
    and an instance of \SEDa satisfying $\gamma < \frac13(336\ceil{\log n})^{\frac{-h}{2}} \alpha^{\frac{h}{h+1}}$, solves the instance in 
    \[\Oh\left(\left(\tfrac{\sqrt{1+\beta}}{1+\alpha}+\tfrac{(1+\beta)(1+\gamma)}{(1+\alpha)^2}\cdot \log^h n\right)\cdot n\log^{2h} n\cdot \log\tfrac{n}{\delta}  \cdot 2^{\Oh(h)}\right)\] time,
    using $\Oh\big(\frac{\sqrt{1+\beta}}{1+\alpha}\cdot n\log^{2h} n\cdot \log\tfrac{n}{\delta}  \cdot 2^{\Oh(h)}\big)$ queries, and with error probability at most $\delta$.
\end{theorem}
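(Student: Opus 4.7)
The plan is to prove \cref{thm:main} and \cref{thm:main2} jointly by induction on $h$, extending the ladder of results already developed in \cref{lem:g2,lem:s2,lem:g3,lem:s3}. The base cases $h\in\{1,2\}$ (and $h=0$ via \cref{fct:hm}) are supplied directly by those lemmas, so it remains to treat the inductive step for each statement.

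For the inductive step of \cref{thm:main} at level $h\ge 3$, I would follow the blueprint of \cref{lem:g3}: first check whether the hypothesis of level $h-1$ is already satisfied and fall back to the inductive hypothesis if so; otherwise apply \cref{lem:gs} with $\phi := \bigl\lfloor \alpha^h/(\beta^h\cdot (336\ceil{\log n})^{\Theta(h)})\bigr\rfloor$, using the level-$(h-1)$ algorithm for \SEDa from \cref{thm:main2} as the oracle. The inequalities to check are $\phi \ge \beta$ (which follows from the hypothesis $\beta^{h+1}\cdot \polylog(n) \lesssim \alpha^{h}$), $\psi := \lfloor 112\beta\phi\ceil{\log n}/\alpha\rfloor < \beta$, and $\psi < \tfrac{1}{3}(336\ceil{\log n})^{-(h-1)/2}\phi^{(h-1)/h}$, where the last inequality pins down the polylogarithmic exponent in the choice of $\phi$.

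For the inductive step of \cref{thm:main2} at level $h\ge 3$, I would follow the blueprint of \cref{lem:s3}: if the hypothesis at level $h-1$ already holds, fall back; otherwise invoke \cref{lem:sg} with a shift-discretization parameter $\xi$ chosen (in the spirit of \cref{lem:s3}) to balance the $\Oh(\beta/\xi)$ batches against the $\Oh(\xi/\gamma)$ oracle calls per batch in the resulting grid, then handle each batch with the level-$h$ \GEDa algorithm from \cref{thm:main}. Non-adaptivity is essential here: within one batch, all oracle calls share the common short string $X'$, so \cref{thm:main} needs to sample only $\Oh(\tfrac{1+\beta}{1+\alpha}\cdot n\cdot \polylog(n))$ characters of $X'$ once and reuse them across the entire batch. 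This keeps the query complexity at the target $\sqrt{1+\beta}/(1+\alpha)$ rate, while the overall running time inherits an additive $\tfrac{(1+\beta)(1+\gamma)}{(1+\alpha)^2}\log^{h} n$ term from the level-$h$ \GEDa oracle.

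The main obstacle will be parameter bookkeeping: verifying that the $\log^{2h} n$ factor in the leading term (and $\log^{h} n$ in the additive time term) picks up exactly one extra $\log n$ per application of \cref{lem:gs} and one per application of \cref{lem:sg}, and that the multiplicative losses per recursion level accumulate to at most $2^{\Oh(h)}$. I expect the algebraic verification to mirror the $h=2$ calculations of \cref{lem:g3} and \cref{lem:s3} essentially verbatim, with the exponent $\tfrac{2}{3}$ replaced throughout by $\tfrac{h}{h+1}$ and the constants in $\phi$, $\xi$, and $\psi$ rescaled accordingly.
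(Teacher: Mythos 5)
Your high-level plan matches the paper's: a mutual recursion between \GEDa and \SEDa, with \cref{lem:gs} feeding the level-$(h-1)$ \SEDa solver and \cref{lem:sg} feeding the level-$h$ \GEDa solver, grounded at $h\le 2$ by \cref{lem:g2,lem:s2,lem:g3,lem:s3} and \cref{fct:hm}/\cref{lem:s1}. However, you have imported the intricate parameter choices from the $h\le 2$ base cases into the generic induction, and that is where the paper takes a much simpler route — and where your version becomes fragile.

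For the \GEDa step at $h\ge 3$, the paper just takes $\phi=\beta$, which trivially satisfies $\phi\ge\beta$, and the hypothesis $\beta<(336\ceil{\log n})^{-h/2}\alpha^{h/(h+1)}$ is exactly what is needed to verify $\psi<\tfrac13(336\ceil{\log n})^{-(h-1)/2}\phi^{(h-1)/h}$ (with the constant $336=3\cdot 112$ making this an identity). Your choice $\phi=\lfloor\alpha^h/(\beta^h(336\ceil{\log n})^{\Theta(h)})\rfloor$ is borrowed from \cref{lem:g3}, but there it is tuned specifically so that the next level lands in the batched subroutine \cref{lem:s2}; at generic $h$ it is both unnecessary and delicate. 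Concretely, the two constraints pull in opposite directions: $\phi\ge\beta$ requires the polylogarithmic exponent to be at most roughly $h(h+1)/2$, while $\psi<\tfrac13(336\ceil{\log n})^{-(h-1)/2}\phi^{(h-1)/h}$ requires it to be at least roughly $h(h+1)/2$. So the exponent is pinned to $\Theta(h^2)$, not $\Theta(h)$ as you wrote, and the window is razor-thin (and when $\beta$ is near the hypothesis threshold your $\phi$ degenerates to $\beta$ anyway). The $\phi$-dependence cancels in both the query- and time-complexity accounting, so nothing is gained.

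For the \SEDa step, the paper again does the simple thing: it applies \cref{lem:sg} verbatim with its built-in $1+\xi=\lfloor\sqrt{(1+\beta)(1+\gamma)}\rfloor$, and uses the level-$h$ \GEDa algorithm as the oracle with error parameter $\Theta(\delta/n)$. The query bound then comes directly from the fact that \cref{lem:sg} produces a product family $\mathcal{X}\times\mathcal{Y}$ with only $\Oh(\sqrt{(1+\beta)/(1+\gamma)})$ \emph{distinct} substrings, combined with the observation that a non-adaptive oracle with shared randomness queries the same positions of $X'$ irrespective of the paired $Y'$. You do not need a separately stated ``batched'' variant of \cref{thm:main}, and your claim that the common string $X'$ needs only $\Oh\bigl(\tfrac{1+\beta}{1+\alpha}n\polylog n\bigr)$ samples is not right as written: the per-string query count comes from \cref{thm:main} at level $h$ with $\beta''=3\gamma$, i.e.\ $\Oh\bigl(\tfrac{\sqrt{1+\gamma}}{1+\alpha}n\polylog n\bigr)$, and the total $\Oh\bigl(\tfrac{\sqrt{1+\beta}}{1+\alpha}n\polylog n\bigr)$ follows by multiplying by $\Oh(\sqrt{(1+\beta)/(1+\gamma)})$ distinct strings. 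Likewise, tuning $\xi$ ``in the spirit of \cref{lem:s3}'' is only needed at $h\le 2$ to feed the batched subroutines; at $h\ge 3$ the default $\xi$ already yields the claimed running time, since the $\tfrac{(1+\beta)(1+\gamma)}{(1+\alpha)^2}\log^{h}n$ additive term absorbs the per-call cost of the level-$h$ oracle.

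So: same skeleton, but you should drop the $\alpha^h/\beta^h$ scaling of $\phi$ (use $\phi=\beta$), drop the custom $\xi$ (use \cref{lem:sg} as stated), and fix the $X'$-sampling count to $\Oh\bigl(\tfrac{\sqrt{1+\gamma}}{1+\alpha}n\polylog n\bigr)$ per distinct string; with those corrections your argument aligns with the paper's.
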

\begin{proof}[Proof of \cref{thm:main,thm:main2}]
    As for \GEDa, let us assume that $\delta > \frac{1}{e}$; in general, we amplify the success probability by repeating the algorithm $\Oh(\log\frac{1}{\delta})$ times.
    We use \cref{fct:hm,lem:g2,lem:g3} when applicable.
    In particular, this covers $\beta \le \frac{\alpha^{2/3}}{336\ceil{\log n}}$ and $h\le 2$.
    Otherwise, we apply \cref{lem:gs} with $\phi = \beta$ using our algorithm for \SEDa{\phi}{\beta}{\psi} (with parameters $h-1$ and $\Theta(\frac{1}{n})$) as the oracle.
    This is valid because
\[\psi \le \tfrac{112 \beta^2 \ceil{\log n}}{\alpha} < \tfrac{112 \beta^2\ceil{\log n}}{\beta^{\frac{h+1}{h}}\cdot (336 \ceil{\log n})^{\frac{h+1}{2}}} =
\tfrac13\cdot \left(336 \ceil{\log n}\right)^{\frac{1-h}{2}}\cdot \phi^{\frac{h-1}{h}}.\]
The running time is
\begin{multline*}
\Oh\left(\tfrac{\phi \log n}{\alpha} \cdot \left(\tfrac{\sqrt{\beta}}{\phi}+\tfrac{\psi\beta}{\phi^2}\cdot \log^{h-1}  n\right)\cdot n \log^{2h-1} n \cdot 2^{\Oh(h-1)}\right)\\
= \Oh\left(\left(\tfrac{\sqrt{\beta}}{\alpha}+\tfrac{\beta^2}{\alpha^2}\cdot \log^{h} n\right) \cdot n\log^{2h}n \cdot 2^{\Oh(h)}\right),
\end{multline*}
whereas the query complexity is 
\[\Oh\left(\tfrac{\phi \log n}{\alpha} \cdot \tfrac{\sqrt{\beta}}{\phi}\cdot n\log^{2h-1} n \cdot 2^{\Oh(h-1)}\right)
= \Oh\left(\tfrac{\sqrt{\beta}}{\alpha}\cdot n\log^{2h} n\cdot 2^{\Oh(h-1)}\right).
\]

As for \SEDa, we use \cref{lem:s1,lem:s2,lem:s3} when applicable.
In particular, this covers $\gamma \le \frac{\alpha^{2/3}}{1008\ceil{\log n}}$ and $h\le 2$.
Otherwise, we apply \cref{lem:sg} using our algorithm for \GEDa{\alpha}{3\gamma} (with parameters $h$ and $\Theta(\frac{\delta}{n})$) as the oracle.
This is valid because $3\gamma \le (336\ceil{\log n})^{\frac{-h}2} \alpha^{\frac{h}{h+1}}\le \alpha$.

The running time is \[
\Oh\left(\tfrac{\beta}{\gamma} \cdot \left(\tfrac{\sqrt{\gamma}}{\alpha}+\tfrac{\gamma^2}{\alpha^2}\cdot \log^{h}n\right)\cdot n\log^{2h} n\cdot  \log\tfrac{n}{\delta}\cdot 2^{\Oh(h)}\right)=
\Oh\left(\tfrac{\beta}{\gamma} \cdot\tfrac{\gamma^2}{\alpha^2}\cdot n\log^{3h}n \cdot\log\tfrac{n}{\delta}\cdot 2^{\Oh(h)}\right),\]
whereas the query complexity is
\[
\Oh\left(\tfrac{\sqrt{\beta}}{\sqrt{\gamma}} \cdot \tfrac{\sqrt{\gamma}}{\alpha}\cdot n\log^{2h}n \cdot\log\tfrac{n}{\delta}\cdot 2^{\Oh(h)}\right)
= \Oh\left(\tfrac{\sqrt{\beta}}{\alpha}\cdot n\log^{2h}n\cdot\log\tfrac{n}{\delta} \cdot 2^{\Oh(h)}\right).\qedhere\]
\end{proof}

\section{Matching Lower Bound for Non-Adaptive Query Complexity}\label{sec:lb}
\newcommand{\DY}{\mathcal{D}_{\mathrm{YES}}}
\newcommand{\DN}{\mathcal{D}_{\mathrm{NO}}}

In this section we strengthen the following lower bound of~\cite{BEKMRRS03}
for the \GEDa{\frac{n}{6}}{\beta} problem.
 \begin{proposition}[\cite{BEKMRRS03}]\label{prp:lb}
For all integers $n,\alpha,\beta\in \Zp$ such that $\frac{n}{6} = \alpha \ge \beta$,
every algorithm solving all instances of the \GEDa problem has worst-case query complexity
$\Omega(\sqrt{\beta})$ or error probability exceeding $\frac13$.
\end{proposition}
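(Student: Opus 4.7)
My plan is to prove the lower bound via Yao's minimax principle. It suffices to exhibit a pair of distributions $\DY$ (supported on inputs with $\ED(X,Y) \le \beta$) and $\DN$ (supported on inputs with $\ED(X,Y) > \alpha = n/6$) such that no deterministic algorithm making $q = c\sqrt{\beta}$ queries can distinguish them with advantage better than $\tfrac16$. Averaging over the randomness of any randomized algorithm then yields the desired worst-case query-complexity bound.

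For the construction I would follow the template of classical birthday-paradox lower bounds. The idea is to plant $\Theta(\beta)$ ``special'' positions in each of $X$ and $Y$ on top of a structured background (for instance a highly repetitive or ``easy-to-align'' background so that edit distance is essentially controlled by the special positions). In $\DY$ the special positions of $X$ and $Y$ are coupled in an aligned way, so that $\ED(X,Y) \le \beta$ follows by exhibiting an alignment that only pays for the special positions. In $\DN$ the two sets of special positions are sampled independently, and the background is designed so that independent special positions force $\ED(X,Y) > \alpha$; here one uses the geometry of edit distance (the shift/alignment must match a large number of independent random symbols, which typically fails). Critically, the marginal distribution of each individual symbol of $X$ and of $Y$ is identical under $\DY$ and $\DN$, so any one query carries no information about which distribution it came from.

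The core probabilistic step is a coupling showing that the joint distribution of the $q$ answers seen by a fixed non-adaptive algorithm is identical under $\DY$ and $\DN$ unless a specific ``collision event'' occurs: two of the $q$ queries together hit a pair of special positions that jointly reveal the $\DY$/$\DN$ correlation structure. Since there are $\Theta(\beta)$ special positions out of roughly $n$ possible, the probability of hitting such a witnessing pair is at most $O(q^2/\beta)$ by a standard birthday bound, which is $o(1)$ whenever $q = o(\sqrt{\beta})$. Combined with Yao's principle, this gives the $\Omega(\sqrt{\beta})$ lower bound.

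The main obstacle I anticipate is the construction itself: one must simultaneously arrange that (i) $\DN$ really does achieve $\ED > \alpha = n/6$, not merely a small perturbation, while (ii) single queries in $\DN$ are distributed identically to those in $\DY$, and (iii) the joint distribution over any $q < c\sqrt{\beta}$ positions agrees between $\DY$ and $\DN$ outside a low-probability collision event. Requirement (i) is particularly delicate in the regime $\beta \ll \alpha$, where the $\Theta(\beta)$ special positions are a sparse subset of the length-$n$ string; here one must engineer the background so that even a few ``forced'' mismatches between random special positions propagate, through the rigidity of edit-distance alignments, into a global distance gap of $\Omega(n)$.
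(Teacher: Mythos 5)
The paper does not prove \cref{prp:lb}: it imports it verbatim from~\cite{BEKMRRS03}, and the paper's own contribution, \cref{thm:lb}, is a reduction \emph{to} this proposition. So there is no in-paper proof to compare against; I can only assess your sketch on its own merits.

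Your high-level framework --- Yao's principle with a YES-distribution $\DY$ and a NO-distribution $\DN$ that are $o(1)$-indistinguishable to any $o(\sqrt\beta)$-query algorithm --- is the right template, and it is indeed the framework of~\cite{BEKMRRS03}. However, the particular construction you propose cannot work, and for exactly the reason you flag as the ``main obstacle.'' If the background is a shared, ``easy-to-align'' string and the only deviations are $\Theta(\beta)$ planted special positions (independently planted in $X$ and in $Y$ under $\DN$), then $\ED(X,Y)=O(\beta)$ even under $\DN$: one can align the common background for free and just delete-and-insert the $\Theta(\beta)$ special symbols. No engineering of the background can make a sparse set of independent local changes ``propagate'' into a global $\Omega(n)$ gap, because edit distance is subadditive over any concatenation (\cref{fct:subadd}); the total cost is bounded by the cost on the planted positions plus zero on the identical background. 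So requirement~(i) of your plan is not merely delicate in the regime $\beta \ll n$ --- it is unsatisfiable with this construction.

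The construction that does work replaces local perturbations by a global \emph{shift}. Take $X$ uniformly random over a large alphabet. Under $\DY$, set $Y$ to be $X$ shifted by a uniformly random offset $s\in[1\dd\lfloor\beta/2\rfloor]$, with the $s$ overhang positions filled by fresh random symbols; then $\ED(X,Y)\le 2s\le\beta$. Under $\DN$, let $Y$ be an independent uniformly random string; then $\ED(X,Y)>n/6$ with overwhelming probability. The distinguishing event is now not ``two queries collide on special positions'' but ``the difference set $\{i-j : i \text{ queried in } X,\ j \text{ queried in } Y\}$ contains the hidden shift~$s$.'' With $q$ queries that set has size $O(q^2)$ while $s$ ranges over $\Theta(\beta)$ values, giving the $\Omega(\sqrt\beta)$ bound --- a genuinely different ``birthday'' than the one you describe, arising from covering a difference set rather than from pairwise collisions among planted positions. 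Two further remarks: single-symbol marginals are automatically identical under both distributions (everything is uniform), so requirement~(ii) is free in this construction; and the proposition as stated applies to \emph{all} algorithms, including adaptive ones, whereas your argument is phrased only for non-adaptive ones --- the shift argument does extend to adaptive algorithms by conditioning on the shift not yet being discovered, but that extension needs to be spelled out.
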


\begin{theorem}\label{thm:lb}
    For all integers $n,\alpha,\beta\in \Zp$ such that $\frac{n}{6} \ge \alpha \ge \beta$,
    every non-adaptive algorithm solving all instances the \GEDa problem has expected query complexity $\Omega(\frac{n\sqrt{\beta}}{\alpha})$ or error probability exceeding $\frac13$.
\end{theorem}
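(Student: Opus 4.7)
I would amplify \cref{prp:lb} from the special case $n = 6\alpha$ to general $n \ge 6\alpha$ via a padding-and-averaging reduction combined with Yao's minimax principle. Let $k := \floor{n/(6\alpha)}$ and partition $[0\dd 6\alpha k)$ into $k$ consecutive blocks of length $6\alpha$; pad the remaining positions of $[0\dd n)$ with a fixed string identical in $X$ and $Y$. Let $\mathcal{D}'$ denote a hard distribution on length-$(6\alpha)$ pairs witnessing \cref{prp:lb}. From $\mathcal{D}'$ I construct a hard distribution $\mathcal{D}$ on length-$n$ instances of \GEDa{\alpha}{\beta} as follows: sample $r \in [0\dd k)$ uniformly and $(X', Y') \sim \mathcal{D}'$ independently, then place $X'$ (resp.\ $Y'$) in block $r$ of $X$ (resp.\ $Y$) and fill every other block $j$ of both $X$ and $Y$ with $\#_j^{6\alpha}$, where the $\#_j$ are pairwise distinct characters also disjoint from the alphabet of $\mathcal{D}'$.

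\paragraph*{Edit-distance preservation and Yao reduction.} The disjoint-alphabet padding yields $\ED(X,Y) = \ED(X',Y')$: the upper bound is immediate by editing only inside block $r$, while for the lower bound any alignment must match each $\#_j$-run in $X$ with the unique $\#_j$-run in $Y$ (every alternative pays at least one edit per misassigned $\#_j$-character), and monotonicity then forces the hard block of $X$ to be aligned with the hard block of $Y$. Consequently, YES (resp.\ NO) samples of $\mathcal{D}'$ yield YES (resp.\ NO) instances of \GEDa{\alpha}{\beta}. By Yao's minimax principle, it now suffices to show that, for some absolute constant $c_0 > 0$, every deterministic non-adaptive algorithm using fewer than $c_0\cdot n\sqrt{\beta}/\alpha$ queries has error exceeding $\tfrac13$ under $\mathcal{D}$. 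Such an algorithm is specified by fixed query sets $Q_X, Q_Y \sub [0\dd n)$ of total size $q$; for each block index $r$ set $q_r := |Q_X \cap \text{block}(r)| + |Q_Y \cap \text{block}(r)|$, so that $\sum_r q_r = q$. Queries outside block $r$ reveal only the deterministic padding and carry no information about the YES/NO outcome, so the procedure conditioned on $r$ is a non-adaptive algorithm making $q_r$ queries against $\mathcal{D}'$.

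\paragraph*{Amplification and the main obstacle.} The main obstacle is that \cref{prp:lb} only provides error $> \tfrac13$ per poorly queried block, which, when averaged against possibly well-queried blocks, need not exceed $\tfrac13$ on $\mathcal{D}$. To circumvent this, I would first strengthen \cref{prp:lb} to an error lower bound of $\tfrac12 - \delta$ for every constant $\delta > 0$, via the standard non-adaptive amplification of repeating a base algorithm $t = \Theta(1/\delta^2)$ times with independent random coins and returning the majority answer: this boosts success $\ge \tfrac12 + \delta$ to success $\ge \tfrac23$ at the cost of a factor-$t$ blow-up in queries, so \cref{prp:lb} implies that every non-adaptive algorithm on $\mathcal{D}'$ using fewer than $c_\delta \sqrt{\beta} = \Omega(\delta^2 \sqrt{\beta})$ queries has error exceeding $\tfrac12 - \delta$. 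Call an index $r$ \emph{heavy} if $q_r \ge c_\delta\sqrt{\beta}$ and \emph{light} otherwise; the number of heavy indices is at most $q/(c_\delta\sqrt{\beta})$, so fixing a small absolute constant $\delta$ (say $\delta = \tfrac{1}{60}$) and setting $c_0 := c_\delta/21$, the assumption $q < c_0 \cdot n\sqrt{\beta}/\alpha$ keeps the heavy fraction below $\tfrac27$. Then at least a $\tfrac57$-fraction of indices are light, each contributing conditional error exceeding $\tfrac12 - \delta$; averaging over the uniform choice of $r$, the overall error on $\mathcal{D}$ exceeds $\tfrac57(\tfrac12 - \delta) > \tfrac13$, contradicting the hypothesis and establishing the claimed lower bound $q = \Omega(n\sqrt{\beta}/\alpha)$.
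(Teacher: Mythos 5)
Your route is genuinely different from the paper's. The paper reduces the given algorithm $A$ directly to an algorithm $A'$ for the base case $n=6\alpha$: amplify $A$ to $A^3$, pick (by averaging) one block $i$ to which $A^3$ sends few expected queries, pad the length-$6\alpha$ input into that block, and cap the query count to block $i$ via Markov. You instead go through Yao: lift a hard distribution $\mathcal{D}'$ for \cref{prp:lb} to a hard distribution $\mathcal{D}$ by placing the hard pair in a uniformly random block, strengthen \cref{prp:lb} to error $>\tfrac12-\delta$ via amplification, and average over the block position. Both strategies are valid; yours needs the stronger $\tfrac12-\delta$ per-block error precisely because you average over blocks (the heavy blocks dilute the error), whereas the paper avoids this by simply choosing the single best block. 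One small simplification: the disjoint-alphabet padding and the informal alignment argument for $\ED(X,Y)\ge\ED(X',Y')$ are unnecessary and not fully rigorous (alignments may insert/delete across runs); \cref{fct:hered} gives the lower bound directly, which is what the paper uses.

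There are, however, two gaps. First, the constants do not work as stated: $c_0 = c_\delta/21$ does not keep the heavy fraction below $\tfrac27$. For example, with $n=22\alpha$ we have $k=\lfloor 22/6\rfloor=3$ blocks and the bound gives only $h < 22/21$, so $h$ may equal $1$, a heavy fraction of $\tfrac13>\tfrac27$, and overall error $(2/3)(1/2-1/60)\approx 0.322 < \tfrac13$. Since $\lfloor n/(6\alpha)\rfloor$ can be as small as $n/(12\alpha)$, you need roughly $c_0 \lesssim c_\delta/40$; this is easily repaired but the claim as written is false. Second, and more substantively, the Yao reduction you invoke only yields a lower bound on the \emph{worst-case} (over the algorithm's coins) number of queries, whereas \cref{thm:lb} asserts a lower bound on the \emph{expected} query complexity. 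You need one more step: given that every deterministic non-adaptive algorithm with $<Q$ queries has $\mathcal{D}$-error $> E$ for some $E$ strictly above $\tfrac13$, and that $\Exp_R[e_D]\le\tfrac13$ for any randomized $R$ with worst-case error $\le\tfrac13$, Markov's inequality gives $\Pr_R[q_D\ge Q]\ge 1-(1/3)/E$, hence $\Exp_R[q_D]=\Omega(Q)$. (You have the ingredients — your $E=(5/7)(1/2-\delta)$ is bounded away from $\tfrac13$ once the constants are fixed — but the step is missing. The paper handles this via the amplification to $A^3$ plus a query cap.)
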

\begin{proof}
    Suppose that, for some fixed $n,\alpha,\beta\in \Zp$ with $\frac{n}{6} \ge \alpha \ge \beta$,
    there exists a non-adaptive algorithm $A$ that uses $q$ queries in expectation and errs with probability at most $\frac13$. We shall derive an algorithm $A'$ for $n=6\alpha$ that uses $\frac{486q\alpha}{n}$ queries in the worst case; if $q=o(\frac{n\sqrt{\beta}}{\alpha})$, this would contradict \cref{prp:lb}.

    Let us first define an algorithm $A^3$ that runs $A$ three times and returns the dominant answer;
    it has error probability to $\frac{1+3\cdot 2}{27}=\frac{7}{27}$
    and expected query complexity to~$3q$.
    For each $i\in [0\dd \floor{\frac{n}{6\alpha}})$, let $q_i$ be the expected number of queries
    that $A^3$ makes to $X[6\alpha i \dd 6\alpha(i+1))$ and $Y[6\alpha i \dd 6\alpha(i+1))$;
    since $A^3$ is non-adaptive, these values do not depend on $X$ or $Y$.
    By linearity of expectation, we have $\sum_{i} q_i \le 3q$, so there exists $i\in [0\dd \floor{\frac{n}{6\alpha}})$ such that 
    $q_i \le \frac{3q}{\floor{\frac{n}{6\alpha}}} \le \frac{36q\alpha}{n}$.

    The algorithm $A'$, given strings $X',Y'\in \Sigma^{6\alpha}$,
    constructs strings $X=\mathtt{a}^{6\alpha i} \cdot X' \cdot \mathtt{a}^{n-6\alpha(i+1)}$
    and $Y=\mathtt{a}^{6\alpha i} \cdot Y' \cdot \mathtt{a}^{n-6\alpha(i+1)}$, where $\mathtt{a}\in \Sigma$ is an arbitrary character.
    Formally, this means that an oracle providing random access to $(X',Y')$ is transformed 
    into an oracle providing random access to $(X,Y)$.
    Then, $A'$ runs $A^3(X,Y)$, but it terminates the execution (returning an arbitrary answer)
    on an attempt to make more than $\frac{27}{2}q_i \le \frac{486q\alpha}{n}$ queries to $(X',Y')$.
    
    This cap of the number of queries trivially bounds the query complexity of $A'$.
    As for the correctness, observe that \cref{fct:hered} implies $\ED(X,Y)=\ED(X',Y')$.
    Moreover, there is a one-to-one correspondence between the queries of $A'(X',Y')$
    and the queries that $A^3(X,Y)$ makes to $X[6\alpha i \dd 6\alpha(i+1))$ and $Y[6\alpha i \dd 6\alpha(i+1))$.
    Hence, it suffices to analyze the error probability of the capped version of $A^3$.
    Without the query limit, $A^3(X,Y)$ would in expectation make $q_i$ queries to $(X',Y')$ and err with probability at most $\frac{7}{27}$.
    By Markov's inequality, the probability of making more than $\frac{27}{2}q_i$ queries to $(X',Y')$
    does not exceed $\frac{2}{27}$. Thus, the execution of $A^3$ is terminated with probability at most $\frac{2}{27}$. Overall, this increases the error probability from $\frac{7}{27}$ to $\frac{7+2}{27}=\frac{1}{3}$.
\end{proof}

{\small
\bibliographystyle{alphaurl}
\bibliography{references}
}

\end{document}